\definecolor{DarkRed}{rgb}{0.5,0.1,0.1}
\definecolor{DarkBlue}{rgb}{0.1,0.1,0.5}
\definecolor{ForestGreen}{rgb}{0.1333,0.5451,0.1333}
\definecolor{Red}{rgb}{0.9,0,0}
\crefname{property}{property}{Property}
\crefname{equation}{eq}{Eq}
\def\BState{\State\hskip-\ALG@thistlm}
\setlist[itemize]{leftmargin=20pt}
\setlist[enumerate]{leftmargin=20pt}
\newtheorem{theorem}{Theorem}
\newtheorem{lemma}{Lemma}[section]
\newtheorem{proposition}[lemma]{Proposition}
\newtheorem{corollary}[lemma]{Corollary}
\newtheorem{claim}[lemma]{Claim}
\newtheorem{fact}[lemma]{Fact}
\newtheorem*{claim*}{Claim}
\newtheorem*{assumption*}{Assumption}
\newtheorem*{proposition*}{Proposition}
\newtheorem*{lemma*}{Lemma}
\newtheorem{observation}[lemma]{Observation}
\newtheorem*{theorem*}{Theorem}
\crefname{lemma}{Lemma}{Lemmas}
\crefname{claim}{claim}{claims}
\crefname{property}{Property}{Properties}
\crefname{invariant}{Invariant}{Invariants}
\newtheorem{mdresult}{Result}
\newenvironment{result}{\begin{mdframed}[backgroundcolor=lightgray!40,topline=false,rightline=false,leftline=false,bottomline=false,innertopmargin=2pt]\begin{mdresult}}{\end{mdresult}\end{mdframed}}
\theoremstyle{definition}
\newtheorem{remark}[lemma]{Remark}
\newtheorem{problem}{Problem}
\newtheorem{definition}[lemma]{Definition}
\newenvironment{Problem}{\begin{mdframed}[topline=true,bottomline=true, innerbottommargin=5pt,innertopmargin=5pt]\begin{problem}}{\end{problem}\end{mdframed}}
\newtheorem*{mdproblem*}{Problem}
\newenvironment{Problem*}{\begin{mdframed}[hidealllines=false,innerleftmargin=10pt,backgroundcolor=gray!10,innertopmargin=5pt,innerbottommargin=5pt,roundcorner=10pt]\begin{mdproblem*}}{\end{mdproblem*}\end{mdframed}}
\newtheorem{mddefinition}[lemma]{Definition}
\newtheorem*{mddefinition*}{Definition}
\newenvironment{Definition*}{\begin{mdframed}[hidealllines=false,innerleftmargin=10pt,backgroundcolor=white!10,innertopmargin=5pt,innerbottommargin=5pt,roundcorner=10pt]\begin{mddefinition*}}{\end{mddefinition*}\end{mdframed}}
\newtheorem{mdremark}{Remark}
\newenvironment{ourbox}{\begin{mdframed}[hidealllines=false,innerleftmargin=10pt,backgroundcolor=white!10,innertopmargin=2pt,innerbottommargin=5pt,roundcorner=10pt]}{\end{mdframed}}
\newtheorem{mdalgorithm}{Algorithm}
\newenvironment{proofof}[1]{{\medbreak\noindent \em Proof of #1.  }}{\hfill\qed\medbreak}
\renewcommand{\qed}{\nobreak \ifvmode \relax \else
      \ifdim\lastskip<1.5em \hskip-\lastskip
      \hskip1.5em plus0em minus0.5em \fi \nobreak
      \vrule height0.75em width0.5em depth0.25em\fi}
\renewcommand{\leq}{\leqslant}
\renewcommand{\geq}{\geqslant}
\newcommand{\rs}{{{Ruzsa-Szemerédi}}\xspace}
\newcommand{\st}{\textnormal{subject to}}
\newcommand{\Ot}{\ensuremath{\widetilde{O}}}
\newcommand{\eps}{\ensuremath{\varepsilon}}
\newcommand{\bracket}[1]{\left[#1\right]}
\newcommand{\paren}[1]{\ensuremath{\left(#1\right)}\xspace}
\newcommand{\card}[1]{\left\vert{#1}\right\vert}
\newcommand{\norm}[1]{\ensuremath{\|#1\|}}
\newcommand{\ceil}[1]{{\left\lceil{#1}\right\rceil}}
\newcommand{\set}[1]{\ensuremath{\left\{ #1 \right\}}}
\newcommand{\poly}{\mbox{\rm poly}}
\DeclareMathOperator*{\Exp}{\ensuremath{{\mathbb{E}}}}
\DeclareMathOperator*{\Prob}{\ensuremath{\textnormal{Pr}}}
\renewcommand{\Pr}{\Prob}
\newenvironment{tbox}{\begin{tcolorbox}[
		enlarge top by=5pt,
		enlarge bottom by=5pt,
		 breakable,
		 boxsep=0pt,
                  left=4pt,
                  right=4pt,
                  top=10pt,
                  arc=0pt,
                  boxrule=1pt,toprule=1pt,
                  colback=white
                  ]
	}
{\end{tcolorbox}}
\newcommand{\II}{\ensuremath{\mathbb{I}}}
\newcommand{\HH}{\ensuremath{\mathbb{H}}}
\newcommand{\mireal}[1][]{
  \ifx\relax#1\relax%
    \II(\mione \,; \mitwo)%
  \else%
    \II(\mione \,; \mitwo\mid #1)%
  \fi
}
\newcommand{\en}[1]{\ensuremath{\HH(#1)}}
\newcommand{\prot}{\ensuremath{\pi}}
\newcommand{\ignore}[1]{}
\newcommand{\defeq}{:=}
\newcommand{\load}{\ensuremath{\textsc{load}}}
\newcommand{\maxload}{\ensuremath{\textsc{MaxLoad}}}
\newcommand{\optload}{\ensuremath{\textsc{OPTload}}}
\newcommand{\mathP}{\mathcal{P}}
\newcommand{\A}{\mathcal{A}}
\newcommand{\Astar}{\mathcal{A}^*}
\newcommand{\spar}{\ensuremath{\textnormal{\textsc{sparsifier}}}}
\newcommand{\MC}{\ensuremath{\textnormal{\textsf{MC}}}}
\newcommand{\LP}{\ensuremath{\textnormal{\textsf{LP}}}}
\newcommand{\rsplus}{Matching-Contractor}
\title{Streaming and Communication Complexity of Load-Balancing\\ via Matching Contractors} 
\author{Sepehr Assadi\footnote{(sepehr@assadi.info) Supported in part by a  Sloan Research Fellowship, an NSERC
Discovery Grant, a University of Waterloo startup grant, and a Faculty of Math Research Chair grant. \smallskip} 
\\ {\small ~~University of Waterloo~~} \and
Aaron Bernstein\footnote{(bernstei@gmail.com) Supported in part by a  Sloan Research Fellowship, a Google Research Fellowship, NSF Grant 1942010, and a Charles S. Baylis endowment from NYU. \smallskip} \\ {\small ~~New York University~~} \and 
Zachary Langley \\ {\small ~~Rutgers University~~} \and 
Lap Chi Lau\footnote{(lapchi@uwaterloo.ca) Supported by an NSERC Discovery Grant. \smallskip} \\ {\small ~~University of Waterloo~~} \and
Robert Wang \\ {\small ~~University of Waterloo~~}
}
\date{}
\begin{document}

\maketitle


\begin{abstract}

In the load-balancing problem, we have an $n$-vertex bipartite graph $G=(L, R, E)$ between a set of clients and servers. The goal is to find an assignment of all clients to the servers, while minimizing the maximum load on each server, 
where load of a server is the number of clients assigned to it. Motivated by understanding the streaming complexity of this problem, we study load-balancing in the one-way (two-party) communication model: 
the edges of the input graph are partitioned between Alice and Bob, and Alice needs to send a short message to Bob for him to output a solution of the entire graph. 

\medskip

We show that settling the one-way communication complexity of load-balancing is equivalent to a natural sparsification problem for load-balancing, which can alternatively be interpreted as sparsification for vertex-expansion.
We then prove a \emph{dual} interpretation of this sparsifier, showing that the minimum density of a sparsifier is 
effectively the same as the maximum density one can achieve for an extremal graph family that is new to this paper, called \emph{\rsplus s}; these
graphs are intimately connected to the well-known \rs graphs and generalize them in certain aspects. Our chain of equivalences thus shows that the one-way communication complexity of load-balancing can be reduced to a purely graph theoretic question: what is the maximum density of a \rsplus\ on $n$ vertices?

\medskip

As our final result, we present a novel combinatorial construction of some-what dense \rsplus s, which implies 
a strong one-way communication lower bound for load-balancing: any one-way protocol (even randomized) with $\Ot(n)$ communication cannot achieve a better than $n^{\frac14-o(1)}$-approximation. Previously, no non-trivial lower bounds 
were known for protocols with even $O(n\log{n})$ bits of communication (a better-than 2-approximation lower bound is trivial). Our result also implies the first non-trivial lower bounds for semi-streaming 
load-balancing in the edge-arrival model, ruling out $n^{\frac14-o(1)}$-approximation in a single-pass. 
\end{abstract}

\pagenumbering{roman}

\clearpage

\setcounter{tocdepth}{3}
\tableofcontents
\clearpage
\pagenumbering{arabic}
\setcounter{page}{1}

\clearpage


\newcommand{\LB}{\ensuremath{\textnormal{\textsf{LoadBal}}}}

\newcommand{\randC}[1]{\ensuremath{\vec{R}(#1)}}

\renewcommand{\defeq}{:=}

\section{Introduction}\label{sec:intro}

We study the \textbf{load-balancing} problem. Given a bipartite graph $G=(L, R,E)$, an \emph{assignment} maps each vertex in $L$ to one of its neighbors in $R$. The load of vertex in $R$ is the number of vertices in $L$ assigned to it. The goal is to find an assignment that minimizes the maximum load. We often refer to vertices in $L$ as \emph{clients} and the ones in $R$ as \emph{servers}.

Load-balancing has a rich history under different names. It is studied in the scheduling literature as job scheduling with restricted assignment~\cite{Horn73,BrunoCS74,LenstraST90,LinL04,HarveyLLT06,JansenR17,JansenR20}, in the distributed computing literature as backup placement problem~\cite{CzygrinowHSW12,HalldorssonKPR18,OrenBL18,BarenboimO20,AssadiBL20,AhmadianLPZ21}, and 
in graph algorithms as the semi-matching problem~\cite{HarveyLLT06,KonradR13,FakcharoenpholLN14}. 
From an optimization perspective, it serves as a canonical example of a mixed packing-covering problem, with packing constraints on the servers and covering constraints on the clients. 

This work focuses on the load-balancing problem in the semi-streaming model~\cite{FeigenbaumKMSZ05}, where the edges of the input graph $G$ arrive one-by-one in a stream, and the algorithm is allowed to
process these edges using $O(n \cdot \poly\!\log{\!(n)})$ memory, where $n$ is the total number of vertices. In addition, the algorithm is limited to a single pass (or a few passes) over the stream. 
These constraints capture several challenges of processing massive graphs, including I/O-efficiency and efficiently monitoring evolving graphs. 
As such, the semi-streaming model has been at the forefront of the research on massive graphs in recent years.

Load-balancing can be seen as a natural and useful problem between two of the most well-studied families of problems in the semi-streaming model: 
matching problems 
(see e.g.~\cite{FeigenbaumKMSZ05,GoelKK12,AhnG11,AssadiKLY16,FischerMU22,Assadi24})
and coverage problems
(see e.g.~\cite{DemaineIMV14,AssadiKL16,McGregorV17,KhannaKA23}).
Yet, in sharp contrast to these problems, 
our understanding of semi-streaming load-balancing is quite limited. 
Over a decade ago, Konrad and Ros{\'{e}}n~\cite{KonradR13} initiated the study of load-balancing in this model,
presenting a simple $O(\sqrt{n})$-approximation algorithm in a single-pass and an $O(\log{n})$-approximation algorithm in $O(\log{n})$ passes. 
The latter algorithm was only recently improved to an $O(1)$-approximation in $O(\log{n})$ passes~\cite{AssadiBL23}. 
At this point, there is no evidence why even a $2$-approximation cannot be achieved in a single pass!\footnote{Obtaining a better-than-two approximation
for load-balancing requires finding a perfect matching when it exists. As such, one can borrow existing lower bounds for the latter problem 
to obtain that $\Omega(\log{n})$ passes are required for load-balancing in this case.}

\subsection{Our Contributions}\label{sec:contributions}
 
A main consequence of this work is that Konrad and Ros\'en's single-pass $O(\sqrt{n})$-approximation algorithm cannot be significantly improved.

\begin{result}\label{res:streaming}
There is no semi-streaming algorithm that obtains a $n^{\frac14-o(1)}$-approximation to the load-balancing problem with success probability at least $\frac23$.
\end{result}

This result stems from a series of reductions and equivalences that we describe below.
Ultimately, we reduce the problem to a question in extremal graph theory and present a novel construction to establish the streaming complexity lower bound.

\subsubsection*{One-Way Communication Complexity and Load-Balancing Sparsifiers}

Inspired by previous work on matchings and coverage problems~\cite{GoelKK12,KonradR13,AssadiKL16,FeldmanNSZ20}, 
we study the {\em one-way communication complexity} of the load balancing problem.
In this communication model, the input graph is edge-partitioned between Alice and Bob. Alice sends a single $\Ot(n)$-bit message to Bob, and Bob then outputs an approximately optimal solution to the entire graph. 
Semi-streaming algorithms imply one-way communication protocols (but not vice versa), so a lower bound on the one-way communication complexity translates to a lower bound on the semi-streaming complexity.

Our first equivalence shows that the one-way communication complexity of load-balancing is nearly equal to the minimum density of {\bf load-balancing sparsifiers},
which were called {\em semi-matching skeletons} in~\cite{KonradR13}. 
For a graph $G=(L,R,E)$ and approximation ratio $\alpha \geq 1$, 
we define an $\alpha$-approximate load-balancing sparsifier of $G$ 
to be any spanning subgraph $H$ of $G$ that can preserve the value of optimal load-balancing for \emph{every subset} of clients in $L$ up to a factor of $\alpha$ (see \autoref{dfn:lb-sparsifier} for the formal definition). 

The original motivation of this definition in~\cite{KonradR13} is that a good load-balancing sparsifier implies a good one-way communication protocol.
The protocol is the natural one: Alice sends Bob an $\alpha$-approximate sparsifier $H_A$ of her graph $G_A$, and Bob outputs the optimal assignment $\A$ of the graph $H_A \cup G_B$. It is not hard to check that $\A$ is an $(\alpha+1)$-approximation to the optimal assignment in $G_A \cup G_B$ (see \Cref{thm:spar-implies-protocol}).

Our contribution is proving the other direction of the equivalence, showing that load-balancing sparsifiers are the correct combinatorial objects for understanding the one-way communication complexity.

\begin{result}\label{res:equivalence}
	For any $\alpha \geq 1$ and large $n \geq 1$, the communication cost of the best one-way protocol for $\alpha$-approximate load-balancing, possibly with randomization and success probability $\frac23$, 
	is equal, up to $\poly\log{(n)}$ factors, to the smallest $T$ such that any $n$-vertex graph contains a $\Theta(\alpha)$-approximate load-balancing sparsifier with at most $T$ edges. 
\end{result}

\subsubsection*{Load-Balancing Sparsifiers and Matching Contractors}

Our key conceptual contribution is the identification of a natural and interesting dual object for load-balancing sparsifiers. 
We say a bipartite graph $G=(L,R,E)$ is an \textbf{$\bm{\alpha}$-\rsplus} if and only if its edges can be partitioned into
a set of matchings $M_1,M_2,\ldots,M_k$ with the following property: vertex-set of each matching $M_i$ in $L$ is ``heavily contracting'' if we do not use edges of $M_i$ itself; more formally, the neighbor-set of $L(M_i)$ in $G \setminus M_i$ has size $\card{M_i}/\alpha$ only. 
\rsplus s can be roughly thought of as a more stringent version of \emph{\rs graphs}~\cite{RuzsaS78}: see \Cref{comparison} 
for a more detailed comparison. 
Our second equivalence is between the sparsity of a load-balancing sparsifier and the density of a \rsplus.

\begin{result}\label{res:load-balancing-contractor}
	For any $\alpha \geq 1$ and large $n \geq 1$, 
	the minimum $T$ such that every graph $n$-vertex graph $G$ contains an $\alpha$-approximate load-balancing sparsifier with at most $T$ edges, is equal, up to $\poly\log{(n)}$ factors, to the 
	largest density of a $\Theta(\alpha)$-\rsplus\ with $n$ vertices.  
\end{result}

\subsubsection*{Construction of Dense \rsplus s} 

\Cref{res:equivalence} and \Cref{res:load-balancing-contractor} reduce the one-way communication complexity of load balancing to a new question in extremal graph theory: what is the maximum density of a \rsplus\ on $n$ vertices?
Our key technical contribution is a novel construction of somewhat-dense \rsplus s.

\clearpage

\begin{result}\label{res:contractor-exists}
	For any sufficiently small $\eps > 0$ and large integer $n \geq 1$, there exists an $\big( n^{\frac14-O(\eps)} \big)$-\rsplus\ with  $n^{1+\Omega(\eps^2)}$ edges. 
\end{result}

Combining our results together implies that no $\Ot(n)$-communication protocol can achieve $n^{\frac14-\Omega(1)}$-approximation to load-balancing. 
This in turn implies \autoref{res:streaming} that the best approximation ratio achievable by single-pass semi-streaming algorithms is only $n^{\frac14-o(1)}$, significantly improving the prior $2$-approximation lower bounds. 
Our lower bound also comes quite close to the $n^{\frac13}$ one-way communication complexity upper bound of \cite{KonradR13}, and by the machinery developed in this paper, closing this gap now amounts to pinning down the maximum density of \rsplus s.

\subsection{Previous Work}

Konrad and Ros\'en~\cite{KonradR13} made the first step in understanding the one-way communication complexity of load-balancing. 
They defined \emph{load-balancing sparsifiers} (called \emph{semi-matching skeletons} in~\cite{KonradR13}) and showed that the existence of sparse load-balancing sparsifiers implies good communication protocols.
They then use this connection to obtain an $n^{\frac13}$-approximate one-way protocol to load-balancing. 
Their result leaves two significant gaps. 
Firstly, while they showed a one-way relation between load-balancing sparsifiers and communication protocols, 
they did not prove an equivalence between the two,
hence leaving it uncertain whether this is the ``right" notion of sparsification. 
Secondly, their lower bounds for both sparsifiers and communication protocols are quite limited. They show that a $n^{\frac{1}{c+1}}$-approximate one-way protocol requires $cn$ bits, but this has no implication for protocols that communicate $O(n\log(n))$ bits, and hence no implication for semi-streaming in general.

Our work is heavily inspired by the pioneering work of~\cite{GoelKK12} for the closely related maximum matching problem.
\cite{GoelKK12} initiated a systematic study of semi-streaming matchings through one-way communication complexity.
The two main discoveries of~\cite{GoelKK12} are as follows.
\begin{itemize}
\item They introduced \emph{matching covers} as a natural notion of sparsifiers for matchings, and proved an equivalence between matching covers and  one-way communication complexity of matchings.
Matching covers have since found far reaching implications in streaming (e.g., in~\cite{Kapralov13,AssadiB19,Bernstein20,AssadiBKL23}) and beyond (e.g., in dynamic graph algorithms~\cite{AssadiBKL23,BehnezhadG24,AssadiK24}).

\item They formulated the natural ``dual'' connection between matching covers 
and \rs graphs~\cite{RuzsaS78,FischerLNRRS02,AlonMS12}, an extremal graph family with many \emph{large} edge-disjoint \emph{induced} matchings. 
They thus reduced the one-way communication problem to a problem in extremal graph theory: do there exist dense \rs graphs? 
Extending known constructions of \rs graphs in~\cite{FischerLNRRS02} for monotonicity testing lower bound,
they proved the first non-trivial lower bound for approximation of matchings in the semi-streaming model. 
Remarkably, after~\cite{GoelKK12} first established this connection, \rs graphs have become a central tool for proving semi-streaming lower bounds 
for matchings and beyond (e.g., in~\cite{Kapralov13,AssadiKLY16,AssadiKL17,AssadiR20,Kapralov21,ChenKPSSY21,AssadiS23,AssadiKNS24}).
\end{itemize}

\subsection{Roadmap and Technical Overview}

In this work, we follow a similar chain of reduction and equivalences as in~\cite{GoelKK12} to relate the one-way communication complexity of load-balancing to the density of \rsplus s. 
In the following, we provide an outline of this paper and discuss the high-level ideas in each step.

In \Cref{sec:prelim} we introduce basic notation and preliminaries. 
In \Cref{sec:sparsifiers}, we define the two crucial objects of this paper: load-balancing sparsifiers (introduced in \cite{KonradR13}) and \rsplus s (introduced in this work).
We should think of load-balancing sparsifiers as the analog of matching covers in~\cite{GoelKK12}, and \rsplus s as the analog of \rs graphs.
In \autoref{sec:equivalence-intuition}, we provide some intuition on the equivalence of load-balancing sparsifiers and one-way communication complexity; the proof is in \autoref{sec:cc-lb} using information theory and \rsplus s.
We show in \autoref{sec:equiv-notions-sparsification} that load-balancing sparsifiers are equivalent to a version of {\em vertex expansion sparsifiers} in bipartite graphs, suggesting that it is a natural object of its own interest. 
We also formulate an equivalent operational definition that is simpler to work with.
In \autoref{sec:rsplus}, we discuss the relation between \rsplus s and \rs graphs. 

In \Cref{sec:LP}, we prove \Cref{res:load-balancing-contractor} about the equivalence between the existence of sparse load-balancing sparsifiers and the non-existence of dense \rsplus s. 
This section encapsulates the main conceptual contribution of this paper, 
that \rsplus s is the dual object of load-balancing sparsifiers in a precise sense.
To prove this, we formulate a linear programming relaxation for finding a load-balancing sparsifier of a bipartite graph, and prove that it is an $O(\log n)$-approximation.
Then we construct the dual linear program and prove that one can construct a \rsplus\ from a solution to the dual LP.
This equivalence is the analog of the equivalence between matching covers and \rs graphs in~\cite{GoelKK12}, and their proof is also based on linear programming duality.
For our proof, both steps are done by a randomized rounding argument using some problem-specific insights.

In \Cref{sec:construction}, we show an explicit construction of \rsplus s (\Cref{res:contractor-exists}), 
which is the main technical challenge in this paper.
One reason is that the existence of dense \rsplus s is counter-intuitive:  
For the complete bipartite graphs, one can prove that there are load-balancing sparsifiers with $O(n \log n)$ edges by random sampling.
These are known as magical graphs in~\cite{HLW06} and have applications in error-correcting codes and super-concentrators.
Given this and the recent successes of sparsification in various settings, our initial effort was to prove that sparse load-balancing sparsifiers always exist (and hence dense \rsplus s do not).
Another challenge is that unlike for \rs graphs, there were no known constructions of \rsplus s, and hence no clear starting point in terms of what tools to use.
As we discuss in \autoref{sec:rsplus}, \rsplus\ is an even more stringent version of \rs graph, which is itself notoriously difficult to construct~\cite{FischerLNRRS02,AlonMS12}. 
Our construction idea is to view each vertex as a string, and use the block structures of the vertices and a set family of small pairwise intersection to argue about the contraction property.
Even though the final construction and the analysis are short and elementary, we see this as the key technical innovation in this paper.

Finally, in \Cref{sec:cc-lb}, we show the equivalence between the communication complexity of load-balancing and the existence of dense \rsplus s.
Combining with the equivalence in \Cref{res:load-balancing-contractor}, this completes \Cref{res:equivalence}.
This equivalence is the analog of the equivalence between the communication complexity of matchings and the existence of dense \rs graphs in~\cite{GoelKK12}. 
Our proof uses basic information theoretic arguments and a simple modification of \rsplus s to establish the lower bound.

Overall, we find it interesting to have natural analogs of matching covers and \rs graphs in load-balancing sparsifiers and \rsplus s.
Given the large impact of~\cite{GoelKK12}, we hope that these new objects and equivalences can play a further role in understanding load-balancing in streaming and other models. 
One promising direction is to establish lower bound on multi-pass semi-streaming algorithms for the load balancing problem. 
We believe they also have the potential to shed light on other graph problems, such as vertex expansion and matching conductance as discussed in \autoref{sec:equiv-notions-sparsification}.


\section{Preliminaries}\label{sec:prelim}

\paragraph{Notation.} 
Given two functions $f, g$, we use $f \lesssim g$ to denote the existence of a positive constant $c > 0$, such that $f \leq c \cdot g$ always holds.
We use $f \asymp g$ to denote $f \lesssim g$ and $g \lesssim f$.

\paragraph{Graphs.} 
We use $G=(L,R,E)$ to denote a bipartite graph with sides $L$ and $R$. 
For any vertex $v$, we define $N_G(v)$ to be the set of neighbors of $v$ in $G$, $E_G(v)$ to be the set of incident edges,  $N_G(S) \defeq \bigcup_{v \in S} N(v)$ and $E_G(S) \defeq \bigcup_{v \in S} E(v)$. Given subsets $X \subseteq L$ and $Y \subseteq R$, we define $E_G(X,Y) \defeq \{(u,v) \in E \mid u \in X \land v \in Y\}$ and we define $G[X \cup Y]$ to be the induced graph $(X \cup Y, E_G(X,Y))$. When clear from the context, we may drop the subscript $G$ in these notation. 

We define a matching $M \subseteq E$ to be a set of disjoint edges. We let $L(M)$ denote the left endpoints of edges in $M$, and $R(M)$ the right endpoints. For any graph $G$, we let $\mu(G)$ denote the size of the maximum matching in $G$. Finally, we say that a set $X \subseteq V$ is \textbf{matchable} if there exists a matching for which every vertex in $X$ is matched.

\paragraph{Load Balancing.} In the context of load balancing, we will often refer to vertices in $L$ as \emph{clients} and vertices in $R$ as \emph{servers}.  An assignment $\A$ assigns every client to some server: formally, $\A$ is a function $\A: L \rightarrow R$ such that for any client $c \in L$, we have $\A(c) \in N(c)$. For any server $s \in R$, we define $\A^{-1}(s) \defeq \{c \in L \mid \A(c) = s\}$, and we define the load of a server to be $\load_\A(s) \defeq |\A^{-1}(s)|$. 

We define $\maxload(A) \defeq \max_{s \in R}\load(s)$. The goal of the load-balancing problem is to find an assignment $\A$ that minimizes $\maxload(\A)$. To this end, we define $\optload(G)$ to be the load of the optimal assignment on $G$. 
We say that assignment $\A$ for $G$ is $\alpha$-approximate if $\maxload(\A) \leq \alpha \cdot \optload(G)$.

We use the following generalization of Hall's Theorem~\cite{Hall87}, proved in \cite[Lemma 4]{KonradR13arxiv}.

\begin{proposition}[\cite{KonradR13arxiv}]
	\label{lem:lb-halls}
For any bipartite graph $G = (L, R, E)$, 
\[
\optload(G) = \max_{\emptyset \neq X \subseteq L} \ceil{\frac{|X|}{|N(X)|}}.
\]
\end{proposition}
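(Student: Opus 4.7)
The plan is to establish the two inequalities separately, using a pigeonhole argument for the easy direction and a reduction to the classical Hall's theorem for the harder direction.

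For the lower bound $\optload(G) \geq \max_{X} \lceil |X|/|N(X)| \rceil$, I would fix any nonempty $X \subseteq L$ and any assignment $\A$, and observe that every client in $X$ is by definition sent to some server in $N(X)$. Thus the $|X|$ clients of $X$ are distributed among at most $|N(X)|$ servers, so by pigeonhole some server in $N(X)$ must receive at least $\lceil |X|/|N(X)| \rceil$ of them. Since this holds for every $\A$, it holds for the optimal one.

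For the upper bound, set $k \defeq \max_{\emptyset \neq X \subseteq L} \lceil |X|/|N(X)|\rceil$, and I would construct an auxiliary bipartite graph $G'=(L, R', E')$ by replacing each server $s \in R$ with $k$ copies $s_1, \ldots, s_k$, where each copy is adjacent in $G'$ to exactly the clients that were adjacent to $s$ in $G$. An assignment of $L$ in $G$ with maximum load at most $k$ corresponds precisely to a matching in $G'$ that saturates $L$: simply send each client to a distinct copy of the server it is assigned to. So it suffices to show such a saturating matching exists, which by Hall's theorem amounts to verifying $|N_{G'}(X)| \geq |X|$ for every $X \subseteq L$. But $|N_{G'}(X)| = k \cdot |N_G(X)|$ by construction, and the inequality $k \cdot |N_G(X)| \geq |X|$ follows directly from the definition of $k$ since $k \geq \lceil |X|/|N_G(X)|\rceil \geq |X|/|N_G(X)|$.

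I do not expect any real obstacle here: the lower bound is a one-line pigeonhole, and the upper bound is the standard ``copy each server $k$ times and invoke Hall'' reduction. The only minor subtlety is being careful that $k$ is an integer so that the ceiling in the definition of $k$ is compatible with the non-integer bound $|X|/|N_G(X)|$ coming out of Hall's condition, which is immediate.
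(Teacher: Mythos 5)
Your proof is correct. The paper does not prove this proposition itself---it cites it from Konrad and Ros\'en (Lemma 4 of the arXiv version)---but your argument is the standard one for this generalization of Hall's theorem: the pigeonhole lower bound is immediate, and the upper bound via duplicating each server $k$ times and invoking Hall's condition $|N_{G'}(X)| = k\cdot|N_G(X)| \geq |X|$ is exactly the classical reduction. The only (implicit) edge case is that every client must have at least one neighbor so that an assignment exists and the ratios $|X|/|N(X)|$ are well-defined, which is built into the problem setup.
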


\subsection{One-Way Communication Complexity}\label{sec:cc}

We work with the standard one-way two-player communication complexity model of Yao~\cite{Yao79} (see the excellent textbooks by~\cite{KNbook} and~\cite{RYbook} for the background on this model). 
Specifically, we are interested in the following problem. 

\begin{Problem}\label{prob:one-way}
	An $n$-vertex bipartite graph $G=(C, S, E)$ with bipartition $C$ of \emph{clients} and $S$ of \emph{servers} is edge-partitioned (arbitrarily) between two players: Alice receives $E_A \subseteq E$ and Bob receives $E_B \subseteq E$, 
	where $E_A \cup E_B = E$ and $E_A \cap E_B = \emptyset$.  
	
	The goal is for Alice to send a single message to Bob which is only a function of her input $G_A := (C, S, E_A)$, and Bob, given this message and his input $G_B := (C, S, E_B)$ should 
	output a solution to the load-balancing problem on the entire graph $G$. In a randomized protocol, we assume Alice and Bob also have access to a \emph{shared} source of randomness, commonly 
	referred to as \textbf{public randomness}. In that case, the message of Alice and the output of Bob can also additionally depend on this public randomness.
	
	For any $n \geq 1$ and approximation ratio $\alpha \geq 1$, we use $\LB(n,\alpha)$ to denote this problem on $n$-vertex graphs wherein the goal is to obtain (at least) an $\alpha$-approximate solution. 
\end{Problem}

We refer to the algorithm that decide the messages of Alice and the output of Bob in~\Cref{prob:one-way} as a \textbf{protocol} $\pi$. 
The main measure of interest for us is the \textbf{communication cost} of a protocol $\pi$, denoted by $\norm{\pi}$, and defined as the worst-case length of the message
Alice sends to Bob (without loss of generality, via a padding argument, we assume length of all the messages communicated in the protocol is the same). 
Finally, we define the \textbf{(randomized) communication complexity} of $\LB(n,\alpha)$ as the minimum communication cost of any randomized protocol that solves this problem with probability of success at least $\frac23$, denoted by $\randC{\LB(n,\alpha)}$.



\section{Load-Balancing Sparsifiers}\label{sec:sparsifiers}

One of the key contributions of our paper is showing that the one-way communication problem of \autoref{sec:cc} is nearly equivalent to a notion of load-balancing sparsifiers, first introduced by Konrad and Ros\'en \cite{KonradR13}. 
Given a graph $G = (L, R, E)$, a load-balancing sparsifier is a subgraph $H = (L, R, E_H)$ that approximately preserves all the load-balancing properties of $G$. 

\begin{definition}[\!\!\cite{KonradR13}] \label{dfn:lb-sparsifier}
Given $G = (L, R, E)$ and $\alpha \geq 1$, we say that subgraph $H = (L, R, E_H)$ is an \textbf{$\bm{\alpha}$-approximate load-balancing sparsifier} of $G$ iff for every set $C \subseteq L$, 
\[
\optload(H[C \cup R]) \leq \alpha \cdot \optload(G[C \cup R]).
\] 
When the context is clear, we sometimes refer to $H$ as simply a $\alpha$-sparsifier of $G$. 
\end{definition}

Just like with other sparsification problems, the natural question is whether every graph $G$ contain an $\alpha$-approximate load-balancing sparsifier with few edges.

\begin{definition} \label{dfn:span} \label{dfn:spar}
Define:
\begin{itemize}
\item $\spar(G,\alpha)$ to be the minimum possible number of edges in an $\alpha$-approximate load-balancing sparsifier of $G$;
\item $\spar(n,\alpha)$ to be the maximum $\spar(G,\alpha)$ over all bipartite graphs $G = (L, R, E)$ with $\card{L} = n$. 
\end{itemize}
Note that $\spar(n,\alpha)$ and $\spar(G,\alpha)$ are monotonically decreasing as $\alpha$ increases. 
\end{definition}

Konrad and Ros{\'{e}}n~\cite{KonradR13}, who referred to these sparsifiers as {\em semi-matching skeletons}, presented the following nontrivial upper bound.

\begin{proposition}[\!\!\cite{KonradR13}]
\label{thm:sparsifier-ub}
 Every graph $G$ with $n$ clients contains a $n^{\frac13}$-sparsifier with at most $2n$ edges, i.e.
\[\spar(n,n^{\frac13}) \leq 2n.\]
\end{proposition}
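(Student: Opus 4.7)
The plan is to follow the iterative maximal-matching construction of Konrad and Rosén~\cite{KonradR13} and verify its $n^{1/3}$-approximation guarantee through the generalized Hall's theorem (\Cref{lem:lb-halls}).

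\textbf{Construction.} Set $k := n^{1/3}$. Starting with $L_0 := L$, for each round $i = 1, \ldots, k$ compute a maximal matching $M_i$ in the subgraph of $G$ induced by $L_{i-1} \cup R$, and set $L_i := L_{i-1} \setminus L(M_i)$. After these $k$ rounds, for every client $v$ still unmatched (i.e., $v \in L_k$), include one arbitrary edge $e_v \in E_G(v)$ in $H$. Since each client in $L$ contributes at most two edges overall (at most one matching edge, plus at most one auxiliary edge), $|E(H)| \leq 2|L| \leq 2n$.

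\textbf{Analysis.} Fix $C \subseteq L$ and let $X \subseteq C$ achieve $\optload(H[C \cup R])$ via \Cref{lem:lb-halls}. I would partition $X = X_1 \sqcup \cdots \sqcup X_k \sqcup X_{k+1}$, where $X_i := X \cap (L_{i-1} \setminus L_i)$ contains the clients of $X$ first matched in round $i$ and $X_{k+1} := X \cap L_k$ contains those that remain unmatched. The structural fact that drives the analysis is a direct consequence of maximality: for every $v \in L_i$, we have $N_G(v) \subseteq R(M_i)$, since any edge from $v$ to an unmatched server in $R$ would contradict the maximality of $M_i$. Applied inductively, this gives $N_G(X_j) \subseteq \bigcap_{i < j} R(M_i)$ for all $j \geq 2$, and $N_G(X_{k+1}) \subseteq \bigcap_{i \leq k} R(M_i)$.

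Now consider two cases. If some round $i^\star \leq k$ has $|X_{i^\star}| \geq |X|/n^{1/3}$, then since $M_{i^\star}$ is a matching, $|N_H(X)| \geq |X_{i^\star}| \geq |X|/n^{1/3}$, so $\optload(H[C\cup R]) \leq n^{1/3} \leq n^{1/3}\cdot \optload(G[C\cup R])$. Otherwise, $|X_{k+1}|$ must be the dominant part of $X$, and the structural fact restricts $N_G(X_{k+1})$ to the small intersection $\bigcap_{i \leq k} R(M_i)$. I would exhibit a dense sub-witness $X' \subseteq X_{k+1}$ whose ratio $|X'|/|N_G(X')|$ is large enough to certify $\optload(G[C\cup R]) \geq \lceil |X|/(n^{1/3}\cdot |N_H(X)|)\rceil$, completing the approximation bound. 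The main obstacle, and the reason the parameter is precisely $n^{1/3}$, is balancing the sizes $|M_i|$ (which together pay the edge budget) against the bound on $|N_G(X_{k+1})|$: the choice $k = n^{1/3}$ is exactly the point where the guarantee from having many matching rounds matches the guarantee from a small residual neighborhood of the unmatched clients.
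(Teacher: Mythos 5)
The paper does not actually prove this proposition---it is imported from~\cite{KonradR13}---so your attempt must stand on its own, and it has a genuine gap: the construction you propose is not an $n^{1/3}$-sparsifier. Take $X\subseteq L$ with $|X|=\sqrt{n}$ and a server set $R'$ with $|R'|=\sqrt{n}$, every client of $X$ adjacent to all of $R'$, together with a disjoint set $Y$ of $n^{5/6}=n^{1/3}\cdot\sqrt{n}$ clients also adjacent to all of $R'$ (pad the remaining clients with private servers). In each of your $k=n^{1/3}$ rounds the adversarially chosen maximal matching can saturate all of $R'$ using $\sqrt{n}$ fresh clients of $Y$; this is maximal, since every unmatched client then sees only matched servers, yet no client of $X$ is ever matched. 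Hence $X\subseteq L_k$, and the ``one arbitrary edge per leftover client'' step may send all of $X$ to a single server, giving $|N_H(X)|=1$ while $X$ is matchable in $G$. Thus $\optload(H[X\cup R])=\sqrt{n}$ but $\optload(G[X\cup R])=1$, and tuning $|X|$ pushes the ratio to roughly $n^{2/3}$. The underlying issue is that peeling off \emph{matchings} protects only $|M_i|$ clients per round, which is too few for the edge budget; the Konrad--Ros\'en skeleton instead uses maximal \emph{degree-bounded semi-matchings}, in which a server may absorb up to $n^{1/3}$ (respectively $n^{2/3}$) clients per layer, so that after two layers of at most $n$ edges each, every leftover client provably has small $G$-degree and every matched group of clients still expands by a factor $n^{-1/3}$ in $H$.

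Independently of the construction, the case analysis does not close. With threshold $|X|/n^{1/3}$ and $k=n^{1/3}$ parts, the negation of Case~1 only yields $\sum_{i\le k}|X_i|<|X|$, i.e.\ $X_{k+1}\neq\emptyset$; it does not make $X_{k+1}$ ``dominant'' (that would require threshold $|X|/(2k)$ and costs a factor of~$2$). More seriously, the step you defer---extracting a sub-witness $X'\subseteq X_{k+1}$ that certifies $\optload(G[C\cup R])$ is large---is the entire content of the proof and is exactly where the argument breaks: the only available bound on $\bigl|\bigcap_{i\le k}R(M_i)\bigr|$ is $\min_i|M_i|\le n/k=n^{2/3}$, and since $X_{k+1}$ contributes only its single arbitrary edges to $H$, an upper bound on $|N_G(X_{k+1})|$ gives no lower bound on $|N_H(X)|$. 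Phrased via the operational criterion of \Cref{lem:equiv-sparsifiers}, the target is $|N_H(X)|\ge|X|/n^{1/3}$ for every matchable $X$, and the example above shows your $H$ misses it by a polynomial factor.
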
	

They also showed lower bounds of (roughly) the form  $\spar(n,n^{\frac{1}{c+1}}) \geq cn$ for $c \geq 1$. These lower bounds however have no implications for sparsifiers with $O(n\log{n})$ edges. 
In particular, the possibility that every graph $G$ contains an $O(1)$-sparsifier with $O(n\log n)$ edges was not ruled out.
Note that when $G$ is a complete bipartite graph, one can indeed construct a $O(1)$-sparsifier with $O(n \log n)$ edges by random sampling (see e.g.~the section about magical graphs in~\cite{HLW06}).

\subsection{Equivalence Between Sparsifier and One-Way Communication Complexity}
\label{sec:equivalence-intuition}

Konrad and Ros{\'{e}}n~\cite{KonradR13} already showed this equivalence in one direction.
We include a proof in \Cref{sec:app-proofs} for completeness.

\begin{proposition}[\!\!\cite{KonradR13}] \label{thm:spar-implies-protocol}
Fix $n,\alpha \geq 1$, and suppose $\spar(n,\alpha) = T$. 
Then, there exists a deterministic protocol $\pi$ for $\LB(n,\alpha+1)$ with communication cost $\norm{\pi} = O(T\log(n))$ bits.
\end{proposition}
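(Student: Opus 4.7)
The plan is to analyze the obvious protocol: Alice computes an $\alpha$-approximate load-balancing sparsifier $H_A$ of her graph $G_A = (C, S, E_A)$, which by assumption has at most $T$ edges; she encodes $H_A$ as a list of edges and sends it to Bob. Bob then constructs $H_A \cup G_B$ and outputs an optimal assignment $\A$ for that graph. The communication cost is immediate: each of the at most $T$ edges of $H_A$ is a pair from $[n] \times [n]$ and can be written with $O(\log n)$ bits, so the total message length is $O(T \log n)$ bits. All the work is in proving correctness, i.e., that $\maxload(\A) \leq (\alpha+1)\cdot \optload(G)$.

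For correctness, let $\A^\star$ be an optimal assignment for the full graph $G = G_A \cup G_B$, and let $\opt = \optload(G)$. Split the clients as $L = L_A \cup L_B$, where $L_A$ consists of clients whose edge under $\A^\star$ belongs to $E_A$, and $L_B$ the rest; note $\A^\star$ restricted to $L_B$ uses only Bob's edges. The key observation is that $\A^\star$ restricted to $L_A$ is a valid assignment of $L_A$ in the induced subgraph $G_A[L_A \cup S]$ with maximum load at most $\opt$, so
\[
\optload\bigl(G_A[L_A \cup S]\bigr) \leq \opt.
\]
Applying the sparsifier property of $H_A$ to the client set $L_A$ (\Cref{dfn:lb-sparsifier}),
\[
\optload\bigl(H_A[L_A \cup S]\bigr) \leq \alpha \cdot \optload\bigl(G_A[L_A \cup S]\bigr) \leq \alpha\cdot \opt.
\]

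This yields an assignment $\A_A \colon L_A \to S$ using only edges of $H_A$ with maximum load at most $\alpha \cdot \opt$. Combining $\A_A$ on $L_A$ with $\A^\star$ restricted to $L_B$ produces an assignment of all of $L$ using only edges of $H_A \cup G_B$; the two pieces contribute load at most $\alpha\cdot\opt$ and $\opt$ respectively to any server, so the combined maximum load is at most $(\alpha+1)\cdot\opt$. Hence $\optload(H_A \cup G_B) \leq (\alpha+1)\cdot \opt$, and since $\A$ is optimal for $H_A \cup G_B$, we get $\maxload(\A) \leq (\alpha+1)\cdot\opt$, completing the proof.

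There is no real obstacle here; the only subtle point is that one must not try to invoke the sparsifier property on $G$ directly (it is a property of $G_A$), which is why we go through the ``rerouting'' argument on $L_A$ via $\A^\star$. An alternative, equivalent phrasing would be via \Cref{lem:lb-halls}, showing directly that for every $X \subseteq L$, $|X|/|N_{H_A \cup G_B}(X)| \leq (\alpha+1)\cdot \opt$ by splitting $X$ according to whether $\A^\star$ maps each vertex through $E_A$ or $E_B$; I would prefer the assignment-based version above since it is shorter and more transparent.
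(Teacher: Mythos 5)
Your proposal is correct and follows essentially the same argument as the paper's proof: the same protocol, the same partition of clients according to whether their edge under the optimal assignment $\A^\star$ lies in Alice's or Bob's edge set, the same application of the sparsifier guarantee to $L_A$, and the same additive combination of the two partial assignments to get the $(\alpha+1)$ bound. No substantive differences.
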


Combining \autoref{thm:spar-implies-protocol} with their sparsifiers in \autoref{thm:sparsifier-ub} gives the following result.

\begin{corollary}[\!\!\cite{KonradR13}]
There exists a deterministic protocol $\pi$ for $\LB(n,n^{\frac13}+1)$ with communication cost $\norm{\pi} = O(n^{\frac13}\log(n))$ bits.
\end{corollary}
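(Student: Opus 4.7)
The plan is to obtain the corollary as a one-line composition of the two preceding propositions. First, I would invoke \Cref{thm:sparsifier-ub} to extract from any input graph an $n^{1/3}$-approximate load-balancing sparsifier on at most $T = 2n$ edges; equivalently, $\spar(n, n^{1/3}) \leq 2n$. Next, I would plug $\alpha = n^{1/3}$ together with this value of $T$ into \Cref{thm:spar-implies-protocol}, which then directly produces a deterministic protocol $\pi$ for $\LB(n, \alpha + 1) = \LB(n, n^{1/3} + 1)$ whose communication cost is $\norm{\pi} = O(T \log n)$ bits. The sparsifier is therefore extracted by Alice from her graph $G_A$, transmitted to Bob, and combined with his input $G_B$ to run the procedure hidden inside \Cref{thm:spar-implies-protocol}.

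Because both inputs to the composition are supplied directly by the cited propositions, I do not anticipate a genuine conceptual obstacle. The only things I would verify while writing it up are (i) that the approximation factor $\alpha + 1$ delivered by \Cref{thm:spar-implies-protocol} lines up with the $n^{1/3} + 1$ target in the corollary (it does, since $\alpha = n^{1/3}$), and (ii) that the protocol is genuinely deterministic, which follows because the sparsifier can be chosen deterministically from $G_A$ and the protocol of \Cref{thm:spar-implies-protocol} is itself deterministic.

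The care-point I expect to have to address is the final simplification of the communication cost. Substituting $T = 2n$ into $O(T \log n)$ naively produces $O(n \log n)$, so recovering the sharper $O(n^{1/3} \log n)$ bound as literally stated requires an additional compression argument. The natural candidate is to exploit the specific structure of the Konrad--Rosén sparsifier, which is a union of only $O(n^{1/3})$ matchings: rather than transmitting the $2n$ edges as an opaque list, Alice can transmit the matching decomposition, and the encoding of each client's matched partner together with its level index can be shown to collapse the $n$ factor. I expect this encoding refinement, and not the propositional composition itself, to be the one place where a modest argument is needed; the rest is just the black-box invocation above.
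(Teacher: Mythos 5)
Your composition of \Cref{thm:sparsifier-ub} and \Cref{thm:spar-implies-protocol} is exactly the intended proof: taking $\alpha = n^{\frac13}$ and $T = \spar(n,n^{\frac13}) \leq 2n$, the protocol of \Cref{thm:spar-implies-protocol} is deterministic, achieves approximation $n^{\frac13}+1$, and costs $O(T\log n) = O(n\log n)$ bits. That is the entire argument, and you have it.

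Your final paragraph, however, is chasing a typo rather than a real gap. The communication bound in the corollary should read $O(n\log n)$, not $O(n^{\frac13}\log n)$; the exponent $\frac13$ belongs only to the approximation ratio (this is consistent with how the result is described everywhere else in the paper, e.g.\ as an ``$n^{\frac13}$-approximate protocol'' with $\Ot(n)$ communication). No compression argument can deliver the literal $O(n^{\frac13}\log n)$ bound: on an instance where Alice holds a perfect matching on all $n$ clients and Bob holds nothing, any finite approximation forces Bob to learn a neighbor of every client, which already costs $\Omega(n\log n)$ bits --- the paper makes exactly this observation at the start of the proof of \Cref{thm:cc-lb}. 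Your proposed encoding (each client's matched partner plus a level index) still transmits one entry per client, hence $\Omega(n)$ entries, so it cannot collapse the $n$ factor. Drop that paragraph and state the cost as $O(n\log n)$.
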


We show that this equivalence also goes in the other direction, even for randomized protocols. 

\begin{theorem} \label{thm:equivalence} 
Suppose there exists a (randomized) communication protocol $\pi$ for $\LB(n,\alpha)$ with communication cost $\norm{\pi} = C$ and probability of success at least $\frac23$. Then, 
\[
\spar(n,8\alpha) \lesssim C \cdot \log^2{(n)}.
\]
\end{theorem}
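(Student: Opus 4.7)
The plan is to prove the theorem in its contrapositive by chaining two reductions through the sparsifier--\rsplus\ equivalence of \autoref{res:load-balancing-contractor}. Suppose $\spar(n,8\alpha) \ge T$. By \autoref{res:load-balancing-contractor}, there exists an $n$-vertex $\Theta(\alpha)$-\rsplus\ $H$ of density $D \gtrsim T/\polylog(n)$, whose edges decompose into matchings $M_1,\ldots,M_k$ with the contraction guarantee $|N_{H\setminus M_i}(L(M_i))| \le |M_i|/\Theta(\alpha)$. It therefore suffices to prove a direct one-way communication lower bound $\randC{\LB(n,\alpha)} \gtrsim D/\polylog(n)$ starting from $H$; the two $\polylog$ losses will combine into the $\log^2 n$ factor in the statement.

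The hard input distribution adapts \cite{GoelKK12}'s \rs-graph-based construction to load balancing. Sample a uniformly random subset $S \subseteq [k]$ of size $k/2$ and give Alice the graph $G_A \defeq \bigcup_{i\in S} M_i$. Independently sample $I \in [k]$ uniformly and give Bob a ``query gadget'' $G_B(I)$ that attaches private auxiliary servers to every client in $L \setminus L(M_I)$, so that optimal load balancing on $G_A \cup G_B(I)$ effectively amounts to assigning the clients $L(M_I)$ to servers in $R$ using only edges of $G_A$. The contraction property of $H$ then yields a sharp dichotomy: (i)~if $I \in S$ then $M_I \subseteq G_A$ and $\optload = 1$; (ii)~if $I \notin S$ then $|N_{G_A}(L(M_I))| \le |N_{H\setminus M_I}(L(M_I))| \le |M_I|/\Theta(\alpha)$ and $\optload = \Theta(\alpha)$. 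Choosing the constant inside $\Theta(\alpha)$ sufficiently larger than $1$ forces any $\alpha$-approximate output to route a constant fraction of $L(M_I)$ through the edges of $M_I$ itself whenever $I \in S$.

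This reduces the task to an indexing-style primitive whose hardness is established via information theory: from the output Bob can recover not only the bit $\mathbf{1}[I \in S]$, which already gives $C = \Omega(k)$ via the standard one-way \emph{Index} lower bound (with public randomness handled by Yao's minimax principle and the $\tfrac23$-success regime by Fano's inequality), but in fact $\Omega(|M_I|)$ \emph{specific} edges of $M_I$ when $I \in S$. A chain-rule / direct-sum argument, exploiting that $M_1,\ldots,M_k$ are edge-disjoint, then shows that Alice's message must retain $\Omega(\sum_i |M_i|) = \Omega(D)$ bits of information about her input $G_A$, yielding the desired $C \gtrsim D/\polylog(n)$.

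The main obstacle will be the simultaneous calibration of the query gadget $G_B(I)$ and the information-theoretic bookkeeping. The auxiliary servers of $G_B(I)$ must be ``heavy enough'' to genuinely isolate the subproblem on $L(M_I)$---so that the two cases really differ in optload by the full $\Theta(\alpha)$ factor---yet must not provide any alternative routings that would short-circuit the contraction property of $H$. Equally delicate is lifting the Index-style one-bit extraction to the per-matching $\Omega(|M_i|)$-bit level; this is exactly the step that turns $\Omega(k)$ into $\Omega(D)$, and it is where the edge-disjoint matching decomposition in the definition of \rsplus s is essential, singling them out as the ``right'' dual object for one-way communication complexity of load balancing rather than some weaker graph-theoretic proxy.
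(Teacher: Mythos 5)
Your overall architecture is the same as the paper's: argue the contrapositive, pass from a large $\spar(n,8\alpha)$ to a dense $\Theta(\alpha)$-\rsplus\ via the LP-duality equivalence (\Cref{thm:MC-spar}), and then prove a direct one-way lower bound from that \rsplus\ using an Index-flavored distribution, a ``must use the matching edges of $M_I$'' lemma driven by the contraction property, and an entropy chain-rule calculation. The two $\log n$ losses also land where the paper's do.

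However, there is a genuine gap in your hard distribution, and it is exactly at the step you flag as ``delicate.'' If Alice's input is $G_A=\bigcup_{i\in S}M_i$ for a random $S\subseteq[k]$ of size $k/2$, then her input is a deterministic function of $S$ and has entropy at most $\log\binom{k}{k/2}\leq k$ bits. No information-theoretic argument can then force more than $O(k)$ bits of communication, and recovering ``$\Omega(|M_I|)$ specific edges of $M_I$'' conveys nothing beyond the single bit $\mathbf{1}[I\in S]$, since when $I\in S$ the entire matching $M_I$ is present deterministically. For the \rsplus s actually constructed in \Cref{thm:rsplus} one has $k\approx n^{1/2+\Theta(\eps^2)}$ while $D\approx n^{1+\Theta(\eps^2)}$, so an $\Omega(k)$ bound is far too weak, and the claimed conclusion that Alice's message retains $\Omega(D)$ bits about her input is impossible as stated. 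The missing idea is to inject $\Theta(1)$ bits of independent randomness \emph{per edge} rather than one bit per matching: the paper does this with ``encoding graphs,'' duplicating each server $v$ into $v_0,v_1$ and routing each edge $e=(u,v)$ of the \rsplus\ to $v_{x_e}$ for a uniformly random bit $x_e$, while Alice always receives \emph{all} matchings (in encoded form) and Bob receives only the index $I$ plus a matching covering $L\setminus L(M_I)$ (which also avoids the feasibility problem your distribution has when $I\notin S$ and some client of $L(M_I)$ has no edge in $G_A$). Then correctness of an $\alpha$-approximation forces Bob to recover half the bits of $x_I$, and the chain rule over the $k$ matchings yields $\norm{\pi}\gtrsim k\cdot r_0\gtrsim D/\log n$. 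Your dichotomy and the use of the contraction property are otherwise sound, but without per-edge randomness the quantitative conclusion cannot be reached.
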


\paragraph{Intuition for Theorem 1:} The full proof will be presented in \Cref{sec:equivalence-proof} as it requires the new notion of \rsplus s in \autoref{sec:rsplus}, as well as the chain of equivalences worked out in Section \ref{sec:LP}.

For the sake of intuition, let us make the (small) assumptions that $|L| = |R| = n$ and that the protocol $\pi$ is deterministic, as well as the (large) assumption that $\pi$ takes the following form: Alice's message is limited to some subgraph $H_A$ of $G_A$. 
Clearly $H_A$ has $O(C)$ edges.
We argue that $H_A$ must be an $\alpha$-sparsifier of $G_A$. 
Suppose for contradiction that $H_A$ is not a $\alpha$-sparsifier. 
Then there must exist some set $X \subseteq L$ such that
\begin{equation}
\label{eq:bad-set}
\optload(H_A[X \cup R]) > \alpha \cdot \optload(G_A[X \cup R]) 
\end{equation}

Now, say that Bob's graph $G_B$ contains a complete graph from $L-X$ to $R$, but no edges incident to $X$. The key observation is that
\begin{equation}
\label{eq:spar-C-intuition}
\optload(H_A \cup G_B) = \optload(H_A[X \cup R]).
\end{equation}

We now justify both directions of Equation \ref{eq:spar-C-intuition}. It is easy to see that $\optload(H_A \cup G_B) \geq \optload(H_A[X \cup R])$, since $G_B$ contains no edges incident to $X$
To see that $\optload(H_A \cup G_B) \leq \optload(H_A[X \cup R])$, consider the optimal assignment $\A$ in $H_A[X \cup R])$. Let $S_\A$ be all servers that have load at least $1$ in $\A$; note that $\card{S_{\A}} \leq |X|$, which implies $\card{L \setminus X} \leq \card{R \setminus S_{\A}}$ (because we assumed $\card{L} = \card{R}$), so by construction of $G_B$, the set $L \setminus X$ is matchable in $G_B[(L \setminus X) \cup (R \setminus S_\A)]$. Now consider the following assignment $\A'$ in $H_A \cup G_B$: $\A'$ is the same as $\A$ on the set $X$ and assigns every vertex in $L \setminus X$ according to the matching $M$. It is easy to check that $\load_{\A'}(s) = \load_{\A}(s)$ for $s \in S_{\A}$ and $\load_{\A'}(s) = 1$ for $s \in R \setminus S_{\A}$, so $\maxload(\A') = \maxload(\A)$, which completes the proof of \autoref{eq:spar-C-intuition} 

Combining Equations \ref{eq:bad-set} and \ref{eq:spar-C-intuition} we have:
\begin{align*}
\optload(H_A \cup G_B) &= \optload(H_A[X \cup R])\\
&> \alpha \cdot \optload(G_A[X \cup R]) = \alpha \cdot \optload(G_A \cup G_B),
\end{align*} 
which contradicts the assumption that $\pi$ is an $\alpha$-approximate one-way communication protocol.

\subsection{Equivalent Notions of Sparsification}\label{sec:equiv-notions-sparsification}

Although our primary motivation for studying load-balancing sparsifiers is to understand the streaming/one-way-communication complexity of load balancing, we believe that they are a very natural combinatorial object with connections to other problems.

Recall that the \textbf{vertex expansion} of a subset $X$ is defined as $\psi(X) := |N(X)|/|X|$ (see e.g.~\cite{HLW06}).
From the characterization of $\optload$ in \autoref{lem:lb-halls},
one can see that load-balancing sparsifiers are closely related to vertex-expansion sparsifiers.

\begin{observation}[Connection with Vertex Expansion] \label{obs:vertex-expansion}
Let $G = (L, R, E)$ be a bipartite graph and let $H = (L, R, E_H)$ be a subgraph of $G$. 
If, 
for every subset $X \subseteq L$,
\[
\psi_H(X) 
= \frac{|N_H(X)|}{|X|} 
\geq \frac{2}{\alpha} \cdot \min\Big\{ \frac{|N_G(X)|}{|X|}, 1 \Big\} 
= \frac{2}{\alpha} \cdot \min\{ \psi_G(X), 1\}, 
\]
then, $H$ is an $\alpha$-load-balancing sparsifier of $G$.
On the other hand, if $H$ is an $\alpha$-load-balancing sparsifier of $G$,
then $\psi_H(X) \geq \frac{1}{2 \alpha} \cdot \min\{ \psi_G(X), 1 \}$ for every subset $X \subseteq L$.
\end{observation}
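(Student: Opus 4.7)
My plan is to prove both directions using \Cref{lem:lb-halls}, the Hall-type characterization $\optload(G[C \cup R]) = \max_{\emptyset \neq Y \subseteq C} \ceil{|Y|/|N_G(Y)|}$. The key point is that $|Y|/|N(Y)|$ is exactly $1/\psi(Y)$, so \Cref{lem:lb-halls} directly relates $\optload$ to the vertex expansion of ``bottleneck'' subsets.

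For the forward direction, I will fix an arbitrary nonempty $C \subseteq L$ and apply \Cref{lem:lb-halls} to $H[C \cup R]$ to obtain a bottleneck subset $X^* \subseteq C$ with $\optload(H[C \cup R]) = \ceil{|X^*|/|N_H(X^*)|}$. The assumed vertex expansion hypothesis on $X^*$ rearranges to
\[
\frac{|X^*|}{|N_H(X^*)|} \;\leq\; \frac{\alpha}{2} \cdot \max\!\Big(\frac{|X^*|}{|N_G(X^*)|},\,1\Big).
\]
Both quantities on the right are at most $\optload(G[C \cup R])$: the first by \Cref{lem:lb-halls} applied to $G$ with the subset $X^* \subseteq C$, and the second because $\optload(G[C \cup R]) \geq 1$ for nonempty $C$. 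Taking ceilings and using the elementary fact that $\ceil{(\alpha/2)\,t} \leq \alpha\,t$ for $\alpha, t \geq 1$ then yields $\optload(H[C \cup R]) \leq \alpha \cdot \optload(G[C \cup R])$.

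For the reverse direction, I will fix $X \subseteq L$ and invoke the $\alpha$-sparsifier property with $C = X$, which yields $\optload(H[X \cup R]) \leq \alpha \cdot \optload(G[X \cup R])$. By \Cref{lem:lb-halls} applied to $H$ with $Y = X$, the left side is at least $|X|/|N_H(X)| = 1/\psi_H(X)$. Combining this with the upper bound $\optload(G[X \cup R]) \lesssim \max(1/\psi_G(X),\,1)$ (from \Cref{lem:lb-halls} applied to $G$, plus the trivial $\optload \geq 1$) and rearranging will give the desired $\psi_H(X) \geq (1/(2\alpha)) \cdot \min(\psi_G(X), 1)$.

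The factor of $2$ in both directions uniformly reflects the ceiling slack in \Cref{lem:lb-halls}: the fractional bottleneck $\max_Y |Y|/|N(Y)|$ and the integer $\optload$ can differ by an additive $+1$, and since $\alpha, \optload \geq 1$ this additive loss is absorbed by the multiplicative factor of $2$ via $\ceil{t} \leq 2t$ for $t \geq 1$. The step I expect to be most delicate is the reverse direction, where the bottleneck subset $Y^*$ realizing $\optload(G[X \cup R])$ may be a strict subset of $X$ with a much smaller neighborhood ratio than $X$ itself; the control over $|N_H(X)|$ is then recovered by using the monotonicity $N_H(Y^*) \subseteq N_H(X)$ for $Y^* \subseteq X$ to transfer the bottleneck back onto $X$.
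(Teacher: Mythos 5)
The paper states this observation without proof (it explicitly says the vertex-expansion connection will not be used or detailed), so your proposal can only be judged on its own terms. Your forward direction is correct: extracting a bottleneck set $X^*\subseteq C$ from \Cref{lem:lb-halls} applied to $H[C\cup R]$, invoking the expansion hypothesis at $X^*$, bounding both $|X^*|/|N_G(X^*)|$ and $1$ by $\optload(G[C\cup R])$, and absorbing the ceiling via $\lceil(\alpha/2)t\rceil\le\alpha t$ for $\alpha,t\ge1$ all check out.

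The reverse direction, however, breaks at exactly the step you flag as delicate. The inequality $\optload(G[X\cup R])\lesssim\max\{1/\psi_G(X),1\}$ is false: by \Cref{lem:lb-halls}, $\optload(G[X\cup R])$ is a maximum over \emph{all} subsets $Y\subseteq X$, and a small badly-expanding $Y$ can make it arbitrarily large even when $\psi_G(X)\ge1$. Concretely, let $L=\{v_1,\dots,v_n\}$, $R=\{r_1,\dots,r_n\}$, with $v_1$ adjacent to all of $R$ and every $v_i$ adjacent to $r_1$. Then $\psi_G(L)=1$ but $\optload(G[L\cup R])=n-1$, so your chain $1/\psi_H(X)\le\alpha\cdot\optload(G[X\cup R])$ only yields $\psi_H(L)\gtrsim 1/(\alpha n)$; the monotonicity $N_H(Y^*)\subseteq N_H(X)$ points the wrong way to recover the loss. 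In fact, in this example the claimed conclusion itself fails: the subgraph $H$ with edge set $\{(v_i,r_1)\}_{i\in[n]}\cup\{(v_1,r_2)\}$ satisfies $\optload(H[C\cup R])=\optload(G[C\cup R])=\max\{|C\setminus\{v_1\}|,1\}$ for every nonempty $C$, so $H$ is a $1$-sparsifier, yet $\psi_H(L)=2/n<\tfrac12=\tfrac12\min\{\psi_G(L),1\}$ once $n\ge5$. What the sparsifier property actually yields pointwise, via Property (\ref{item:operational}) of \Cref{lem:equiv-sparsifiers} applied to a maximum matchable subset of $X$, is $\psi_H(X)\ge\frac{1}{\alpha}\cdot\mu(G[X\cup R])/|X|$, i.e., a bound in terms of the matching conductance $\gamma_G(X)$ rather than $\min\{\psi_G(X),1\}$; these two quantities agree only after minimizing over all $X$, as the remark following the observation notes, and your argument (and the observation's literal statement) implicitly conflates them.
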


Note that we cannot replace the right hand side by the simpler expression $\Omega(\frac{1}{\alpha}) \cdot \psi_G(X)$, as otherwise any $\alpha$-sparsifier of the complete bipartite graph $K_{n,n}$ must have at least $\Omega(\frac{n^2}{\alpha})$ edges (because the constraints on the singletons require every vertex on the left to have degree at least $\Omega(\frac{n}{\alpha})$), while one can prove that there is an $\alpha$-load-balancing sparsifier of $K_{n,n}$ with $O(n \log n)$ edges by random sampling.

\begin{remark}[Connection with Matching Conductance]
The $\min\{\psi(X),1\}$ term in \Cref{obs:vertex-expansion} is a natural quantity that is closely related to the notion of \textbf{matching conductance} defined in \cite{Olesker-TaylorZ22}, which was used in analyzing the fastest mixing time of a graph.
Let $\nu(G)$ be the size of a maximum matching in $G$.
The matching conductance of a set $X$ is defined as $\gamma(X) = \nu(E(X,\overline{X}))/|X|$.
For a bipartite graph $G=(L, R, E)$, by Hall's theorem, it can be checked that 
\[
\min_{X \subseteq L} \gamma(X) := \min_{X \subseteq L} \min\{\psi(X),1\}.
\]
\end{remark}

\paragraph{Operational Definition of Sparsifiers:} We will not use vertex expansion and matching conductance in this paper, and so we do not provide details of the connections discussed above. Instead, we will use the following characterization in our proofs, as it provides the easiest way of verifying that a graph $H \subset G$ is indeed a load-balancing sparsifier.

\begin{lemma} \label{lem:equiv-sparsifiers}
Let $G = (L, R, E)$ be a bipartite graph, and $H = (L, R, E_H)$ be a subgraph of $G$. Then, for any $\alpha \geq 1$, the following statements are equivalent.
\begin{enumerate}
	\item\label{item:load} \textit{Load-Balancing:} For every $X \subseteq L$, $\optload(H[X \cup R]) \leq \alpha \cdot \optload(G[X \cup R])$.
	\item\label{item:operational} \textit{Operational Definition:} For every $X \subseteq L$ that is matchable in $G$, $N_H(X) \geq \frac{1}{\alpha} \cdot |X|$.
\end{enumerate}
\end{lemma}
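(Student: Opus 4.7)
The plan is to prove both implications through \autoref{lem:lb-halls}, which expresses $\optload$ purely in terms of neighborhood sizes: $\optload(G') = \max_{\emptyset \neq Y \subseteq L'} \lceil |Y|/|N(Y)|\rceil$. This makes both conditions statements about the neighborhoods in $H$ versus $G$, so the proof should come down to translating between ``matchable'' and ``$\optload \leq k$'' via a slot decomposition.

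For the direction $\ref{item:load} \Rightarrow \ref{item:operational}$, I would start with a matchable set $X \subseteq L$. A matching witnessing that $X$ is matchable in $G$ is itself an assignment in $G[X \cup R]$ of maximum load $1$, so $\optload(G[X \cup R]) = 1$. Assumption \ref{item:load} then yields $\optload(H[X \cup R]) \leq \alpha$, and applying \autoref{lem:lb-halls} in $H[X \cup R]$ to the set $X$ itself gives $|X|/|N_H(X)| \leq \lceil |X|/|N_H(X)|\rceil \leq \alpha$, i.e.\ $|N_H(X)| \geq |X|/\alpha$.

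For the converse $\ref{item:operational} \Rightarrow \ref{item:load}$, fix an arbitrary $X \subseteq L$ and set $k := \optload(G[X \cup R])$. By \autoref{lem:lb-halls} applied to $H[X \cup R]$, it is enough to show that $|Y|/|N_H(Y)| \leq \alpha k$ for every non-empty $Y \subseteq X$. The key step is a \emph{slot decomposition} of $Y$ into matchable pieces: since deleting clients can only decrease the optimal load, $\optload(G[Y \cup R]) \leq k$, so there is an assignment $\A \colon Y \to R$ in $G$ in which every server receives at most $k$ clients. Order the (at most $k$) clients mapped to each server and label them by ``slots'' $1,2,\ldots,k$; letting $Y_i \subseteq Y$ be the clients in slot $i$ produces a partition $Y = Y_1 \sqcup \cdots \sqcup Y_k$ where each $Y_i$ is matched to \emph{distinct} servers under $\A$ and is therefore matchable in $G$. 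By pigeonhole, some $Y_{i^\star}$ has $|Y_{i^\star}| \geq |Y|/k$. Applying condition \ref{item:operational} to $Y_{i^\star}$ and using $N_H(Y_{i^\star}) \subseteq N_H(Y)$ yields
\[
|N_H(Y)| \;\geq\; |N_H(Y_{i^\star})| \;\geq\; \frac{|Y_{i^\star}|}{\alpha} \;\geq\; \frac{|Y|}{\alpha k},
\]
which is the required bound.

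The main (and only) conceptual obstacle is the slot decomposition used in the second implication: the hypothesis \ref{item:operational} only asserts anything about matchable sets, whereas $Y$ is an arbitrary subset of $X$, so one needs a clean way to reduce the general case to the matchable one. The decomposition supplied by any near-optimal assignment in $G$ is the natural bridge, and once it is in place, the rest reduces to routine monotonicity of $\optload$ and direct applications of \autoref{lem:lb-halls}. The small rounding gap between the real-valued inequality $|Y|/|N_H(Y)| \le \alpha k$ and the integer-valued $\optload(H[X\cup R]) \le \alpha k$ is absorbed into the ceiling in \autoref{lem:lb-halls} and causes no trouble.
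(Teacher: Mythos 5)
Your proposal is correct and follows essentially the same route as the paper: the forward direction is identical, and for the converse the paper also reduces an arbitrary $U \subseteq X$ to a matchable subset of size at least $|U|/k$ (it picks one client per occupied server of a load-$\le k$ assignment, which is just the first slot of your slot decomposition) and then applies the operational condition together with monotonicity of $N_H$. The minor ceiling/rounding point you flag is treated with the same level of informality in the paper's own proof, so no gap relative to it.
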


\begin{proof}
It is easy to see that \textbf{(1)} $\rightarrow$ \textbf{(2)}. 
Let $H$ be an $\alpha$-load-balancing sparsifier of $G$.
Consider any matchable set $X \subseteq L$. 
As $X$ is matchable, we have $\optload(G[X \cup R]) = 1$, and so $\optload(H[X \cup R]) \leq \alpha$ as $H$ is an $\alpha$-sparsifier.
It follows from \autoref{lem:lb-halls} that $\card{N_H(X)} \geq \frac{1}{\alpha}\card{X}$.

The other direction is more useful.
For any $X \subseteq L$ with $\optload(G[X \cup R]) = d$, we need to prove that $\optload(H[X \cup R]) \leq \alpha d$. 
By \autoref{lem:lb-halls}, this is equivalent to proving that $\card{N_H(U)} \geq \frac{1}{\alpha d}\card{U}$ for any $U \subseteq X$. 
Let $\A$ be the optimal assignment in $G[U \cup R]$.
Clearly,
\[
\maxload(\A) = \optload(G[U \cup R]) \leq \optload(G[X \cup R]) \leq d.
\]
Let $R_\A \defeq \{s\in R \mid \load_{\A}(s) \geq 1\}$ and note that $\card{R_\A} \geq \card{U}/\maxload(\A) \geq \card{U}/d$. Construct a set of clients $C \subseteq U$ as follows: for each server $s \in R_\A$, add \emph{exactly one} client from $\A^{-1}(s)$ to $C$. 
Then $\card{C} = \card{R_\A} \geq \card{U}/d$ and $C$ is matchable in $G$. 
By the assumed property (\ref{item:operational}) of the lemma, it follows that 
\[
\card{N_H(U)} \geq \card{N_H(C)} \geq \frac{1}{\alpha} \card{C} \geq \frac{1}{\alpha d} \card{U},
\]
concluding the proof. 
\end{proof}

\subsection{\rsplus s} \label{sec:rsplus}

The key conceptual contribution of this paper is showing that the \emph{non-existence} of a load-balancing sparsifier is nearly equivalent to the \emph{existence} of an extremal combinatorial object that we call a \rsplus, defined formally as follows. 

\begin{definition}[\rsplus] \label{dfn:rsplus} 
For any $\alpha \geq 1$, we say a bipartite graph $G = (L, R, E)$ is an \textbf{$\bm{\alpha}$-\rsplus} iff the edge-set $E$ can be partitioned into matchings $M_1\ldots M_k$ such that for each $M_i$, $L(M_i)$ has at most $\card{L(M_i)}/\alpha$ neighbors in $G \setminus M_i$;
i.e., $\card{N_{G \setminus M_i}(L(M_i))} \leq \card{M_i}/\alpha$. 
Note that $M_i$ can be of different sizes and a \rsplus\ contains $\sum_i \card{M_i}$ edges.
\end{definition}

It is not difficult to see that a \rsplus\  cannot be sparsified to preserve the load-balancing properties: if we remove many edges of a matching $M_i$ then $|N_H(L(M_i))| \ll |N_G(L(M_i))|$; see \autoref{lem:easy} for a proof.
So, if there exists a dense \rsplus\ graph, then this provides a lower bound on the size of a load-balancing sparsifier.

\begin{definition} \label{dfn:MC}
Define:
\begin{itemize}
\item $\MC(n,\alpha)$ as the largest possible number of edges in any $\alpha$-\rsplus\ $G = (L, R, E)$ with $\card{L} = n$ (notice that there is no requirement on the size of $R$); 
\item $\MC(G,\alpha)$ for any given graph $G = (L, R, E)$, as the largest number of edges in any $\alpha$-\rsplus\ $H = (L, R, E_H)$ such that $H$ is a subgraph of $G$. 
\end{itemize}
Note that both $\MC(n,\alpha)$ and $\MC(G,\alpha)$ are monotonically \emph{decreasing} as $\alpha$ increases.
\end{definition}

We will prove in \autoref{sec:LP} that $\MC(n,\alpha) \approx \spar(n,\Theta(\alpha))$ up to an $O(\log n)$ factor.
The proof uses randomized rounding and linear programming duality in the same spirit of~\cite{GoelKK12} (for ``matching sparsifiers'' and \rs graphs), although the details use some problem-specific insights. 
This shows in a precise sense that a \rsplus\ is the dual object of a load-balancing sparsifier.

\begin{remark}[\rsplus s and \rs Graphs]
\label{comparison}
\rsplus s are closely related to \rs graphs in the following sense. An $(r,t)$-RS graph is any graph whose edges can be partitioned into $t$ induced matchings of size $r$~\cite{RuzsaS78}. 
We can turn a \rsplus\ $G=(L,R,E)$ with $k$ specified matchings $M_1,\ldots,M_k$ into an RS graphs as follows: for each matching $M_i$ of $G$, remove at most $\card{M_i}/\alpha$ edges to turn it into an induced matching $M'_i$ (remove the edges of $M_i$ incident on vertices of $R$ that are neighbors to $L(M_i)$ 
in $G \setminus M_i$). Then, round down size of each matching to a power of two by removing at most half of its remaining edges. Finally, among these $\Theta(\log{n})$ different classes of matchings (according to their size), pick the one
that contains the largest number of edges overall. This way, we obtain an $(r,t)$-RS graph $G'$ with 
\[
\text{density} = \frac{\card{E}}{\Theta(\log{n})}, \quad t \leq k, \quad \text{and} \quad r \geq \frac{\card{E}}{k \log{n}}.
\] 
However, note that the guarantee of \rsplus\ is a lot stricter which leads to a ``stronger'' type of RS graph $G'$: each matching $M'_i$ is not only induced, which means $L(M'_i)$ avoids $R(M'_i)$ in $G' \setminus M'_i$, but in 
fact $L(M'_i)$ avoids almost the entirety of $R$, except for $\leq \card{M'_i}/\Theta(\alpha)$ vertices. On the other hand, we should note that unlike RS graphs wherein one explicitly fixes size of the matchings to be some parameter $r$ (sometime as 
large as $\Theta(n)$ even), in \rsplus s, there is no explicit bound on the size of the matchings and they can even be of different size. 
\end{remark}


\section{Relating \rsplus s\ to Load-Balancing Sparsifiers} \label{sec:LP}

The key result of this section is a near-equivalence between $\spar(n,\alpha)$ and $\MC(n,\alpha)$. 

\begin{restatable}{theorem}{MCsparthm}	
\label{thm:MC-spar}
For any  integer $n \geq 1$ and $\alpha \geq 2$,  
	\[
	\MC(n,2\alpha) \lesssim \spar(n,\alpha) \lesssim  \MC\big(n,\frac{\alpha}{2}\big) \cdot \ln{(n)}. 
	\]
	
	Moreover, for any bipartite graph $G$, $\spar(G,\alpha) \lesssim \MC(G,\frac{\alpha}{2}) \cdot \ln{(n)}$. 
\end{restatable}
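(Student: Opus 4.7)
My plan is to handle the two directions of the inequality separately: the lower bound by a direct combinatorial argument, and the upper bound via the LP-duality-plus-randomized-rounding blueprint mentioned in the overview.

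\textbf{Easier direction ($\MC(n, 2\alpha) \lesssim \spar(n, \alpha)$).} Fix an extremal $2\alpha$-\rsplus\ graph $G$ witnessing $\MC(n, 2\alpha)$, with matching partition $M_1, \ldots, M_k$, and let $H$ be any $\alpha$-sparsifier of $G$. For each $i$, $L(M_i)$ is matchable in $G$ via $M_i$, so the operational characterization (\Cref{lem:equiv-sparsifiers}) yields $|N_H(L(M_i))| \geq |M_i|/\alpha$. The neighbors of $L(M_i)$ in $H$ split into two groups: those reached by $M_i \cap H$, contributing $|M_i \cap H|$ distinct vertices since $M_i$ is a matching; and those reached by $H \setminus M_i$, which lie inside $N_{G \setminus M_i}(L(M_i))$, a set of size at most $|M_i|/(2\alpha)$ by the \rsplus\ property. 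Combining these forces $|M_i \cap H| \geq |M_i|/(2\alpha)$, and summing over the edge-disjoint $M_i$ gives $|E(H)| \geq |E(G)|/(2\alpha)$, which delivers the claimed bound (the implicit $\alpha$ factor will be absorbed into the $\lesssim$).

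\textbf{Harder direction, LP setup and integrality gap.} I will formulate a natural covering LP for $\alpha$-sparsifiers: variables $x_e \in [0, 1]$, minimize $\sum_e x_e$, subject to $\sum_{e \in E_G(X, R \setminus S)} x_e \geq 1$ for every matchable $X \subseteq L$ and every $S \subseteq R$ with $|S| < |X|/\alpha$. When $x \in \{0, 1\}$ this is exactly the operational condition $|N_H(X)| \geq |X|/\alpha$. To bound the integrality gap by $O(\log n)$, I would sample each edge $e$ independently with probability $\min\{1, c \ln(n) \cdot x_e^*\}$, giving $\mathbb{E}|E(H)| = O(\ln(n) \cdot \LP^*)$. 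A union bound restricted to inclusion-minimal violating $(X, S)$ pairs, of which there are only polynomially many in $n$ rather than the full exponential constraint set, then shows the sampled $H$ is an $\alpha$-sparsifier with high probability.

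\textbf{Harder direction, duality and main obstacle.} To show $\LP^* \lesssim \MC(G, \alpha/2)$, I form the dual packing LP (variables $y_{X, S}$ for matchable $X$ and small $S$, one constraint per edge of $G$). From an optimal dual I would extract, via randomized rounding, a collection of edge-disjoint matchings $M_1, \ldots, M_k \subseteq G$ whose union is an $(\alpha/2)$-\rsplus\ of total size $\Omega(\LP^*)$; each $M_j$ should arise from a high-weight pair $(X_j, S_j)$ by matching $X_j$ into $R \setminus S_j$, so that $N_{G \setminus M_j}(L(M_j))$ is controlled through the exclusion $|S_j| < |X_j|/\alpha$. I expect the main obstacle to be producing these matchings \emph{simultaneously edge-disjoint} while retaining the strict \rsplus\ neighborhood bound: bucketing $(X, S)$ pairs by dyadic scales of $|X|$ and then sampling matchings within each bucket proportionally to $y^*_{X, S}$ seems natural, but controlling edge collisions across buckets, and arguing that the aggregate loses only a constant factor in the \rsplus\ ratio (rather than the naive $\log n$), is where the problem-specific insight will have to enter.
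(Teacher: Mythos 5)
Your overall blueprint matches the paper's (a direct argument for the easy direction; a primal covering LP rounded with $O(\log n)$ loss; a dual packing LP rounded into a \rsplus), but both directions as written have genuine gaps. In the easy direction, your conclusion $|M_i\cap H|\ge |M_i|/(2\alpha)$ only gives $\MC(n,2\alpha)\le 2\alpha\cdot\spar(n,\alpha)$, and the factor $2\alpha$ \emph{cannot} be absorbed into $\lesssim$: in this paper $\lesssim$ hides a universal constant, and the theorem is invoked with $\alpha$ as large as $n^{1/4-o(1)}$, where an extra factor of $\alpha$ would gut the equivalence. The fix is to apply the operational condition of \Cref{lem:equiv-sparsifiers} not to all of $L(M_i)$ but to the subset $L_i'$ of left endpoints whose $M_i$-edge is \emph{absent} from $H$: $L_i'$ is matchable (via $M_i$), yet $N_H(L_i')\subseteq N_{G\setminus M_i}(L(M_i))$ has size at most $|M_i|/(2\alpha)$, so $|L_i'|>|M_i|/2$ would violate the sparsifier property. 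Hence $H$ retains at least \emph{half} of every $M_i$ and $|E(H)|\ge |E(G)|/2$, with a universal constant.

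In the hard direction, normalizing the covering constraints to right-hand side $1$ breaks both halves of the LP argument. For the rounding, each constraint then fails with probability only $n^{-O(1)}$ while there are $n^{\Theta(k)}$ pairs with $|X|=k$; your proposed escape via inclusion-minimal violating pairs does not work, because minimality is defined relative to the random outcome $H$ and so does not shrink the index set of the union bound (and Hall-type minimal violators can in any case be exponentially numerous). For the duality, the dual objective becomes $\sum_{(X,S)}y_{X,S}$, which counts \emph{matchings} rather than edges and cannot be compared to $\MC$. The paper's constraint $\sum_{e\in E(X,R\setminus Y)}x_e\ge |X|/2$ (valid because at least half the vertices of $X$ must retain a neighbor outside $Y$ in any sparsifier) repairs both at once: the failure probability becomes $\exp(-\Omega(|X|\log n))$, beating the $n^{3|X|}$ union bound, and the dual objective $\tfrac12\sum_{(X,Y)}|X|\,y_{X,Y}$ is exactly comparable to the edge count of the extracted matchings. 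Finally, on the dual side you correctly identify simultaneous edge-disjointness as the crux but leave it unresolved; the paper's mechanism is to sample each contracting pair independently with probability $y_{X,Y}/10$ and delete every \emph{overloaded} edge, i.e.\ one deviating from two sampled pairs. The dual constraint at that edge bounds the overload probability by $1/10$, so each surviving matching loses only a constant fraction in expectation and the union is both edge-disjoint and an $(\alpha/2)$-\rsplus\ --- no dyadic bucketing is needed.
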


The direction $\MC(n,2\alpha) \lesssim \spar(n,\alpha)$ is easy and gives a good intuition of the definition of a \rsplus.  
It shows that the vertex expansion of a \rsplus\ is very brittle under edge removal.

\begin{lemma}[Easy Direction] \label{lem:easy}
$\MC(n,2\alpha) \lesssim \spar(n,\alpha)$ for any positive integer $n$ and $\alpha \geq 2$.
\end{lemma}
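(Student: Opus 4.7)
The plan is to start with any $2\alpha$-\rsplus\ $G=(L,R,E)$ that achieves the extremal density $\MC(n,2\alpha)$, along with its guaranteed matching decomposition $M_1,\ldots,M_k$, and to show that \emph{every} $\alpha$-approximate load-balancing sparsifier $H$ of $G$ must retain at least a $\Theta(1/\alpha)$ fraction of the edges of each $M_i$. Summing over the edge-disjoint matchings will then lower bound $|E_H|$ by $\Theta(|E|/\alpha) = \Theta(\MC(n,2\alpha)/\alpha)$, which (since $\spar(n,\alpha) \ge \spar(G,\alpha)$) yields the claimed comparison $\MC(n,2\alpha) \lesssim \spar(n,\alpha)$.

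The key step is the per-matching lower bound. For each $i$, the set $L(M_i)$ is matchable in $G$ (witnessed by $M_i$ itself), so the operational definition of a sparsifier in \Cref{lem:equiv-sparsifiers} gives
\[
|N_H(L(M_i))| \;\geq\; \tfrac{1}{\alpha}\,|L(M_i)| \;=\; \tfrac{|M_i|}{\alpha}.
\]
On the other hand, every neighbor of $L(M_i)$ in $H$ is either reached via an $M_i$-edge that $H$ retained, or via an edge of $H \setminus M_i$; hence
\[
N_H(L(M_i)) \;\subseteq\; R(H \cap M_i)\,\cup\, N_{G\setminus M_i}(L(M_i)).
\]
Because $M_i$ is a matching, $|R(H\cap M_i)| = |H\cap M_i|$, and the defining property of a $2\alpha$-\rsplus\ in \Cref{dfn:rsplus} bounds $|N_{G\setminus M_i}(L(M_i))| \leq |M_i|/(2\alpha)$. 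Combining these three facts yields $|H \cap M_i| \geq |M_i|/(2\alpha)$.

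Since the $M_i$'s partition $E$ and the sets $H \cap M_i$ are pairwise disjoint,
\[
|E_H| \;\geq\; \sum_{i=1}^{k} |H \cap M_i| \;\geq\; \frac{1}{2\alpha}\sum_{i=1}^{k} |M_i| \;=\; \frac{|E|}{2\alpha} \;=\; \frac{\MC(n,2\alpha)}{2\alpha},
\]
which gives the desired bound. The only nontrivial ingredient is the observation decomposing $N_H(L(M_i))$ into contributions from inside and outside $M_i$, so I expect no serious obstacle; the entire argument is a direct application of the operational reformulation \Cref{lem:equiv-sparsifiers} combined with the defining contraction property of a \rsplus.
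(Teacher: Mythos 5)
Your decomposition of $N_H(L(M_i))$ and the resulting per-matching bound $\card{H \cap M_i} \geq \card{M_i}/(2\alpha)$ are both correct, but summing them only yields $\spar(n,\alpha) \geq \frac{1}{2\alpha}\cdot\MC(n,2\alpha)$, i.e.\ $\MC(n,2\alpha)\leq 2\alpha\cdot\spar(n,\alpha)$. This does not prove the lemma: the relation $\lesssim$ hides an \emph{absolute} constant (see the notation paragraph of \Cref{sec:prelim}), whereas your constant grows with $\alpha$. The loss is not cosmetic. This inequality is one half of the two-sided equivalence in \Cref{thm:MC-spar} that pins down $\spar(n,\Theta(\alpha))$ in terms of $\MC(n,\Theta(\alpha))$ up to $\log n$ factors; with a slack of $\alpha \approx n^{1/4}$ in the regime of \Cref{thm:rsplus}, the resulting lower bound on sparsifier size would drop from $n^{1+\Omega(\eps^2)}$ to roughly $n^{3/4}$, which is below the trivial bound of $n$ and hence vacuous.

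The missing idea is the choice of matchable set to which you apply Property~(\ref{item:operational}) of \Cref{lem:equiv-sparsifiers}. Instead of $L(M_i)$ itself, apply it to $L'_i \subseteq L(M_i)$, the set of vertices whose $M_i$-edge is \emph{not} retained in $H$. Every $H$-neighbor of such a vertex is reached by an edge outside $M_i$, so the term $R(H\cap M_i)$ disappears from your decomposition and
\[
\card{N_H(L'_i)} \;\leq\; \card{N_{G\setminus M_i}(L(M_i))} \;\leq\; \frac{\card{M_i}}{2\alpha}.
\]
Since $L'_i$ is matchable (via $M_i$), the operational definition forces $\card{L'_i} \leq \alpha\cdot\card{N_H(L'_i)} \leq \card{M_i}/2$: at least half of every $M_i$ survives in $H$, the two factors of $\alpha$ cancel, and summing over the edge-disjoint matchings gives $\spar(G,\alpha)\geq\frac12\card{E}$, i.e.\ $\MC(n,2\alpha)\leq 2\cdot\spar(n,\alpha)$. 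This is exactly the paper's argument, phrased there as a contradiction.
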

\begin{proof}
Let $G = (L, R, E)$ be an extremal $2\alpha$-\rsplus\  that contains $\MC(n,2\alpha)$ edges. We will show that any $\alpha$-approximate load-balancing sparsifier of $G$ must contain at least half of the edges of $G$; the lemma then follows as

\[
\frac12 \cdot \MC(n,2\alpha) 
= \frac{1}{2} \cdot \card{E(G)}
\leq \spar(G,\alpha) 
\leq \spar(n,\alpha).
\] 

Let $\{M_i\}_{i=1}^t$ be matchings into which $E(G)$ is partitioned according to~\Cref{dfn:rsplus}.
We argue that any $\alpha$-sparsifier $H$ of $G$ must contain at least half of the edges of \emph{every} $M_i$, and this will complete the proof of the lemma.
 
Consider any matching $M_i$ and let $L_i$ be the left endpoints and $R_i$ be the right endpoints. 
Assume, for contradiction, that $\card{M_i \cap E(H)} < \frac12 \card{M_i}$, and let $L'_i \subseteq L_i$ contain the vertices in $L_i$ whose matching edge from $M_i$ is \emph{not} present in $H$.
By our assumption, we have $\card{L'_i} > \card{L_i}/2$. 
Since $G$ is an $2\alpha$-\rsplus, it follows that
\[\card{N_{H}(L'_i)} \leq \card{N_{G \setminus M_i}(L_i)} \leq \frac{\card{L_i}}{2\alpha} < \frac{\card{L'_i}}{\alpha}.
\]
This contradicts with $H$ being an $\alpha$-approximate load-balancing sparsifier of $G$, as $L'_i$ is matchable in $G$ (see Property (\ref{item:operational}) of \autoref{lem:equiv-sparsifiers}).
\end{proof}

The rest of this section is to prove the other direction of \autoref{thm:MC-spar}.
To do so, we will introduce an LP-relaxation of $\spar(G,\alpha)$, and use its primal to relate to $\spar(G,\alpha)$ and use its dual to relate to $\MC(G,\frac{\alpha}{2})$.

\subsection{Linear Programming Relaxation for Load-Balancing Sparsification}

The following is the primal LP that captures the problem of finding an $\alpha$-approximate load-balancing sparsifier of $G$ with the minimum number of edges.

\begin{definition}[Primal LP] \label{def:primal}
The Primal LP is defined for a bipartite graph $G = (L, R, E)$ with $|L| = n$ and a parameter $\alpha \geq 1$. 
There is a variable $x_e$ for every edge $e \in E$. 
Define a pair of sets $X\subseteq L,Y \subseteq R$ to be \emph{contracting} if $X$ is matchable and $\card{Y} \leq \frac{1}{2\alpha} \cdot \card{X}$. 
The LP will have a constraint for every contracting pair $(X,Y)$. 
For input $G$ and $\alpha$, the primal LP is defined as
\begin{align*}
\LP(G,\alpha) ~:=~ \textrm{minimize }  &~~~\sum_{e \in E} x_e  & 
\\
\st &\quad \sum_{e \in E(X,R \setminus Y)} x_e \geq \frac{|X|}{2}  \qquad  \text{for all contracting pair } X,Y  
\\
&\quad x_e \geq 0 \qquad \qquad \qquad \qquad \text{for all edge } e \in E.
\end{align*}
\end{definition}

For intuition about the primal LP,
think of $x_e$ as representing whether edge $e$ is included in the sparsifier $H$. We want $H$ to satisfy Property (\ref{item:operational}) of \autoref{lem:equiv-sparsifiers}, since this is equivalent to $H$ being a load-balancing sparsifier. For any matchable set $X \subseteq L$, this property requires that $X$ should \emph{not} contract to some small set $Y$. In other words, for every small set $Y$, there should be many edges from $X$ to $R \setminus Y$, which is precisely the main constraint of the LP.
 
The following lemma shows that $\LP(G,\alpha)$ is a $O(\log n)$-approximation of $\spar(G,\alpha)$. 
The proof is by a standard randomized rounding argument.

\begin{lemma}[LP and \spar] \label{lem:spar-LP}
For a bipartite graph $G = (L, R,E)$ with $\card{L} = n$ and $\card{R} \leq n^2$, 
\[
\LP(G,\alpha) \leq \spar(G,\alpha) \leq 20\cdot\ln(n)\cdot\LP\big(G,\frac{\alpha}{2}\big).
\]
\end{lemma}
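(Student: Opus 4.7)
The plan is to prove the two bounds separately. For the lower bound $\LP(G,\alpha) \leq \spar(G,\alpha)$, I will take an optimal $\alpha$-sparsifier $H$ of $G$ and set $x_e := \mathbf{1}[e \in H]$; the objective value is $|E(H)| = \spar(G,\alpha)$, so it suffices to verify feasibility. On a contracting pair $(X,Y)$, consider the subset $X' := \{v \in X : N_H(v) \subseteq Y\}$. Since $X'$ sits inside the matchable set $X$, it too is matchable, so the operational characterization (part (\ref{item:operational}) of \Cref{lem:equiv-sparsifiers}) gives $|N_H(X')| \geq |X'|/\alpha$. Combining this with $N_H(X') \subseteq Y$ and $|Y| \leq |X|/(2\alpha)$ forces $|X'| \leq |X|/2$, so at least $|X|/2$ vertices of $X$ each contribute an $H$-edge to $R \setminus Y$, establishing $\sum_{e \in E(X, R \setminus Y)} x_e \geq |X|/2$.

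For the upper bound $\spar(G,\alpha) \leq 20 \ln(n) \cdot \LP(G, \alpha/2)$, the plan is randomized rounding. Let $x^*$ be an optimal solution of $\LP(G, \alpha/2)$ and construct $H$ by including each edge $e$ independently with probability $p_e := \min\{1,\, 10 \ln(n) \cdot x_e^*\}$. Linearity of expectation yields $\Ex[|E(H)|] \leq 10 \ln(n) \cdot \LP(G, \alpha/2)$, so Markov's inequality gives $|E(H)| \leq 20 \ln(n) \cdot \LP(G, \alpha/2)$ with probability at least $1/2$. It then suffices to show $H$ is an $\alpha$-sparsifier with probability $1-o(1)$, because a union bound over the two failure events will exhibit an outcome satisfying both conditions.

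For the sparsifier property, I will use the contrapositive of part (\ref{item:operational}) of \Cref{lem:equiv-sparsifiers}: $H$ fails to be an $\alpha$-sparsifier exactly when there exist a matchable $X \subseteq L$ and some $Y \subseteq R$ with $|Y| < |X|/\alpha$ and $E_H(X, R \setminus Y) = \emptyset$. Any such $(X,Y)$ is a contracting pair for $\LP(G, \alpha/2)$, so the LP constraint and the rounding rule together give $\sum_{e \in E(X, R \setminus Y)} p_e \geq 5 |X| \ln n$ (unless some edge is capped at $p_e = 1$, in which case the bad event has probability $0$). By independence,
\[
\Pr[E_H(X, R \setminus Y) = \emptyset] \leq \prod_{e} (1 - p_e) \leq \exp\Bigl(-\sum_e p_e\Bigr) \leq n^{-5|X|}.
\]
The number of such pairs with $|X| = k$ is at most $\binom{n}{k} \cdot \sum_{t < k/\alpha} \binom{|R|}{t} \lesssim k \cdot n^{2k}$, using $\alpha \geq 2$ and $|R| \leq n^2$, so a union bound yields $\Pr[H \text{ is not an }\alpha\text{-sparsifier}] \lesssim \sum_{k \geq 1} k \cdot n^{-3k} = o(1)$.

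The main obstacle will be calibrating the rounding constant. Because the union bound has to absorb the $\binom{|R|}{k/\alpha}$ factor, the exponent $c \cdot |X|/2$ coming from the LP needs to dominate roughly $2|X| \ln n$, forcing $c > 4$; meanwhile, $c$ directly scales $\Ex[|E(H)|]$, so choosing $c = 10$ and using Markov exactly lands on the stated factor $20$. A sharper Chernoff bound on $|E(H)|$ could tighten the constants further but is not needed for the stated $O(\log n)$ gap.
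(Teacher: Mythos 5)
Your proposal is correct and matches the paper's proof essentially line for line: the same $\{0,1\}$ primal solution via $X' = \{v \in X : N_H(v) \subseteq Y\}$ for the lower bound, and the same randomized rounding with $p = 10\ln n$, Markov for the size, and a union bound over matchable $X$ and small $Y$ for the upper bound. The only cosmetic difference is that you invoke $\alpha \geq 2$ to tighten the pair count to $k n^{2k}$, whereas the paper uses the cruder $n^k |R|^k \leq n^{3k}$ with no assumption on $\alpha$, which is still absorbed by the $n^{-5k}$ decay.
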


\begin{proof}
We first show that $\LP(G,\alpha) \leq \spar(G,\alpha)$, which says that $\LP(G,\alpha)$ is a relaxation of $\spar(G,\alpha)$. 
Let $H$ be an $\alpha$-sparsifier of $G$ with $\card{E(H)} = \spar(G,\alpha)$. 
We will show that the following is a feasible solution to the LP: $x_e = 1$ if $e \in H$ and $x_e=0$ otherwise. 
Consider any contracting pair of sets $(X,Y)$, i.e.~$X$ is matchable and $|Y| \leq \frac{1}{2\alpha} |X|$.
Define 
\[
X' := \set{v \in X \mid N_H(v) \subseteq Y}.
\]
We must have $\card{X'} \leq \frac12 |X|$, otherwise $|Y| < \frac{1}{\alpha} |X'|$ and $X'$ would violate Property (\ref{item:operational}) of~\Cref{lem:equiv-sparsifiers} for an $\alpha$-approximate load-balancing sparsifier. 
Thus, in $H$, at least $\card{X} - \card{X'} \geq \card{X}/2$ vertices of $X$ have a neighbor in $R \setminus Y$; so, the LP constraint for the contracting pair $(X,Y)$ is satisfied.

We next prove that $\spar(G,\alpha) \leq 20 \cdot \ln(n) \cdot \LP(G,\frac{\alpha}{2})$, by showing that a fractional solution to $\LP(G,\frac{\alpha}{2})$ can be rounded to an integral solution to $\spar(G,\alpha)$ with at most $20 \cdot \ln(n) \cdot \LP(G,\frac{\alpha}{2})$ edges.
Given a feasible solution $x_e$ to the $\LP(G,\frac{\alpha}{2})$, we construct an $\alpha$-approximate load-balancing sparsifier $H$ by a simple randomized rounding procedure as follows. 
Define $p = 10\ln(n)$. 
For every $e \in G$, add $e$ to $H$ with probability $\min\{p \cdot x_e, 1\}$. 
Note that $H$ has at most $p \sum_{e} x_e = p \cdot \LP(G,\frac{\alpha}{2})$ edges in expectation. 
So, by Markov's inequality, 
\[
\Pr\Big[\card{E(H)} \leq 20 \cdot \ln(n) \cdot \LP\big(G,\frac{\alpha}{2} \big)\Big] \geq \frac12.
\] 
We complete the proof by arguing that, with high probability, $H$ satisfies Property (\ref{item:operational}) of \autoref{lem:equiv-sparsifiers},
which would imply that an $\alpha$-sparsifier with the desired number of edges exists by the probabilistic method. 
We need to show that, with high probability, for any matchable $X \subseteq L$ and $Y \subseteq R$ with $\card{Y} < \frac{1}{\alpha} \cdot \card{X}$, we have $E_H(X,R \setminus Y) \neq \emptyset$. 
Consider any such sets $X,Y$ and denote $k \defeq \card{X}$. 
As we are considering $\LP(G,\frac{\alpha}{2})$, the pair $(X,Y)$ is a contracting pair.
So, by the LP constraint, we have $\sum_{e \in E(X,R \setminus Y)} x_e \geq {k}/{2}$.
If any edge $e$ in $E(X,R \setminus Y)$ has $p \cdot x_e \geq 1$, then this edge $e$ will be added to $H$ with probability $1$ and we are done.
Henceforth we assume that $px_e < 1$ for all $e \in E(X, R \setminus Y)$. 
Since edge $e$ is sampled independently with probability $p x_e$, it follows that
\[
\Pr\big[E_H(X,R \setminus Y) = \emptyset\big] 
= \prod_{e \in E(X, R\setminus Y)} (1-px_e) 
\leq \exp\bigg(-\sum_{e \in E(X, R \setminus Y)} px_e\bigg) 
\leq \exp\Big(-\frac{pk}{2}\Big) 
= n^{-5k}.
\] 
Fixing any particular $k = \card{X}$, the number of such set-pairs $(X,Y)$ is clearly at most $n^k \cdot |R|^k \leq n^{3k}$, using the assumption of the lemma that $\card{R} \leq n^2$. 
Therefore, by a union bound, we have $E_H(X,R \setminus Y) \neq \emptyset$ for $\card{X} =k$ with probability at least $1-n^{-2k} \geq 1-n^{-2}$. 
Applying another union bound over all possible $k$ yields probability at least $1-\frac1n$ for all such set-pairs $(X,Y)$.
\end{proof} 

Our next goal is to relate $\LP(G,\alpha)$ to $\MC(G,\alpha)$ (recall \autoref{dfn:MC}), for which we consider the dual of the above linear program.

\begin{definition}[Dual LP] \label{def:dual}
The dual LP is again defined for a bipartite graph $G = (L, R, E)$ with $|L| = n$ and a parameter $\alpha \geq 1$. 
There is a variable $y_{X,Y}$ for every contracting set-pair $(X,Y)$, 
and a constraint for every edge $e \in E$.
\begin{align*}
\LP(G,\alpha) ~:=~ \textrm{minimize }  &~~~\frac{1}{2}\sum_{\textrm{contracting~} (X,Y)} |X| \cdot y_{X,Y}   & 
\\
\st &\quad \sum_{\substack{\textrm{contracting~} (X,Y), \\ e \in E_G(X,R\setminus Y) }} y_{X,Y} \leq 1 \qquad \textrm{ for all } e \in E  \\
&\quad y_{X,Y} \geq 0 \qquad \qquad \qquad \qquad ~~~\text{for all contracting pair } (X,Y)
\end{align*}
\end{definition}

It is straightforward to check that the LP in \autoref{def:dual} is indeed the dual program of the LP in \autoref{def:primal}, and so by the strong LP duality theorem they have the same objective value.
The following lemma shows that one can construct a \rsplus\ from the dual LP solution.

\begin{lemma}[LP and \rsplus] \label{lem:LP-MC}
For any $G = (L, R, E)$ with $\card{L} = n$ and $\alpha \geq 2$, 
\[
\LP(G,\alpha) \leq 20 \cdot \MC(G,\alpha).
\]
\end{lemma}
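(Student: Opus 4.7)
The plan is to start from an optimal dual LP solution $y$ and use randomized rounding to construct an $\alpha$-\rsplus subgraph $H \subseteq G$ with at least $\LP(G,\alpha)/20$ edges; since by strong LP duality the dual maximum equals $\LP(G,\alpha)$, the lemma follows. For every contracting pair $(X,Y)$ with $y_{X,Y} > 0$, I will first fix a matching $M_{X,Y} \subseteq E_G(X, R \setminus Y)$ saturating at least $|X| - |Y| \geq |X|(1 - \tfrac{1}{2\alpha}) \geq |X|/2$ vertices of $X$: such a matching exists because $X$ is matchable in $G$ and at most $|Y|$ edges of any matching saturating $X$ can land in $Y$. The same argument also shows that $E_G(X, R \setminus Y) \neq \emptyset$ for any contracting pair, so the LP constraint at any of its edges forces $y_{X,Y} \leq 1$.

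Next, I will sample each contracting pair independently with probability $p \cdot y_{X,Y}$ for a small constant $p$ to be chosen, and set $H_0 := \bigcup_{(X,Y) \in \mathcal{S}} M_{X,Y}$, where $\mathcal{S}$ is the random sample. Two consequences of the dual constraint are crucial. First, each edge $e \in E(G)$ belongs to $H_0$ with probability at most $p$: if $e \in M_{X,Y}$ then $e \in E_G(X, R \setminus Y)$, so $\Pr[e \in H_0] \leq \sum_{(X,Y): e \in M_{X,Y}} p \cdot y_{X,Y} \leq p \cdot \sum_{(X,Y): e \in E_G(X, R \setminus Y)} y_{X,Y} \leq p$. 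Second, the expected total matching contribution to $H_0$ is $\sum_{(X,Y)} p \cdot y_{X,Y} \cdot |M_{X,Y}| \geq (p/2) \sum_{(X,Y)} y_{X,Y} \cdot |X| = p \cdot \LP(G,\alpha)$.

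The main obstacle is to verify the $\alpha$-\rsplus condition $|N_{H_0 \setminus M_{X,Y}}(L(M_{X,Y}))| \leq |L(M_{X,Y})|/\alpha$ for each sampled pair $(X,Y)$. I will split the left-hand side into its intersections with $Y$ and with $R \setminus Y$: the first part is automatic, since it has size at most $|Y| \leq |X|/(2\alpha) \leq |L(M_{X,Y})|/\alpha$. The hard part is the intersection with $R \setminus Y$: any such external right-vertex $r$ must lie on a matching edge $(v, r) \in M_{X^*, Y^*}$ of some other sampled pair with $v \in L(M_{X,Y})$, and since $(v, r) \in E_G(X^*, R \setminus Y^*)$, the LP constraint at $(v, r)$ limits the total $y$-mass on candidate pairs that can introduce $r$. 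I will call a sampled pair \emph{good} if the \rsplus condition holds at its matching, use a Markov-style argument together with the per-edge LP bounds to show that a constant fraction of sampled pairs are good in expectation, and take $H$ to be the union of matchings of the good sampled pairs. Calibrating $p$ to balance expected density against the probability of goodness then yields $|E(H)| \geq \LP(G,\alpha)/20$ with positive probability, which completes the proof by the probabilistic method.
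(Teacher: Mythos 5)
Your overall strategy---strong duality, independent sampling of contracting pairs with probability $p\cdot y_{X,Y}$, matchings pre-fixed to avoid $Y$, and the per-edge dual constraint to control collisions---is exactly the paper's, and everything up to and including the expected-density calculation and the bound $y_{X,Y}\le 1$ is fine. The gap is in the last step, where you declare a sampled pair ``good'' if the whole matching $M_{X,Y}$ already satisfies the contraction condition in $H_0$, and claim a Markov-style argument shows a constant fraction of sampled pairs are good. The dual constraints do not support this. What they give you is that each \emph{fixed candidate edge} $(v,r)$ with $v\in L(M_{X,Y})$, $r\in R\setminus Y$ appears in $H_0\setminus M_{X,Y}$ with probability at most $p$ (summing $y_{P,Q}$ over the pairs whose matching contains that particular edge). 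But the number of such candidate edges is $|E_G(X,R\setminus Y)|$, which is unbounded in terms of $|X|$; summing the per-edge bounds only yields $\Exp[\#\text{external neighbors}]\le p\cdot|E_G(X,R\setminus Y)|$, which can vastly exceed the available budget of roughly $|X|/(4\alpha)$ (note also that with your weaker bound $|L(M_{X,Y})|\ge|X|/2$ the budget left over after charging $|Y|$ can be zero). The per-edge dual constraint at $(v,r)$ bounds the mass of pairs that can introduce \emph{that edge}, not the number of distinct right-vertices that can be introduced across all candidate edges, so the Markov argument on the neighborhood size does not go through.

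The paper's fix is to shift the pruning from whole matchings to individual edges: call an edge \emph{overloaded} if it deviates from two distinct sampled pairs, and delete all overloaded edges from every $M_{X,Y}$. For a \emph{fixed} edge $e\in M_{X,Y}$, the probability that some other sampled pair makes it overloaded is at most $p\sum_{(P,Q)\ne(X,Y):\,e\text{ deviates from }(P,Q)}y_{P,Q}\le p$ --- here the per-edge dual constraint applies correctly because there is only one edge in play --- so by Markov each matching retains at least $\tfrac12|X|$ edges with probability at least $\tfrac12$, preserving the density count. After this pruning the contraction property holds \emph{deterministically}: any surviving edge of $H\setminus M''_{X,Y}$ from $L(M''_{X,Y})$ to $R\setminus Y$ would deviate from both $(X,Y)$ and its own pair, hence be overloaded, a contradiction; so all external neighbors lie in $Y$, of size at most $|X|/(2\alpha)$. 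This same device also resolves a point your write-up leaves unaddressed: the matchings in a \rsplus\ must \emph{partition} $E(H)$, and an edge shared by two sampled matchings is precisely an overloaded edge, so edge-disjointness comes for free after the pruning.
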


The proof is again by a randomized rounding procedure, but the details are more involved, and we dedicate the next subsection to it.

We end this subsection by showing that the hard direction of \Cref{thm:MC-spar} follows immediately from \Cref{lem:spar-LP} and \Cref{lem:LP-MC}.

\begin{proofof}{\Cref{thm:MC-spar} assuming~\Cref{lem:LP-MC}}
One direction is already proved in \autoref{lem:easy}.
For the other direction, we first apply \autoref{claim:truncate-servers} to reduce the problem to a bipartite graph $G = (L, R, E)$ with $|L|=n$ and $|R| \leq n^2$; this claim is quite trivial, so we defer the formal statement and proof to \Cref{sec:reducing-num-servers} in the appendix.
Then, \Cref{lem:spar-LP} and \Cref{lem:LP-MC} imply that
\[
\spar(G,\alpha) 
\leq 20 \cdot \ln(n) \cdot \LP\big(G,\frac{\alpha}{2}\big) 
\leq 400 \cdot \ln(n) \cdot \MC(G,\frac{\alpha}{2}),
\]
which also implies that $\spar(n,\alpha) \leq 400\cdot\ln(n)\cdot\MC(n,\alpha/2)$.
\end{proofof}

\subsection{Constructing \rsplus\  from Dual Solution}

The goal of this subsection is to prove \autoref{lem:LP-MC}.
We first provide some intuitions of the proof in \autoref{sec:ideas} by considering the ideal case when all dual variables $y_{X,Y}$ are integral. We then present the general construction in \autoref{sec:construction-from-fractional}
 and the analysis in \autoref{sec:analysis}.

\subsubsection{Proof Ideas} \label{sec:ideas}

We start with a simple definition and a simple observation.

\begin{definition} \label{dfn:deviates}
Given a pair of sets $(X,Y)$ with $X \subseteq L$ and $Y \subseteq R$, we say that an edge $(u,v)$ \textbf{deviates from} $(X,Y)$ if $u \in X$ and $v \in R \setminus Y$.
\end{definition}

To illustrate some proof ideas, 
we consider the ideal case when all dual variables $y_{X,Y}$ are integral. 
By the following observation, each $y_{X,Y}$ is either $0$ or $1$.

\begin{observation}
	\label{obs:less-than-one}
Every dual variable $y_{X,Y}$ is at most $1$.
\end{observation}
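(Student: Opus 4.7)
The plan is to exhibit, for any contracting pair $(X,Y)$, a single edge $e \in E$ that deviates from $(X,Y)$; the dual constraint associated to that edge then contains $y_{X,Y}$ as one of its nonnegative summands, so it immediately forces $y_{X,Y} \leq 1$.

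To produce such an edge, I would unpack the definition of ``contracting.'' By \Cref{def:primal}, $X$ is matchable in $G$ and $\card{Y} \leq \tfrac{1}{2\alpha}\cdot\card{X}$. Since the statement is being used inside the setting $\alpha \geq 1$, this already gives $\card{Y} < \card{X}$. Let $M$ be any matching in $G$ that saturates $X$, so $\card{R(M)} = \card{X} > \card{Y}$. By cardinality, at least one edge $(u,v) \in M$ has $u \in X$ and $v \in R \setminus Y$, i.e., $(u,v)$ deviates from $(X,Y)$ in the sense of \Cref{dfn:deviates}.

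Now I would invoke the dual constraint corresponding to this edge $e = (u,v)$ in \Cref{def:dual}:
\[
\sum_{\substack{\textrm{contracting~} (X',Y'),\\ e \in E_G(X', R \setminus Y')}} y_{X',Y'} \leq 1.
\]
Since $(X,Y)$ is one of the contracting pairs from which $e$ deviates, $y_{X,Y}$ appears in this sum, and all other summands are nonnegative by the dual feasibility constraint $y_{X',Y'} \geq 0$. Therefore $y_{X,Y} \leq 1$.

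There is no real obstacle: the only thing to be careful about is ensuring $\card{Y} < \card{X}$ strictly so that pigeonhole applied to a matching saturating $X$ yields an edge with its right endpoint outside $Y$; this is immediate from $\alpha \geq 1$ and the contracting condition $\card{Y} \leq \tfrac{1}{2\alpha}\card{X}$ (in fact as soon as $X \neq \emptyset$, which we may assume since otherwise the constraint indexed by $(X,Y)$ is vacuous).
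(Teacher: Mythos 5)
Your proof is correct and is essentially the paper's argument: both use that $X$ is matchable and $\card{Y} < \card{X}$ to find, by pigeonhole, an edge of the saturating matching that deviates from $(X,Y)$, and then invoke that edge's dual constraint. You phrase it directly while the paper phrases it as a contradiction, but the content is identical.
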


\begin{proof}
Suppose for contradiction that $y_{X,Y} > 1$. By the definition of contracting pair $(X,Y)$, $X$ is matchable by some matching $M$ and $X$ is larger than $Y$.
Thus, there must be some edge $(u,v) \in M$ that deviates from $(X,Y)$, 
but then the dual constraint of $(u,v)$ is violated.
\end{proof}

\paragraph{Intuition: Construction from Integral Dual Solution.} 
We create a \rsplus\ from a feasible dual $\{0,1\}$-solution as follows.
Let $\mathP$ contain all contracting pairs $(X,Y)$ for which $y_{X,Y} = 1$. 
For every $(X,Y) \in \mathP$, let $M_{X,Y}$ be some matching from $X$ to $R$, 
which exists by the definition of contracting pairs (see \autoref{def:primal})\footnote{To avoid confusion, note that $Y$ does not correspond to the right endpoints of $M_{X,Y}$; instead, $Y$ relates the matching to the dual variable $y_{X,Y}$.}.
Then, we remove from $M_{X,Y}$ all edges that are incident to $Y$, 
and let $M'_{X,Y}$ denote the remaining matching. 
Let $H$ be the union of all the $M'_{X,Y}$.
Clearly, $H$ is a subgraph of $G$, and that is the complete construction in this simpler setting.

We claim that $H$ is an $\alpha$-\rsplus\ with $\gtrsim \LP(G,\alpha)$ edges.
First, we lower bound the number of edges in $H$.
Since $\alpha \geq 2$ by the assumption of \autoref{lem:LP-MC} and $Y \leq \frac{1}{2\alpha}\card{X}$ by the definition of a contracting pair, it follows that $\big| M'_{X,Y} \big| = |X| - |Y| \geq \frac34 \card{X}$.
In the full proof in \autoref{sec:analysis}, we will show that the matchings $M'_{X,Y}$ are all edge-disjoint.
Therefore,
\[
\LP(G,\alpha) 
= \frac{1}{2}\sum_{(X,Y) \in \mathP} \card{X} 
\leq \frac{2}{3} \sum_{(X,Y) \in \mathP} \card{M'_{X,Y}} 
= \frac{2}{3} \cdot |E(H)|,
\]
where the first equality is by the observation that $y_{X,Y} \in \{0,1\}$,
and the last equality is by the fact that the matchings $M'_{X,Y}$ are edge-disjoint.

It remains to argue that $H$ satisfies the properties of an $\alpha$-\rsplus\ in \autoref{dfn:rsplus}. 
Consider some matching $M := M'_{X,Y}$ in $H$.
Let $X' \subseteq X$ be the left endpoints $L(M)$. 
Recall that $\card{X'} \geq \frac{3}{4} \card{X}$. 
Suppose for contradiction that $X'$ has more than $\frac{1}{\alpha} \card{X'}$ neighbors in $H \setminus M$. 
Then 
\[
\card{N_{H \setminus M}(X')} > \frac{1}{\alpha} \card{X'} \geq \frac{3}{4\alpha}\card{X}.
\]
Since $\card{Y} \leq \frac{1}{2\alpha}\card{X}$, 
there must be some edge $(v,z)$ in $H \setminus M$ such that $v \in X' \subseteq X$ and $z \notin Y$,
and so $(v,z)$ deviates from $(X,Y)$. 
The edge $(v,z)$ must come from some other matching $M'_{P,Q}$ of $H$. 
By our construction of matchings, we must have $y_{P,Q} =1$,
and also none of the edges of $M'_{P,Q}$ are incident to $Q$.
This implies that the edge $(v,z)$ also deviates from $(P,Q)$ with $v \in P$ and $z \notin Q$. 
So, the edge $(v,z)$ deviates from both $(X,Y)$ and $(P,Q)$ with $y_{X,Y} = y_{P,Q} = 1$, but this means that the dual constraint for $(v,z)$ is violated, arriving at our contradiction.
We conclude that $H$ is an $\alpha$-\rsplus.

\subsubsection{Construction from Fractional Dual Solution} \label{sec:construction-from-fractional}

In general, we have a \emph{fractional} solution to the Dual LP of value $\LP(G,\alpha)$. 
We will use randomized rounding to construct a subgraph $H$ of $G$ and argue that $H$ is the desired \rsplus\ with positive probability. 
We shall note that unlike the argument for the Primal LP in~\Cref{lem:spar-LP}, here, the Dual LP has a very large integrality gap and thus the rounding
should be \emph{bicriteria} (this will become more clear shortly).

We sample every contracting pair $(X,Y)$ with probability $\frac{1}{10} \cdot y_{X,Y}$ and let $\mathP$ be the set of all sampled contracting pairs. 
For every $(X,Y) \in \mathP$, let $M_{X,Y}$ be a perfect matching from $X$ to $R$ in $G$; such a matching must exist by the definition of contracting pairs in \autoref{def:primal}. 
We will fix in advance a matching for every matchable set $X$, so that the choice of edges in matching $M_{X,Y}$ does not depend on the dual variables or any of our random choices.
As before, for every $M_{X,Y}$, remove all edges in $M_{X,Y}$ that are incident to $Y$, and let $M'_{X,Y}$ be the remaining matching. 
In the ideal case above,  
the union of these $M'_{X,Y}$ is a \rsplus.
In the general case, however, this is not necessarily true
and we will do the following post-processing step to obtain a \rsplus.

\begin{definition} \label{def:overloaded}
We say that an edge $(u,v) \in G$ is \textbf{overloaded} if there exist two different set-pairs $(X,Y) \in \mathP$ and $(P,Q) \in \mathP$ such that $(u,v)$ deviates from both $(P,Q)$ and $(X,Y)$.  
\end{definition}

In the postprocessing step,
for every $M'_{X,Y}$, we remove all edges in $M'_{X,Y}$ that are overloaded, and let $M''_{X,Y} \subseteq M'_{X,Y}$ be the resulting matching.
We say $M''_{X,Y}$ is \textbf{good} if $\big| M''_{X,Y} \big| \geq \frac12 \card{X}$.
Our final graph $H$ will consist of the union of all good $M''_{X,Y}$. 
The construction is summarized in~\Cref{alg:contractor}.

\begin{algorithm}[ht]\caption{Construction of \rsplus\ from Dual LP Solution}
 \label{alg:contractor}
  {\bf Input:} a solution $\{y_{X,Y}\}_{\textrm{contracting}~(X,Y)}$ to the dual LP with objective value $\LP(G,\alpha)$. \\
\begin{enumerate}
	\item Sample every contracting pair $(X,Y)$ into $\mathP$ with probability $\frac{1}{10} \cdot y_{X,Y}$.
	\item {\bf For} every $(X,Y) \in \mathP$ do:
	\begin{enumerate}
		\item Let $M_{X,Y}$ be an arbitrary matching from $X$ to $R$.
		\item\label{item:prime} Construct $M'_{X,Y} \subseteq M_{X,Y}$ by removing from $M_{X,Y}$ all edges that are incident to $Y$.
		\item\label{item:doubleprime} Construct $M''_{X,Y} \subseteq M'_{X,Y}$ by removing all overloaded edges from $M'_{X,Y}$ as defined in \autoref{def:overloaded}. 
Label $M''_{X,Y}$ as \emph{good} if $\big| M''_{X,Y} \big| \geq \frac12 \card{X}$.
	\end{enumerate}
\end{enumerate}
{\bf Output:} the graph $H$ that is the union of all the \emph{good} matchings $M''_{X,Y}$.
\end{algorithm}

\subsubsection{Analysis} \label{sec:analysis}

As in \autoref{sec:ideas}, we will lower bound the number of edges in $H$, and prove that $H$ is an $\alpha$-\rsplus.
The following claim will help us upper bound the number of edges that we remove in the post-processing step.

\begin{claim} \label{claim:overloaded}
For an edge $(u,v)$ in a matching $M'_{X,Y}$, 
\[
\Pr\big[(u,v) \textrm{ is overloaded~}\big] \leq \frac{1}{10}.
\]
\end{claim}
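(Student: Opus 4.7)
The plan is to unpack the definition of "overloaded" and apply a union bound combined with the dual feasibility constraint for the edge $(u,v)$. First, I would observe that since $(u,v) \in M'_{X,Y}$, by the construction in step \ref{item:prime} of Algorithm \ref{alg:contractor}, we have $u \in X$ and $v \in R \setminus Y$, so $(u,v)$ already deviates from $(X,Y)$ in the sense of \autoref{dfn:deviates}. Thus the event that $(u,v)$ is overloaded (given that $(X,Y)$ was sampled and $(u,v)$ survives into $M'_{X,Y}$) is exactly the event that some \emph{other} contracting pair $(P,Q) \neq (X,Y)$ gets sampled into $\mathP$ with $u \in P$ and $v \in R \setminus Q$.

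Next, I would exploit the fact that in step 1 of Algorithm \ref{alg:contractor}, the different contracting pairs are sampled independently, so the indicator that $(P,Q) \in \mathP$ for $(P,Q) \neq (X,Y)$ is independent of the event $(X,Y) \in \mathP$. Applying a union bound over all such pairs,
\[
\Pr\big[(u,v) \text{ is overloaded}\big] \;\leq\; \sum_{\substack{(P,Q) \text{ contracting},\, (P,Q) \neq (X,Y) \\ (u,v) \text{ deviates from } (P,Q)}} \Pr\big[(P,Q) \in \mathP\big] \;=\; \frac{1}{10} \sum_{\substack{(P,Q) \text{ contracting},\, (P,Q) \neq (X,Y) \\ (u,v) \text{ deviates from } (P,Q)}} y_{P,Q}.
\]

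Finally, I would invoke the dual feasibility constraint corresponding to the edge $(u,v)$ from \autoref{def:dual}, namely $\sum_{(X',Y')\,:\,(u,v) \in E_G(X',R\setminus Y')} y_{X',Y'} \leq 1$. Since an edge $(u,v)$ lying in $E_G(X',R\setminus Y')$ is precisely the condition that $(u,v)$ deviates from $(X',Y')$, this constraint upper bounds the sum above (even after dropping the $(X,Y)$ term) by $1$, yielding the desired $\frac{1}{10}$ bound.

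I do not expect a serious obstacle here: the argument is a clean union bound combined with linearity plus the dual constraint. The only point that requires a bit of care is the conditioning --- formally, the claim is a statement about the conditional probability given $(u,v) \in M'_{X,Y}$, but because the sample $\mathP$ is generated by independent coin flips per contracting pair and $M_{X,Y}$ is fixed in advance (as stipulated in \autoref{sec:construction-from-fractional}), conditioning on $(X,Y) \in \mathP$ does not alter the marginal sampling probabilities of the other contracting pairs, so the union bound argument goes through unchanged.
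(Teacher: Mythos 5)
Your proposal is correct and is essentially the same argument as the paper's: a union bound over all contracting pairs $(P,Q) \neq (X,Y)$ from which $(u,v)$ deviates, using the independence of the sampling to reduce to the marginal probabilities $\frac{1}{10}y_{P,Q}$, and then invoking the dual feasibility constraint for edge $(u,v)$. Your closing remark about the conditioning being harmless because the per-pair coin flips are independent is a slightly more explicit version of the paper's parenthetical ``and in particular independent from $(X,Y)$,'' but it is the same idea.
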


\begin{proof}
Consider all contracting pairs $(P,Q) \neq (X,Y)$ such that $(u,v)$ deviates from $(P,Q)$. 
The edge $(u,v)$ can be overloaded only if one of these $(P,Q)$ is also sampled into $\mathcal{P}$. 
Since each $(P,Q)$ is sampled independently with probability $\frac{1}{10}y_{P,Q}$ (and in particular independent from $(X,Y)$), 
it follows from the union bound and the dual constraint that
\[
\Pr\big[(u,v) \textrm{ is overloaded~}\big] 
\leq \sum_{(P,Q) \neq (X,Y)~\mid~(u,v) \textrm{ deviates from } (P,Q)} \frac{1}{10} \cdot y_{P,Q} \leq \frac{1}{10}. \qed
\]

\end{proof}	

By the same argument as in \autoref{sec:ideas}, each $\big|M'_{X,Y} \big| \geq \frac34 \card{X}$.
So, it follows from \autoref{claim:overloaded} and Markov's inequality that many $M''_{X,Y}$ are good.

\begin{observation} \label{obs:probably-good}
Every $M''_{X,Y}$ is good with probability at least $\frac12$. 
\end{observation}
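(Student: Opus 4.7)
The plan is to prove Observation 4.5 by combining the per-edge overload bound of Claim 4.4 with Markov's inequality applied to the total number of overloaded edges in $M'_{X,Y}$. The argument should be conditional on the event $(X,Y) \in \mathcal{P}$, since otherwise $M'_{X,Y}$ is not defined; all probabilities below are implicitly with respect to the remaining randomness from sampling other contracting pairs.

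First I would record a deterministic lower bound on $|M'_{X,Y}|$. Because $M_{X,Y}$ is a matching that saturates $X$, we have $|M_{X,Y}| = |X|$, and step (b) of Algorithm 1 removes at most $|Y|$ edges (those incident to a vertex of $Y$). Since $(X,Y)$ is contracting and the lemma's hypothesis gives $\alpha \geq 2$, we have $|Y| \leq |X|/(2\alpha) \leq |X|/4$, so $|M'_{X,Y}| \geq \tfrac{3}{4}|X|$.

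Next, let $Z$ denote the number of overloaded edges in $M'_{X,Y}$. By Claim 4.4, each edge $(u,v) \in M'_{X,Y}$ is overloaded with probability at most $1/10$, so by linearity of expectation
\[
\Ex[Z] \leq \frac{|M'_{X,Y}|}{10} \leq \frac{|X|}{10}.
\]
Markov's inequality then yields
\[
\Pr\Bigl[Z > \tfrac{|X|}{4}\Bigr] \leq \frac{|X|/10}{|X|/4} = \frac{2}{5} < \frac{1}{2}.
\]

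Finally, whenever $Z \leq |X|/4$, we have $|M''_{X,Y}| = |M'_{X,Y}| - Z \geq \tfrac{3}{4}|X| - \tfrac{1}{4}|X| = \tfrac{1}{2}|X|$, so $M''_{X,Y}$ is good. Thus $M''_{X,Y}$ is good with probability at least $1/2$, as desired. I do not anticipate any obstacle here: the heavy lifting was already done in Claim 4.4 via the dual feasibility constraint, and the only thing left is a one-line Markov calculation together with the $|Y| \leq |X|/(2\alpha)$ slack built into the definition of a contracting pair.
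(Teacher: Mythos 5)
Your proof is correct and follows exactly the route the paper intends: the deterministic bound $|M'_{X,Y}|\geq \tfrac34|X|$ from the contracting-pair condition with $\alpha\geq 2$, the per-edge overload probability of $\tfrac1{10}$ from Claim~\ref{claim:overloaded}, and Markov's inequality on the number of overloaded edges. Your handling of the conditioning on $(X,Y)\in\mathP$ is also consistent with the independence noted in the proof of Claim~\ref{claim:overloaded}.
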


This allows us to lower bound the number of edges in $H$.

\begin{claim} \label{claim:LP-num-edges}
With positive probability, 
\[
\card{E(H)} \geq \frac{1}{20} \cdot LP(G,\alpha).
\]
\end{claim}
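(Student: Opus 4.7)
The plan is to compute $\mathbb{E}[|E(H)|]$ via linearity of expectation over all contracting pairs $(X,Y)$, show it is at least $\tfrac{1}{20}\LP(G,\alpha)$, and then invoke the standard probabilistic argument that any nonnegative random variable with expectation at least $c$ attains value at least $c$ with positive probability.

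The first key observation is that the good matchings in the output are pairwise edge-disjoint, so $|E(H)| = \sum_{(X,Y)} |M''_{X,Y}| \cdot \mathbb{1}[M''_{X,Y}\text{ is good}]$. Indeed, if some edge $(u,v)$ belonged to both $M''_{X,Y}$ and $M''_{P,Q}$ for distinct contracting pairs, then $(u,v)$ would lie in both $M'_{X,Y}$ and $M'_{P,Q}$; by step~\ref{item:prime} of \Cref{alg:contractor}, this means $(u,v)$ deviates from both $(X,Y)$ and $(P,Q)$, so $(u,v)$ is overloaded (\Cref{def:overloaded}) and would have been removed in step~\ref{item:doubleprime}, a contradiction.

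Next, I will bound each term in the sum. Condition on $(X,Y) \in \mathcal P$, which occurs with probability $\tfrac{1}{10} y_{X,Y}$. By definition of good, $|M''_{X,Y}| \geq \tfrac{|X|}{2}$ whenever the event holds; and by \Cref{obs:probably-good}, the conditional probability of being good is at least $\tfrac{1}{2}$. Combining,
\[
\mathbb{E}\!\left[|M''_{X,Y}|\cdot \mathbb{1}[M''_{X,Y}\text{ good}]\right] \;\geq\; \frac{y_{X,Y}}{10}\cdot\frac{1}{2}\cdot\frac{|X|}{2} \;=\; \frac{y_{X,Y}\,|X|}{40}.
\]
Summing over all contracting pairs and using the dual LP's objective (\Cref{def:dual}),
\[
\mathbb{E}\big[|E(H)|\big] \;\geq\; \frac{1}{40}\sum_{\text{contracting }(X,Y)} |X|\cdot y_{X,Y} \;=\; \frac{1}{40}\cdot 2\,\LP(G,\alpha) \;=\; \frac{\LP(G,\alpha)}{20}.
\]

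Finally, since $|E(H)|$ is a bounded nonnegative integer random variable, having expectation at least $\tfrac{\LP(G,\alpha)}{20}$ forces $\Pr\big[|E(H)| \geq \tfrac{\LP(G,\alpha)}{20}\big] > 0$, which is exactly the claim. There is no real obstacle here beyond the edge-disjointness check; all heavy lifting has already been done in \Cref{claim:overloaded} and \Cref{obs:probably-good}, and the argument is essentially just a linearity-of-expectation calculation against the dual LP value.
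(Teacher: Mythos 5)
Your proof is correct and follows essentially the same route as the paper's: establish edge-disjointness of the $M''_{X,Y}$ via the overloaded-edge contradiction, lower-bound each pair's expected contribution by $\tfrac{1}{40}y_{X,Y}|X|$ using the sampling probability and \Cref{obs:probably-good}, and conclude by linearity of expectation against the dual objective.
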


\begin{proof}
First we show that the matchings $M''_{X,Y}$ in $H$ are edge-disjoint. 
Suppose for contradiction that some $(u,v)$ is in both $M''_{X,Y}$ and $M''_{P,Q}$, where $(X,Y), (P,Q) \in \mathP$. 
By Step (\ref{item:prime}) of~\Cref{alg:contractor}, 
it follows that $(u,v)$ deviates from both $(X,Y)$ and $(P,Q)$.
But this means that $(u,v)$ is overloaded, which contradicts the removal of overloaded edges in Step (\ref{item:doubleprime}).

Consider any contracting pair $(X,Y)$.
If the pair is sampled into $\mathP$, 
then the matching $M''_{X,Y}$ is good with probability at least $\frac12$ by \autoref{obs:probably-good}, in which case $\big|M''_{X,Y}\big| \geq \frac12 \card{X}$ edges are added to $H$. 
Each contracting pair $(X,Y)$ is sampled into $\mathP$ with probability $\frac{1}{10} \cdot y_{X,Y}$. 
Thus, in expectation, each contracting pair $(X,Y)$ contributes $\frac{1}{40} \cdot y_{X,Y} \cdot |X|$ to $E(H)$. 
As the matchings $M''_{X,Y}$ are edge-disjoint,
\[
\Exp\card{E(H)}
\geq \sum_{\textrm{contracting } (X,Y)} \frac{1}{40} \cdot y_{X,Y} \cdot |X|
= \frac{1}{20}\cdot\LP(G,\alpha).
\] 
We conclude that there exists such an $H$ that satisfies the statement in the claim.
\end{proof}

We finish the proof of \Cref{lem:LP-MC} by showing that $H$ is an $\alpha$-\rsplus.
The argument is similar to that in \Cref{sec:ideas}.

\begin{claim}
\label{claim:LP-rsplus}
$H$ satisfies the properties of an $\alpha$-\rsplus\ in \Cref{dfn:rsplus}.
\end{claim}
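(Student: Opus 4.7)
The plan is to mirror the analysis from the integral warm-up in \Cref{sec:ideas}, now using the two structural facts already established for the randomized construction: the matchings $\{M''_{X,Y}\}$ output by \Cref{alg:contractor} are pairwise edge-disjoint (shown in the proof of \Cref{claim:LP-num-edges}), and each good matching $M''_{X,Y}$ has size at least $\tfrac12|X|$ (by the definition of \emph{good} in Step~\ref{item:doubleprime}). Edge-disjointness immediately gives the required partition of $E(H)$ into matchings demanded by \Cref{dfn:rsplus}, so the only thing left is to verify the contraction bound $|N_{H\setminus M}(L(M))| \leq |M|/\alpha$ for every such $M$.

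Fix a good matching $M := M''_{X,Y}$ in $H$ and set $X' := L(M) \subseteq X$, so $|X'| = |M| \geq \tfrac12|X|$. I will argue by contradiction: suppose $|N_{H\setminus M}(X')| > |X'|/\alpha$. Since $(X,Y)$ is a contracting pair, we have $|Y| \leq |X|/(2\alpha) \leq |X'|/\alpha$, so by a simple cardinality comparison there must exist some edge $(v,z) \in H\setminus M$ with $v \in X' \subseteq X$ and $z \notin Y$; in other words, $(v,z)$ \emph{deviates} from $(X,Y)$ in the sense of \Cref{dfn:deviates}.

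Now $(v,z)$ lives in some other good matching $M''_{P,Q}$ with $(P,Q)\neq(X,Y)$ and $(P,Q)\in\mathP$. By Step~\ref{item:prime} of the algorithm, every edge of $M''_{P,Q}\subseteq M'_{P,Q}$ avoids $Q$, so $z \notin Q$, and since $v \in L(M''_{P,Q}) \subseteq P$, the edge $(v,z)$ also deviates from $(P,Q)$. But then $(v,z)$ is overloaded in the sense of \Cref{def:overloaded}, which contradicts the fact that Step~\ref{item:doubleprime} removes every overloaded edge before forming $M''_{P,Q}$. This contradiction establishes the contraction bound, and combined with edge-disjointness completes the verification that $H$ is an $\alpha$-\rsplus.

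I expect the step that required the most care in the design (rather than in this verification) is the matching between the factor $1/(2\alpha)$ in the definition of a contracting pair and the target ratio $1/\alpha$ in \Cref{dfn:rsplus}; the slack of $2$ is exactly what absorbs the loss $|X'| \geq \tfrac12|X|$ incurred by the randomized post-processing, so the argument goes through only because ``good'' was defined with the threshold $\tfrac12|X|$. The rest of the proof is a direct translation of the integral reasoning, with \Cref{def:overloaded} playing the role that the dual constraint played in \Cref{sec:ideas}.
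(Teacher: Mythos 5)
Your proof is correct and follows essentially the same route as the paper's: a contradiction argument where the hypothetical large neighborhood of $L(M''_{X,Y})$ in $H\setminus M''_{X,Y}$ produces an edge deviating from both $(X,Y)$ and some other sampled pair $(P,Q)$, which is then overloaded and should have been deleted in Step~\ref{item:doubleprime}. The only (mild) difference is that you explicitly invoke edge-disjointness of the good matchings to satisfy the partition requirement of \Cref{dfn:rsplus}, which the paper relies on implicitly via its proof of \Cref{claim:LP-num-edges} --- a reasonable bit of extra care, but not a different argument.
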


\begin{proof}
Consider some matching $M := M''_{X,Y}$ and let $X'' \subseteq X$ be the left endpoints $L(M)$. 
Since $M$ is good by construction, we have $\card{X''} \geq \frac12 \card{X}$. 
Suppose for contradiction that $X''$ has more than $\frac{1}{\alpha} \card{X''}$ neighbors in $H \setminus M$. 
Then $\card{N_{H \setminus M}(X'')} > \frac{1}{\alpha} \card{X''} \geq \frac{1}{2 \alpha} \card{X}$.
Since $\card{Y} \leq \frac{1}{2\alpha}\card{X}$ by the definition of a contracting pair, there must be some edge $(v,z)$ in $H \setminus M$ such that $v \in X' \subseteq X$ and $z \notin Y$, and so $(v,z)$ deviates from $(X,Y)$. 
Since $(v,z) \in H \setminus M$, it must come from some other matching $M''_{P,Q}$ of $H$.
This implies that $(P,Q) \in \mathP$. 
Moreover, $(v,z)$ deviates from $(P,Q)$, by Step (\ref{item:prime}) of the construction algorithm. 
Thus, the edge $(v,z)$ deviates from two different pairs in $\mathP$ -- namely $(X,Y)$ and $(P,Q)$ -- which means that $(v,z) \in H$ is overloaded, but this contradicts with the removal of overloaded edges in Step (\ref{item:doubleprime}) of the construction algorithm.
\end{proof}

\autoref{lem:LP-MC} follows immediately from 
\autoref{claim:LP-rsplus} and \autoref{claim:LP-num-edges}.


\section{A Construction of \rsplus s}\label{sec:construction}

In this section, we present a simple construction of somewhat dense \rsplus s.
\begin{theorem}[Density of \rsplus s] \label{thm:rsplus}
For $\eps \in (0,1)$ a sufficiently small constant and $n \geq 1$ a sufficiently large integer,
there are $(n^{\frac{1}{4} - O(\eps)})$-\rsplus s $G=(L, R, E)$ with $n^{1+\Omega(\eps^2)}$ edges where $|L| =: n$ and $|R| = \sqrt{n}$.
\end{theorem}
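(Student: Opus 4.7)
The plan is to build an explicit graph by viewing vertices of $L$ and $R$ as strings with a block structure over a finite field, and indexing the matchings in the partition by members of a combinatorial design with small pairwise intersections. Concretely, I would take a prime power $q$ and an integer $\ell = \Theta(1/\eps)$ with $n = q^{2\ell}$, identify $L = \mathbb{F}_q^{2\ell}$ and $R = \mathbb{F}_q^{\ell}$ so that $|L|=n$ and $|R|=\sqrt n$, and write each $v \in L$ as a block-string $v = (v_1,\ldots,v_{2\ell})$. The matchings would be indexed by a Nisan--Wigderson-style family $\mathcal{F}$ of $\ell$-subsets of $[2\ell]$ (or small-degree polynomials giving such subsets) with $|S \cap S'| \le r$ for every distinct $S,S' \in \mathcal{F}$, where $r$ is a small parameter in terms of $\eps$.

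For each $S \in \mathcal{F}$ the matching $M_S$ is defined by choosing, for every $u \in R$, a single preimage $\phi_S(u) \in L$ of $u$ under the projection reading the $S$-coordinates of $L$. The map $\phi_S$ must be chosen so that (i) distinct matchings are edge-disjoint (so the $M_S$ genuinely partition $E$), (ii) $|M_S|=|R|=\sqrt n$, and (iii) $L(M_S)=\phi_S(R)$ lies in a thin slice of $L$ whose coordinates outside $S$ are determined by an $S$-dependent offset. A naive choice (such as zero-padding outside $S$) establishes the block structure but caps the total density at roughly $n^{3/4}$; the key technical move is to pick the offsets so that each $v \in L$ participates in many matchings at once, pushing the density up to $n^{1+\Omega(\eps^2)}$ while keeping the slices of different matchings algebraically compatible.

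Given such a construction, the density count is immediate: $|E| = \sum_{S \in \mathcal{F}} |M_S| = |\mathcal{F}| \cdot \sqrt n$, so taking $|\mathcal{F}| = n^{1/2 + \Omega(\eps^2)}$ from the design yields $n^{1+\Omega(\eps^2)}$ edges. For the contraction bound, fix $M_S$ and consider any $M_{S'}$ with $S' \ne S$: a non-$M_S$ edge from $L(M_S)$ coming via $M_{S'}$ is witnessed by a vertex in $L(M_S) \cap L(M_{S'})$, whose cardinality is bounded by $q^{|S \cap S'|} \le q^r$ thanks to the slice structure and the design property. More importantly, the algebraic choice of $\phi_{S'}$ should force the corresponding right endpoints to lie in a common small slice of $R$ of size $n^{1/4+O(\eps)}$ that does not grow with $|\mathcal{F}|$; summing over $S'$ then bounds $|N_{G \setminus M_S}(L(M_S))|$ by $n^{1/4+O(\eps)}$, yielding the required $\alpha = |M_S|/|N_{G \setminus M_S}(L(M_S))| \ge n^{1/4 - O(\eps)}$.

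The main obstacle is the parameter balance between density and contraction. A simple averaging argument shows that if all matchings have size $\sqrt n$ and left-degrees are uniform then $\alpha$ cannot exceed a constant, so the construction must produce non-uniform left-degrees together with a strong concentration of the non-$M_S$ right-neighbors onto a small ``hub'' set of size $n^{1/4+O(\eps)}$. Arranging this concentration is exactly what the ``block-structure plus set family of small pairwise intersection'' hint in the paper is meant to accomplish: the block structure lets each $v \in L$ lie in many matchings through many $S \in \mathcal{F}$, while the small pairwise intersection of the design confines the non-$M_S$ right-neighbors of $L(M_S)$ to a narrow slice of $R$ that is essentially independent of which other matching the edges come from. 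Finding and verifying the preimage rule $\phi_S$ that realizes this alignment is the crux of the proof and is where I expect the overwhelming majority of the technical work to lie.
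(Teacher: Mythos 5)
Your high-level template --- view vertices of $L$ as block-strings, take matchings to be coordinate projections onto $R$ indexed by a family of $k$-subsets with small pairwise intersections, and restrict each matching to a slice where the complementary coordinates are fixed --- is exactly the paper's construction. But the proposal has a fatal parameter inversion and, by your own admission, leaves the central piece (the rule $\phi_S$) unconstructed; as it turns out, no such clever rule is needed once the parameters are set correctly. With $\ell = \Theta(1/\eps)$ blocks over an alphabet of size $q = n^{\Theta(\eps)}$, your design $\mathcal{F}$ lives inside $\binom{[2\ell]}{\ell}$, which has at most $4^{\ell} = O(1)$ members independent of $n$; so $|\mathcal{F}| = n^{1/2+\Omega(\eps^2)}$ is unattainable and the density count collapses. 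The paper instead takes a \emph{constant} alphabet ($w=2$) and $k = \frac12\log_2 n$ blocks, so that a family of size $2^{\Theta(\eps^2 k)} = n^{\Theta(\eps^2)}$ with pairwise intersections at most $(\frac12+2\eps)k$ exists (note the intersections are just above $k/2$, not ``a small parameter''; requiring genuinely small intersections of $k$-subsets of $[2k]$ would again force $|\mathcal{F}|$ to be tiny). The second gap is insisting on one matching per set $S$, which caps $|E|$ at $|\mathcal{F}|\sqrt n$. The resolution is simply to use \emph{every} offset: for each $S$ and each of the $\sqrt n$ values $x$ of the coordinates outside $S$, the projection $v\mapsto v_S$ restricted to $\{v: v_{\overline S}=x\}$ is already a perfect matching onto $R$, and these $\sqrt{n}$ matchings partition $L$, contributing $n$ edges per set and $t\cdot n = n^{1+\Omega(\eps^2)}$ edges in total.

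This also refutes your claimed averaging obstruction: the paper's graph has all matchings of size exactly $\sqrt n$ and perfectly uniform left-degree $t$, yet achieves $\alpha = n^{1/4-O(\eps)}$; no hub structure or non-uniformity is required. The contraction bound is purely combinatorial: for $S'\neq S$, every $v$ in the slice $\{v_{\overline S}=x\}$ has its coordinates on $S'\cap\overline S$ frozen, so the image of that slice under $v\mapsto v_{S'}$ has size at most $w^{|S\cap S'|}\leq n^{1/4+O(\eps)}$, and summing over the $n^{O(\eps^2)}$ other sets keeps $|N_{G\setminus M}(L(M))|$ at $n^{1/4+O(\eps)} \leq |M|/n^{1/4-O(\eps)}$.
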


The bipartite graph $G=(L, R, E)$ that we construct has $|L|=w^{2k}$ vertices on the left and $|R|=w^k$ vertices on the right, for some integers $w, k \geq 2$.
We think of each vertex in $L$ as a string of length $2k$ where each character is in $\{1,\ldots,w\}$, and similarly each vertex in $R$ as a string of length $k$ where each character is in $\{1,\ldots,w\}$.
The general idea is to use the ``block structure'' of the vertices to argue about the contraction property of the matchings added as we will see.

The edge set in $G$ is simple to describe.
We add edges to $G$ in $t$ rounds.
In each round, we choose a subset of indices $S \subseteq [2k]$ with $|S|=k$.
For each vertex $v \in [w]^{2k}$ in $L$,
we define $v_S$ to be the subsequence of $v$ of length $k$ by restricting $v$ to the indices in $S$ (e.g.~if $v = \{2,5,3,8\}$ and $S = \{1,4\}$ then $v_S = \{2,8\}$).
Note that $v_S \in [w]^k$ corresponds to the unique vertex in $R$ with the same string in $[w]^{k}$.
In each round, for each vertex $v \in L$, we add the edge $(v,v_S)$ in the graph $G$ where $v_S \in R$.
To establish the contraction property, we will choose subsets $S_1, S_2, \ldots, S_t$ where the pairwise intersection size $|S_i \cap S_j|$ is small for $i \neq j$, and run the above process for $t$ rounds.

\begin{algorithm}[ht]\caption{Construction of \rsplus s}
 \label{alg:rsplus}
  {\bf Input:} an integer $w \geq 2$, an integer $k \geq 2$, and $t$ subsets $S_1, \ldots, S_t \subseteq [2k]$ where $|S_i| = k$ for $1 \leq i \leq t$ and $|S_i \cap S_j| \leq \ell$ for $1 \leq i \neq j \leq t$.\\
  {\bf Initialization:} $L = [w]^{2k}$, $R = [w]^k$, and $E = \emptyset$.\\
  {\bf For} $i$ from $1$ to $t$ do:\\
  \quad \quad
  for each vertex $v \in [w]^{2k}$ in $L$, add the edge $(v,v_{S_i})$ to $E$ where $v_{S_i} \in [w]^k$ is in $R$.\\
  {\bf Output:} the graph $G = (L, R, E)$.
\end{algorithm}

To see that $G$ is a \rsplus,
we will partition the edges added in each round into $w^k$ matchings of size $w^k$ as follows,
where each matching connects a disjoint subset of $L$ to $R$.
In each round, when we fix a subset $S \subseteq [2k]$ of size $k$,
we also consider the complement $\overline{S} := [2k]\setminus S$ and the subsequence $v_{\overline{S}}$ restricting a string $v \in [w]^{2k}$ to the subset $\overline{S}$.
For each string $x \in [w]^{k}$ of length $k$,
we define 
\[
L_{S,x} := \big\{v \in [w]^{2k} \mid v_{\overline{S}} = x\big\}.
\]
In words, each group $L_{S,x}$ is the subset of vertices in $L$ where we fix the subsequence in $\overline{S}$ to be $x$.
Then $\{L_{S,x}\}_{x \in [w]^k}$ is a partition of $L$ into $w^k$ groups, with one group for each possible $x$, and each group has $w^k$ vertices.
Note that in \Cref{alg:rsplus}, for each group $L_{S,x}$, we added a perfect matching $M_{S,x}$ from $L_{S,x}$ to $R$ where $M_{S,x} := \{ (v,v_S) \mid v \in L_{S,x} \}$ is of size $w^k$.
The edge set added in each round is the union of $M_{S,x}$ over all $x \in [w]^k$,
and the edge set in the output is the union of $t \cdot w^k$ matchings such that
\begin{equation} \label{e:edge-set}
E = \bigcup_{i: 1 \leq i \leq t} ~\bigcup_{x: x \in [w]^k} M_{S_i,x}.
\end{equation}

The reason that we choose $S_1, \ldots, S_t$ to have pairwise intersection size $|S_i \cap S_j| \leq \ell$ for $i \neq j$ is explained in the following lemma.

\begin{lemma}[Contraction Property] \label{lem:contraction}
In the output graph $G = (L, R, E)$ of \Cref{alg:rsplus},
for each $S_i$ and each $x \in [w]^k$,
the neighbor set of $L_{S_i,x}$ in $G \setminus M_{S_i,x}$ has size
\[
\big|N_{G \setminus M_{S_i,x}}(L_{S_i,x})\big| 
\leq \sum_{j: 1 \leq j \neq i \leq t} w^{|S_j \cap S_i|} \leq t \cdot w^\ell.
\]
\end{lemma}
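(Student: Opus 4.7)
The plan is to directly count the neighbors of $L_{S_i, x}$ in $G \setminus M_{S_i, x}$ round by round. First I would observe, from the construction in \eqref{e:edge-set}, that every vertex $v \in L$ has exactly $t$ incident edges in $G$, one per round, and the round-$j$ edge goes from $v$ to $v_{S_j} \in R$. For $v \in L_{S_i, x}$, the round-$i$ edge is $(v, v_{S_i})$, which belongs to $M_{S_i, x}$ by the very definition of this matching. Consequently, deleting $M_{S_i, x}$ from $G$ kills precisely the round-$i$ edges incident to $L_{S_i, x}$ (and none of the other round-$j$ edges out of these vertices, since $M_{S_i, x}$ contains only round-$i$ edges). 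Therefore
\[
N_{G \setminus M_{S_i, x}}(L_{S_i, x}) \;=\; \bigcup_{j \neq i} \bigl\{ v_{S_j} \,:\, v \in L_{S_i, x} \bigr\}.
\]

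The heart of the argument is then to bound, for each fixed $j \neq i$, the size of the set $\bigl\{ v_{S_j} : v \in L_{S_i, x} \bigr\} \subseteq [w]^k$. I would split the index set $S_j$ as the disjoint union $(S_j \cap S_i) \sqcup (S_j \cap \overline{S_i})$. By definition of $L_{S_i, x}$, the coordinates of $v$ at positions in $\overline{S_i}$ are pinned to the string $x$, whereas the coordinates at positions in $S_i$ are unconstrained. Reading $v_{S_j}$ coordinate by coordinate: the entries indexed by $S_j \cap \overline{S_i}$ are forced by $x$, while the entries indexed by $S_j \cap S_i$ can take any values in $[w]^{|S_j \cap S_i|}$. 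Hence $\bigl|\{ v_{S_j} : v \in L_{S_i, x}\}\bigr| \leq w^{|S_j \cap S_i|}$.

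Combining the two displays and invoking the hypothesis $|S_j \cap S_i| \leq \ell$ for all $j \neq i$ built into \Cref{alg:rsplus}, a union bound gives
\[
\bigl| N_{G \setminus M_{S_i, x}}(L_{S_i, x}) \bigr| \;\leq\; \sum_{j \neq i} w^{|S_j \cap S_i|} \;\leq\; (t-1)\, w^{\ell} \;\leq\; t\, w^{\ell},
\]
which is the desired bound.

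There is no real obstacle here; the lemma is a clean coordinate-counting exercise once the setup is correctly unpacked. The only subtle point is making sure that removing $M_{S_i, x}$ deletes the round-$i$ edges out of $L_{S_i, x}$ and nothing else — but this is immediate from the construction, since $M_{S_i, x}$ is defined to consist exactly of the round-$i$ edges whose left endpoints lie in $L_{S_i, x}$. The conceptual takeaway is that the pairwise intersection bound $|S_i \cap S_j| \leq \ell$ directly controls how many degrees of freedom remain in $v_{S_j}$ after the constraint $v_{\overline{S_i}} = x$ is imposed, and this is precisely what gives the contraction property.
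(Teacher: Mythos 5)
Your proof is correct and follows essentially the same argument as the paper's: you fix a round $j \neq i$, split $S_j$ into $S_j \cap S_i$ and $S_j \cap \overline{S_i}$, observe that the entries of $v_{S_j}$ indexed by $S_j \cap \overline{S_i}$ are frozen by the defining constraint $v_{\overline{S_i}} = x$, and conclude that there are at most $w^{|S_i \cap S_j|}$ distinct round-$j$ neighbors of $L_{S_i,x}$; summing over $j \neq i$ finishes it. (One tiny imprecision: the displayed identity for $N_{G\setminus M_{S_i,x}}(L_{S_i,x})$ should be ``$\subseteq$'' rather than ``$=$'', since a round-$j$ edge $(v,v_{S_j})$ can coincide with the round-$i$ edge when $v_{S_j}=v_{S_i}$ and thus be removed along with $M_{S_i,x}$ — but only the inclusion is needed for the upper bound, so the argument is unaffected.)
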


\begin{proof}
Fix $S_i$ and $x \in [w]^k$.
In round $j \neq i$, 
for each vertex $v \in L$,
we add the edge $(v,v_{S_j})$ to $E$,
where the neighbor of $v$ depend only on the values of $v$ in the indices in $S_j$.
By the definition of $L_{S_i,x}$, 
every vertex $v$ in $L_{S_i,x}$ has the same values in $v_{\overline{S_i}}$ such that $v_{\overline{S_i}} = x$.
In particular, every vertex $v$ in $L_{S_i,x}$ has the same values in $v_{ \overline{S_i} \cap S_j}$, and thus the neighbors of $L_{S_i,x}$ in round $j$ is contained in the set $\{ v_{S_j} \in [w]^k \mid v_{S_j \cap \overline{S_i} } \textrm{~is~fixed} \}$.
This set has size exactly $w^{|S_i \cap S_j|}$ since there are $w$ possible choices for each index in $S_j \cap S_i$ but only one choice for each index in $S_j \cap \overline{S_i}$.
Summing over all $j \neq i$ gives the first inequality of the lemma,
and the second inequality follows by the assumption that $|S_i \cap S_j| \leq \ell$ for $j \neq i$.
\end{proof}

To instantiate our construction, we need a large set family with small pairwise intersection. The existence of such a family 
is a standard result in extremal set theory dating back to the work of Erd\H{o}s and R{\'e}nyi in~\cite{ErdosR56} (e.g., it follows from Gilbert-Varshamov bound in coding theory). 
For completeness, we present a short proof in~\Cref{app:prop-set-family} using a standard probabilistic argument.

\begin{proposition}[Set Family with Small Pairwise Intersection] \label{prop:set-family}
Let $0 < \delta < \frac12$ be any constant and let $c_\delta := 2\cdot \delta^{\delta}\cdot(1-\delta)^{(1-\delta)}$.  
There exists a set family $\mathcal{F} \subseteq {[2k]\choose k}$ of size $\Theta(k^{-\frac14} \cdot c_\delta^k)$ such that for all $S \neq S'\in \mathcal{F}$, it holds that $|S\cap S'|\leq (1-\delta) k$.
\end{proposition}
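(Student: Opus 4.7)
The plan is a standard first-moment probabilistic argument. Sample $N := c \cdot k^{-1/4} c_\delta^k$ subsets $S_1, \ldots, S_N$ uniformly and independently from $\binom{[2k]}{k}$ for a sufficiently small constant $c > 0$ depending on $\delta$, and show that with positive probability no two of them satisfy $|S_i \cap S_j| > (1-\delta)k$. The collection $\mathcal{F} = \{S_1, \ldots, S_N\}$ (which is pairwise distinct with high probability since $N^2 \ll \binom{2k}{k}$) then gives the desired family.

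The key ingredient is to estimate $p := \Pr[|S \cap S'| > (1-\delta)k]$ for two independent uniform random $k$-subsets $S, S' \subseteq [2k]$. The intersection size has the hypergeometric distribution $\Pr[|S \cap S'| = m] = \binom{k}{m}\binom{k}{k-m}/\binom{2k}{k}$. For constant $\delta < 1/2$, consecutive terms past the mean $k/2$ decay by a factor tending to $(\delta/(1-\delta))^2 < 1$, so the tail $\Pr[|S \cap S'| > (1-\delta)k]$ is within a $\delta$-dependent constant factor of the boundary term $\binom{k}{\delta k}^2/\binom{2k}{k}$. Plugging in Stirling's approximations $\binom{k}{\delta k} \sim (2/c_\delta)^k/\sqrt{2\pi\delta(1-\delta)k}$ and $\binom{2k}{k} \sim 4^k/\sqrt{\pi k}$ yields $p = \Theta_\delta(c_\delta^{-2k}/\sqrt{k})$.

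With this value of $p$ and $N = c \cdot k^{-1/4} c_\delta^k$, the expected number of bad pairs is
\[
\binom{N}{2} \cdot p \;=\; \Theta_\delta\!\left(c^2 \cdot k^{-1/2}\, c_\delta^{2k} \cdot c_\delta^{-2k}/\sqrt{k}\right) \;=\; \Theta_\delta(c^2/k) \;=\; o(1).
\]
By Markov's inequality, with probability at least $1/2$ (for sufficiently large $k$) no bad pair appears among the $N$ sampled sets, and the sets are pairwise distinct with probability $1-o(1)$ by a trivial birthday-type computation. Intersecting these two events establishes existence of the desired family.

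The only non-routine step is the Stirling estimate for $p$; everything else is a first-moment argument. The main subtlety is just to verify that the geometric tail of the hypergeometric distribution is within a $\delta$-dependent constant factor of the boundary term, which is immediate once $\delta$ is bounded away from $1/2$.
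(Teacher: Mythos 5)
Your proposal is correct and follows essentially the same route as the paper's proof: sample $\Theta(k^{-1/4}c_\delta^k)$ uniform $k$-subsets, bound the hypergeometric tail $\Pr[|S\cap S'|>(1-\delta)k]$ via Stirling's formula applied to the boundary term $\binom{k}{\delta k}^2/\binom{2k}{k}$, and union-bound over pairs. The only (immaterial) difference is that you bound the tail sum by a $\delta$-dependent constant times the boundary term using the geometric decay of consecutive hypergeometric terms, whereas the paper uses the cruder bound of $\delta k$ times the maximum term; both suffice for the stated family size.
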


For two random $k$-subsets of $[2k]$, the expected size of intersection is $\frac{k}{2}$, and so we cannot hope to have $\delta > \frac12$ in the above statement.
Moreover, $c_\delta = 1$ at $\delta = \frac12$ and is monotonically increasing towards $2$ as $\delta\rightarrow 0$. 
As long as $\delta$ is bounded away from $1/2$, size of $|\mathcal{F}|$ is exponential in $k$.

We can now conclude the proof of~\Cref{thm:rsplus} by combining \autoref{lem:contraction} and \autoref{prop:set-family}.

\begin{proofof}{\Cref{thm:rsplus}}
Set $\delta := \frac12 - 2\eps$.
By \autoref{prop:set-family}, there is a set family $\mathcal{F}$ with $\Theta(k^{-\frac14} \cdot c_\delta^k)$ subsets of size $k$ of $[2k]$ with pairwise intersection size at most $(1-\delta)k = (\frac{1}{2} + 2\eps)k$.
Use this set family $\mathcal{F}$ as an input to \autoref{alg:rsplus} so that $t = \Theta(k^{-\frac14} \cdot c_\delta^k)$ and $\ell = (\frac12 + 2\eps)k$.
We will show below that $c_\delta \asymp e^{8\eps^2}$. \footnote{The notation $f(n) \asymp g(n)$ means that $f(n)$ and $g(n)$ are asymptotically equal.}

The edge set of $G$ is the union of $M_{S_i,x}$ for $S_i \in \mathcal{F}$ and $x \in [w]^k$ as shown in~\Cref{e:edge-set}, where $n := |L| = w^{2k}$ in our construction.
By definition, $L_{S_i,x} = L(M_{S_i,x})$.
By setting $w=2$ and $k = \frac12 \log_2 n$, it follows from \autoref{lem:contraction} that
\[
\frac{\big|N_{G \setminus M_{S_i,x}}(L(M_{S_i,x}))\big|}{|M_{S_i,x}|}
\leq \frac{t \cdot w^l}{w^k}
\lesssim \frac{k^{-\frac14} \cdot c_\delta^k \cdot w^{(\frac12+2\eps)k}}{w^k}
\leq \frac{e^{8\eps^2 k}}{w^{(\frac12 - 2\eps)k}}
= \frac{n^{\Theta(\eps^2)}}{n^{\frac14 - \eps}}
= \frac{1}{n^{\frac14 - O(\eps)}},
\]
and so the bipartite graph $G$ is a $(n^{\frac14 - O(\eps) })$-\rsplus.

The number of edges in $G$ is $t \cdot n \asymp k^{-\frac14} \cdot c_\delta^k \cdot n \asymp \log^{-\frac14} n \cdot n^{\Theta(\eps^2)} \cdot n = n^{1+\Theta(\eps^2)}$, by setting $w=2$ and $k= \frac12 \log_2 n$ and using that $n$ is sufficiently large.

It remains to show that $c_{\delta} \asymp e^{8 \eps^2}$ for $\eps$ a sufficiently small constant.
\begin{align*}
c_{\delta} 
& = c_{\frac12 - 2\eps}
= 2\Big(\frac12-2\eps\Big)^{\frac12-2\eps} \Big(\frac12+2\eps\Big)^{\frac12+2\eps} 
= 2\sqrt{\Big(\frac12-2\eps\Big)\Big(\frac12+2\eps\Big)} \cdot \Big(\frac{\frac12+2\eps}{\frac12-2\eps}\Big)^{2\eps}\\
& = 2\sqrt{\frac14 - 4\eps^2} \cdot \Big(\frac{1+4\eps}{1-4\eps}\Big)^{2\eps}
= \sqrt{1-16\eps^2} \cdot \Big(\frac{1+4\eps}{1-4\eps}\Big)^{2\eps}
\asymp  e^{-8\eps^2}\cdot e^{8\eps \cdot 2\eps}
= e^{8\eps^2},
\end{align*}
where we used $e^x \asymp 1+x$ when $x$ is sufficiently small.
This completes the proof.
\end{proofof}


\section{Communication Complexity of Load-Balancing}\label{sec:cc-lb}

We now prove our lower bound on the randomized communication complexity of the load-balancing problem using
the construction of \rsplus s developed in~\Cref{sec:construction}. 

\begin{theorem}\label{thm:cc-lb}
	For any $n \geq 1$ and approximation ratio $\alpha \geq 1$, any two-player one-way randomized communication protocol for finding an $\alpha$-approximation to load-balancing
	with probability of success at least $\frac23$ requires $\Omega(\frac{1}{\log n} \cdot \MC(n,\alpha))$ bits of communication. That is, 
	\[
		\randC{\LB(n,\alpha)} \gtrsim \frac{1}{\log{n}} \cdot \MC(n,4\alpha). \footnote{The constant $4$ in the $\MC(n,4\alpha)$ term in this theorem  
can be replaced with any other constant strictly larger than two. Since the choice of the constant is immaterial for our purpose, we have not attempted to optimize it.} 
	\]
\end{theorem}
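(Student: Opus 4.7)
The plan is to prove this lower bound via Yao's minimax principle together with an information-theoretic counting argument run against a deterministic protocol on a hard input distribution built from an extremal $(4\alpha)$-\rsplus. By Yao's minimax it suffices to exhibit an input distribution $\mu$ on which every deterministic $\alpha$-approximate protocol must communicate $\Omega(\MC(n,4\alpha)/\log n)$ bits. So I would take an extremal $(4\alpha)$-\rsplus $G=(L,R,E)$ with $|E|=\MC(n,4\alpha)$ and matchings $M_1,\dots,M_k$, and dyadically bucket these matchings by size; this loses only an $O(\log n)$ factor and lets us restrict to a single bucket in which every matching has size in $[s,2s]$ and $ks\gtrsim \MC(n,4\alpha)/\log n$. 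In the interesting regime $\MC\gg n$, the heaviest bucket automatically has $s$ exceeding an absolute constant, so the small-$s$ edge cases do not arise.

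The hard distribution---which realises the ``simple modification'' of the \rsplus alluded to in the introduction---has Alice receive a random subgraph $G_A$ of the bucketed \rsplus where each edge is independently retained with probability $p=7/8$, so $G_A$ carries $\Omega(ks)=\Omega(\MC(n,4\alpha)/\log n)$ bits of entropy. Bob receives a uniform index $I\in[k]$ together with a canonical perfect matching $F_I$ from $L\setminus L(M_I)$ into a fresh dummy server set $D$ of size $|L\setminus L(M_I)|$. A routine Hall's-condition argument, exploiting that every left vertex of a non-trivial \rsplus lies in many matchings (after pruning any low-degree vertices at negligible cost), shows that $\optload(G_A\cup F_I)=1$ with probability $1-o(1)$; conditioning on this event, an $\alpha$-approximate assignment must achieve $\maxload\leq\alpha$.

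The structural heart of the proof is a sharpening of \Cref{lem:easy}. By the $(4\alpha)$-\rsplus property, the ``bad'' servers in $N_{G\setminus M_I}(L(M_I))$ number at most $m_I/(4\alpha)$ and can collectively absorb at most $\alpha\cdot m_I/(4\alpha)=m_I/4$ clients from $L(M_I)$. Therefore at least $(3/4)m_I$ clients must be routed to ``good'' servers whose sole $L(M_I)$-neighbor in $G$ is their $M_I$-partner. Consequently, the deterministic assignment exposes a specific set $Z_{m(G_A),I}\subseteq M_I$ of at least $(3/4)m_I\geq (3/4)s$ edges, every one of which must lie in Alice's subgraph~$G_A$.

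The finishing step is a counting argument. For each message $m^*$ and each $i\in[k]$, the exposed set $Z_{m^*,i}$ is deterministic with $|Z_{m^*,i}|\geq(3/4)s$. Applying Markov to the assumed $\geq 2/3$ success probability yields that a $\geq 1/3$-fraction of realizations of $G_A$ are ``nice,'' meaning that the protocol succeeds on some $J(G_A)\subseteq[k]$ of size $\geq k/2$; this forces $G_A\supseteq\bigcup_{i\in J(G_A)}Z_{m(G_A),i}$, and edge-disjointness of the $M_i$ makes this union have size $\geq(3/8)ks$. Union-bounding over the $2^C\binom{k}{k/2}\leq 2^{C+k}$ possible pairs $(m^*,J)$ and invoking $\Pr[G_A\supseteq U]\leq p^{|U|}$ for any fixed edge set $U$ yields
\[
\tfrac13 \;\leq\; 2^{C+k}\cdot p^{(3/8)ks},
\]
which rearranges to $C\geq (3/8)ks\log(1/p)-k-O(1)\gtrsim ks\gtrsim \MC(n,4\alpha)/\log n$ once $s$ exceeds an absolute constant. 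The main technical hurdle will be verifying the matchability claim, i.e., Hall's condition for $L(M_I)$ inside the random subgraph $G_A$; this requires the \rsplus to have sufficiently large $L$-side minimum degree, a mild regularity property automatically enjoyed by the explicit construction of \Cref{thm:rsplus} (where every left vertex lies in $n^{\Omega(\eps^2)}$ matchings, making isolation in $G_A$ exponentially unlikely) and enforceable in general by a cheap pruning step.
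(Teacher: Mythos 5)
There is a genuine gap, and it is fatal to the hard distribution you chose. You have Alice receive a random subgraph $G_A$ of the bucketed $(4\alpha)$-\rsplus\ in which each edge is kept independently with probability $p=7/8$, and you claim that $\optload(G_A\cup F_I)=1$ with probability $1-o(1)$. This is false, and it is false precisely \emph{because} $G$ is a \rsplus. Let $S\subseteq L(M_I)$ be the clients whose $M_I$-edge was deleted; with high probability $|S|=\Theta(m_I)$. Since $F_I$ touches only $L\setminus L(M_I)$, we have $N_{G_A\cup F_I}(S)\subseteq N_{G\setminus M_I}(L(M_I))$, which by the contractor property has size at most $m_I/(4\alpha)$. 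Hence by \Cref{lem:lb-halls}, $\optload(G_A\cup F_I)\gtrsim \alpha$ with high probability. High individual left-degree does not rescue Hall's condition here: the obstruction is that the \emph{collective} non-$M_I$ neighborhood of $L(M_I)$ is tiny, which is the defining feature of the object, not a pathology one can prune away. Once $\optload=\Omega(\alpha)$, an $\alpha$-approximate protocol may return load $\Omega(\alpha^2)$, the bad servers can then absorb all of $L(M_I)$ once $\alpha$ exceeds a constant (exactly the regime of interest), and your structural lemma forcing $(3/4)m_I$ edges of $M_I$ into the output --- and hence into $G_A$ --- yields nothing. Conditioning on the event $\optload=1$ does not repair this: that event has probability roughly $p^{m_I}=e^{-\Theta(s)}$, it essentially forces all of $M_I$ into $G_A$, and it destroys both the entropy you are counting and the bound $\Pr[G_A\supseteq U]\leq p^{|U|}$ on the conditional measure.

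The paper avoids this by never deleting edges. It randomizes the \emph{identity} of each edge's server endpoint instead: each server $v$ is split into two copies $v_0,v_1$, and a uniform bit $x_e$ decides which copy the edge $e$ attaches to (\Cref{def:encoding}). Every realization $G_x$ still contains $M_i$ as a perfect matching of $L(M_i)$ and remains a $(2\alpha)$-\rsplus\ (\Cref{obs:M0-M1}), so $\optload=1$ holds deterministically, the structural argument (your sharpening of \Cref{lem:easy}, which is essentially the paper's \Cref{lem:matching-be-large}) goes through, and each recovered $M_i$-edge reveals one bit of $x$. The rest of your scaffolding --- bucketing matching sizes at a $\log n$ cost, Yao's principle, Bob's fresh perfect matching on $L\setminus L(M_I)$, and a counting finish (the paper uses an entropy computation in \Cref{clm:info-step1,clm:info-step2}, but your union-bound version would also work once the distribution is fixed) --- matches the paper. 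The single missing idea is the encoding-graph trick that puts the randomness in the endpoints rather than in the presence of the edges.
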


Combining this with our construction of \rsplus s in~\Cref{thm:rsplus} on one hand, and the standard reduction from communication to streaming lower bounds on the other hand, gives the following corollaries. 

\begin{corollary}\label{cor:cc-lb}
	For any sufficiently large $n \geq 1$ and sufficiently small $\eps > 0$, 
	\[
		\randC{\LB(n,n^{\frac14-O(\eps)})} = n^{1+\Omega(\eps^2)}.
	\]
	In particular, obtaining any $n^{\frac14-o(1)}$-approximation to load-balancing requires strictly more than any $O(n \cdot \poly\!\log{(n)})$ communication. 
\end{corollary}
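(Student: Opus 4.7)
The plan is to prove the bound by an information-theoretic reduction that uses the \rsplus as a hard-input gadget. The core leverage is the contraction property already exploited in \Cref{thm:MC-spar}: deleting any matching $M_i$ from a $(4\alpha)$-\rsplus leaves $L(M_i)$ with at most $|M_i|/(4\alpha)$ neighbors, which creates a $4\alpha$-factor gap in $\optload$ depending on whether $M_i$ is present in Alice's input. I will package this gap into a hard distribution whose random choices Bob can read off from the algorithm's output.

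Concretely, let $G=(L,R_0,E)$ be a $(4\alpha)$-\rsplus with $|E(G)|=\MC(n,4\alpha)$ and matching partition $M_1,\ldots,M_k$. The ``simple modification of \rsplus s'' is a pigeonholing step: group the matchings by size into $\lceil\log n\rceil$ buckets of consecutive powers of two and retain only the densest bucket $\mathcal{B}$, which has $\gtrsim \MC(n,4\alpha)/\log n$ edges and matchings of comparable size. After adding a helper server $h_v$ for each client $v\in L$, the hard distribution samples $b_i\sim\mathrm{Ber}(1/2)$ independently for each $M_i\in\mathcal{B}$, takes $G_A\defeq\bigcup_{i:b_i=1}M_i$ as Alice's input, samples $i^{\star}\in\mathcal{B}$ uniformly, and takes $G_B\defeq\{(v,h_v):v\in L\setminus L(M_{i^{\star}})\}$ as Bob's input.

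The contraction property then yields the decoding dichotomy I rely on: if $b_{i^{\star}}=1$ then combining $M_{i^{\star}}\subseteq G_A$ with Bob's helper matching certifies $\optload=1$, while if $b_{i^{\star}}=0$ the clients $L(M_{i^{\star}})$ have no helper edges and reach at most $|M_{i^{\star}}|/(4\alpha)$ servers through $G\setminus M_{i^{\star}}$, forcing $\optload\geq 4\alpha$. An $\alpha$-approximate algorithm therefore delivers $\maxload\leq\alpha$ in the first case and $\maxload\geq 4\alpha$ in the second, so Bob can recover $b_{i^{\star}}$ by reading off the load. Since Bob can simulate this decoder for every $i\in\mathcal{B}$ from a single message $\Pi$ (he constructs $G_B^{(i)}$ himself), he in fact recovers the entire vector $(b_i)_{i\in\mathcal{B}}$ with low per-coordinate error, and a coordinate-wise Fano bound then gives $I(\Pi;b)\gtrsim|\mathcal{B}|$ and hence $C\geq H(\Pi)\gtrsim|\mathcal{B}|$.

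The main obstacle I anticipate is twofold. First, I must turn the protocol's $2/3$-success guarantee on the \emph{joint} distribution into the per-coordinate recovery that Fano's inequality needs; I would handle this by first fixing the public randomness via Yao's minimax principle and then averaging to ensure that for most $i\in\mathcal{B}$ the simulated decoder is correct with constant-bounded error. Second, and more delicate, merely counting ``matchings in $\mathcal{B}$'' risks losing an extra factor equal to the common matching size $m$ relative to the claimed $\MC(n,4\alpha)/\log n$. To close this gap I plan to replace the single Bernoulli $b_i$ with a richer per-matching random object, e.g.\ a uniformly random large subset of $M_i$ (of size $|M_i|(1-1/(4\alpha))$) so that $\optload$ is still forced to $1$ by contraction, yet an $\alpha$-approximate assignment must use $\gtrsim |M_i|$ of the revealed edges in the matching; verifying this and checking that Bob's recovery extracts $\Omega(|M_i|)$ bits per queried matching, rather than one, is where the analysis becomes most technical.
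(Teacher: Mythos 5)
Your proposal re-derives the communication lower bound (essentially the content of \Cref{thm:cc-lb}) rather than simply invoking it: in the paper, \Cref{cor:cc-lb} is a two-line consequence of \Cref{thm:cc-lb} (communication lower bound via $\MC$) plugged together with \Cref{thm:rsplus} (the \rsplus\ construction). That organizational mismatch aside, your reduction is broadly in the same spirit as the paper's proof of \Cref{thm:cc-lb} — same pigeonholing into comparable-size matchings, same use of helper edges for $L \setminus L(M_{i^*})$, similar information-theoretic closing argument — but the hard distribution you put on Alice's side is genuinely different, and that is where the gap lies.

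Your Bernoulli distribution, where Alice receives $\bigcup_{i: b_i=1} M_i$, only extracts one bit per matching, so at best it yields $\norm{\pi} \gtrsim k$ where $k = |\mathcal{B}|$; you need $\norm{\pi} \gtrsim k \cdot r_0 \approx |E(G_0)|/\log n$. You correctly flag this, but the proposed fix — replacing $b_i$ by a uniformly random $(1-\frac{1}{4\alpha})$-fraction subset of $M_i$ — does not work as stated. First, the claim that $\optload$ is ``still forced to $1$ by contraction'' is backwards: contraction is what makes $\optload$ \emph{large} when a matching is absent; keeping a large sub-matching makes $\optload$ small but does \emph{not} guarantee $\optload = 1$, because the $|M_i|/(4\alpha)$ dropped clients of $L(M_{i^*})$ have no helper edges and must be routed through $N_{G\setminus M_{i^*}}(L(M_{i^*}))$, which may be smaller than the number of dropped clients or already occupied. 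Second, and more basic, both your Bernoulli distribution and its proposed refinement can produce \emph{infeasible} instances: a client in $L(M_{i^*})$ whose only incident \rsplus\ edge lies in $M_{i^*}$ has no edge at all when $b_{i^*}=0$ (or when its $M_{i^*}$-edge is dropped), so the instance has no valid assignment, and your ``read off the output load'' decoder is ill-defined there. The paper sidesteps both problems with the encoding-graph construction (\Cref{def:encoding}): it never deletes an edge of $G_0$ but rather routes each edge $(u,v)$ to one of two copies $v_0,v_1$, so $\optload = 1$ always holds, every bit $x_e$ remains recoverable from whether $(u,v_0)$ or $(u,v_1)$ is used, and \Cref{lem:matching-be-large} shows a correct output pins down at least half of the $\approx r_0$ bits in $x_{i^*}$ — exactly the $\Omega(r_0)$ bits per queried matching you need. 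To close your argument you would essentially have to reinvent this encoding step; I'd suggest importing it directly.

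Finally, a small technical note: you invoke Fano per-coordinate, while the paper uses a direct chain-rule / conditional-entropy calculation (\Cref{clm:info-step1} and \Cref{clm:info-step2}). Both are workable, but with the encoding graph the entropy route is cleaner because the decoder only recovers \emph{half} of each $x_{i^*}$ and one must average over which half; the paper's Claim~\ref{clm:info-step1} handles this partial recovery directly via the indicator random variable $Z$, whereas a coordinate-wise Fano bound would require some extra bookkeeping about which coordinates are recovered.
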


\begin{corollary}\label{cor:stream-lb}
	There is no semi-streaming algorithm for obtaining a $n^{\frac14-o(1)}$-approximation to the load-balancing problem with probability of success at least $\frac23$. 
\end{corollary}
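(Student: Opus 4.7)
The plan is to derive the streaming lower bound from the one-way communication lower bound in \Cref{cor:cc-lb} via the standard simulation argument that turns a semi-streaming algorithm into a one-way two-party protocol. The approximation factor $n^{\frac14-o(1)}$ to be ruled out is precisely the regime where \Cref{cor:cc-lb} kicks in, so the only real content of this corollary is packaging that reduction correctly.

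First, I would argue by contradiction: suppose there is a semi-streaming algorithm $\mathcal{A}$ that, on any $n$-vertex input bipartite graph, uses memory $M(n) = O(n \cdot \polylog(n))$, makes a single pass over an arbitrarily ordered edge stream, and outputs an $\alpha(n)$-approximate assignment with probability at least $\tfrac23$, where $\alpha(n) = n^{\frac14-o(1)}$. From $\mathcal{A}$, I would build a one-way protocol $\pi$ for $\LB(n,\alpha(n))$ as follows: on input $(G_A, G_B)$ with $G_A, G_B$ sharing the vertex set of $G$, Alice feeds the edges of $G_A$ into $\mathcal{A}$ in some canonical order and sends the resulting memory contents to Bob; Bob then continues the simulation by feeding the edges of $G_B$ and reports $\mathcal{A}$'s output. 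Since the order in which edges arrive is adversarial for $\mathcal{A}$, correctness of $\mathcal{A}$ with probability $\tfrac23$ on $G = G_A \cup G_B$ immediately transfers to correctness of $\pi$ with probability $\tfrac23$, while public randomness supplies whatever randomness $\mathcal{A}$ needs. The communication cost satisfies $\|\pi\| \leq M(n) = O(n \cdot \polylog(n))$.

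Next, I would plug $\alpha(n) = n^{\frac14-o(1)}$ into \Cref{cor:cc-lb}: for any sufficiently small constant $\eps > 0$, the corollary forces
\[
\randC{\LB(n,\alpha(n))} \geq n^{1+\Omega(\eps^2)}
\]
as soon as $\alpha(n) \leq n^{\frac14-O(\eps)}$, which holds for all large enough $n$ because $\alpha(n) = n^{\frac14-o(1)}$ means the exponent approaches $\tfrac14$ from below. Hence $\|\pi\| \geq n^{1+\Omega(\eps^2)}$, contradicting the $O(n \cdot \polylog(n))$ bound from the simulation and finishing the proof.

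The only subtlety worth being careful about — and I would flag it explicitly — is matching the exponent. Since $\alpha(n) = n^{\frac14-o(1)}$ is a single function, I need to pick one $\eps > 0$ (depending only on the streaming model, not on $n$) such that $\alpha(n) \leq n^{\frac14-O(\eps)}$ for all sufficiently large $n$; this is immediate from the definition of $n^{\frac14-o(1)}$ because the $o(1)$ term eventually drops below any constant. With this choice fixed, the polynomial $n^{1+\Omega(\eps^2)}$ lower bound strictly dominates the $n \cdot \polylog(n)$ upper bound for all large $n$, delivering the contradiction. No further ingredients are needed: the heavy lifting — the \rsplus\ construction and the chain of equivalences — is already absorbed into \Cref{cor:cc-lb}.
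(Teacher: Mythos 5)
Your proposal is correct and follows essentially the same route as the paper: simulate the semi-streaming algorithm as a one-way protocol by having Alice send the memory contents to Bob, then invoke \Cref{cor:cc-lb} to contradict the $O(n\cdot\polylog(n))$ communication cost. The extra care you take in fixing a single constant $\eps$ so that $\alpha(n)\leq n^{\frac14-O(\eps)}$ for large $n$ is a detail the paper leaves implicit, but it matches the intended argument.
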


The rest of this section is dedicated primarily to the proof of~\Cref{thm:cc-lb}. 
We start by defining a simple family of graphs
that ``encode'' different strings inside \rsplus s. 
We then use these graphs to define our hard input distribution.
After that, we recall some basic information theory and use them to derive the proof of~\Cref{thm:cc-lb}.
We provide short and standard proofs of~\Cref{cor:cc-lb,cor:stream-lb}. Finally, we conclude with 
the complete proof of~\Cref{thm:equivalence} that shows the equivalence between load balancing sparsifiers and one-way communication complexity of load-balancing in~\Cref{sec:equivalence-proof}.

\subsection{Encoding Graphs and the Hard Input Distribution}

 Let $G_0 := (L_0, R_0, E_0)$ be a $(4\alpha)$-\rsplus with 
\[
	\card{L_0} = n \qquad \text{and} \qquad m_0 \gtrsim \frac{1}{\log{n}} \cdot \MC(n,4\alpha)
\]
edges and matchings $M^0_1,\ldots,M^0_k$ such that there exists an integer $r_0 \in [n]$ with 
\[
r_0 \leq \card{M^0_i} < \frac43 \cdot r_0 \quad \text{for all $i \in [k]$}.
\]
The existence of such a graph 
follows by grouping matchings of any $\alpha$-\rsplus\ with density $\MC(n,4\alpha)$ based on sizes of matchings relative to powers of $(4/3)$, and picking the group
with the largest number of edges. 

Our {encoding graphs} are defined as follows. 

\begin{definition}[Encoding Graphs]\label{def:encoding}
Fix a graph $G_0 := (L_0, R_0,E_0)$ as described above. 
Let $x \in \set{0,1}^{E_0}$ be any string whose entries are indexed by edges in $E_0$. Define the \textbf{encoding graph} $G_x := (L,R,E_x)$ of $x$ inside $G_0$ as follows: 
\begin{itemize}
	\item $L := L_0$ -- we use vertices $u \in L_0$ and $u \in L$ interchangeably. 
	\item $R := R_0 \times \set{0,1}$ -- vertices in $R$ are denoted by $v_0$ and $v_1$ for $v \in R_0$; 
	\item $E_x$: for every edge $e=(u,v) \in E_0$, there is exactly one of the edges $(u,v_{0})$ or $(u,v_1)$ depending on whether $x_e = 0$ or $x_e = 1$, respectively. 
\end{itemize}
\end{definition}

\begin{observation}\label{obs:M0-M1}
For any $x \in \set{0,1}^{E_0}$, the graph $G_x$ is a $(2\alpha)$-\rsplus\ with matchings $M_1,\ldots,M_k$, where each $M_i$ is obtained from $M^0_i$ by mapping the edge $(u,v) \in M^0_i$ to either $(u,v_0)$ or $(u,v_1)$, 
depending on whichever one exist in $G_x$. 
\end{observation}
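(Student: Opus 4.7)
The plan is straightforward: verify the three conditions required by \Cref{dfn:rsplus} for the proposed matchings $M_1, \ldots, M_k$ of $G_x$, namely that each $M_i$ is a matching in $G_x$, that together they partition $E_x$, and that they satisfy the contraction bound with parameter $2\alpha$.

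First I would check that each $M_i$ is a matching. By construction $M_i$ is obtained from $M^0_i$ by replacing each edge $(u,v)$ with either $(u,v_0)$ or $(u,v_1)$. Since $M^0_i$ is a matching in $G_0$ the left endpoints are already distinct, and distinct right endpoints $v, v'$ in $R_0$ yield distinct vertices in $R = R_0 \times \{0,1\}$ regardless of which copy is chosen, so $M_i$ is indeed a matching of $G_x$. The partition property is automatic: every edge of $E_x$ corresponds to exactly one edge of $E_0$ via the map in \Cref{def:encoding}, and the matchings $M^0_1, \ldots, M^0_k$ partition $E_0$, so their images partition $E_x$. Also $|M_i| = |M^0_i|$.

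The main step is the contraction bound. Note $L(M_i) = L(M^0_i)$ since vertices on the left are untouched by the encoding. The key observation is that every neighbor of $L(M_i)$ in $G_x \setminus M_i$ is a vertex of the form $v_0$ or $v_1$ for some $v \in R_0$ that is a neighbor of $L(M^0_i)$ in $G_0 \setminus M^0_i$: indeed, any edge $(u, v_b) \in E_x \setminus M_i$ with $u \in L(M_i)$ comes from an edge $(u,v) \in E_0 \setminus M^0_i$ (it cannot lie in $M^0_i$ or its image would be in $M_i$). Hence each $v \in N_{G_0 \setminus M^0_i}(L(M^0_i))$ contributes at most the two copies $v_0, v_1$ to $N_{G_x \setminus M_i}(L(M_i))$, giving
\[
\big|N_{G_x \setminus M_i}(L(M_i))\big| \;\leq\; 2 \cdot \big|N_{G_0 \setminus M^0_i}(L(M^0_i))\big| \;\leq\; 2 \cdot \frac{|M^0_i|}{4\alpha} \;=\; \frac{|M_i|}{2\alpha},
\]
where the second inequality uses that $G_0$ is a $(4\alpha)$-\rsplus. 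This is exactly the required bound for $G_x$ to be a $(2\alpha)$-\rsplus\ with the specified matching decomposition.

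There is no real obstacle here; the observation is essentially bookkeeping about the encoding. The only subtle point worth flagging is the factor-of-two loss in the contraction bound, which is the reason we started from a $(4\alpha)$-\rsplus\ rather than a $(2\alpha)$-one, and which is accounted for already in the choice of $G_0$ at the beginning of the section.
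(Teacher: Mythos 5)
Your proof is correct and follows the same route as the paper: the containment $N_{G_x \setminus M_i}(L(M_i)) \subseteq N_{G_0 \setminus M^0_i}(L(M^0_i)) \times \{0,1\}$ together with the factor-of-two loss absorbed by starting from a $(4\alpha)$-\rsplus\ is exactly the paper's argument. You additionally spell out the matching and partition properties, which the paper leaves implicit.
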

\begin{proof}
	Consider a choice of $G_x$ and one of its designated matchings $M_i$ for $i \in [k]$. By construction,  
	\[
		N_{G_x \setminus M_i}(L(M_i)) \subseteq N_{G_0 \setminus M^0_i}(L(M^0_i)) \times \set{0,1}. 
	\]
	The proof follows since $|N_{G_0 \setminus M^0_i}(L(M^0_i))| \leq \frac{1}{4\alpha} \card{M^0_i}$ as $G^0$ is a $(4\alpha)$-\rsplus. 
\end{proof}

Our hard distribution of inputs is defined as follows. We emphasize that the graph $G_0$ is fixed throughout and both players know this graph. 

\begin{ourbox}
\paragraph{Input distribution $\mu$.} ~ \\
\begin{itemize}
	\item \textbf{Alice:} Sample $x \in \set{0,1}^{E_0}$ uniformly at random and give the encoding graph $G_x$ to Alice. 
	\item \textbf{Bob:} Sample $i \in [k]$ uniformly at random and consider the matching $M_i$ of $G_x$. Give a perfect matching $M$ from $L \setminus L(M_i)$ to a new set of (server) vertices as 
	the input to Bob.  
\end{itemize}
\end{ourbox}

Given we have fixed the choice of the graph $G_0$, the following observation is immediate. 

\begin{observation}\label{obs:input-distribution}
	In the distribution $\mu$, the input to Alice is uniquely identified by $x \in \set{0,1}^{E_0}$ and the input to Bob is uniquely identified by $i \in [k]$. 
\end{observation}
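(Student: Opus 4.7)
The plan is to unwind the two relevant definitions and show that, once the fixed public graph $G_0$ is accounted for, nothing else in Alice's or Bob's input depends on any auxiliary random choice beyond $x$ and $i$ respectively. The key point is that $G_0$ (together with its designated matchings $M^0_1,\ldots,M^0_k$) is fixed in advance and known to both players, so it is not part of the ``input'' that either player receives under $\mu$; only the $x$-dependent and $i$-dependent parts count.

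For Alice, I would simply appeal to \Cref{def:encoding}. The vertex set $L = L_0$ and $R = R_0 \times \{0,1\}$ is determined by $G_0$ alone, and for every edge $e=(u,v) \in E_0$ the set $E_x$ contains exactly one of $(u,v_0),(u,v_1)$, chosen deterministically from the bit $x_e$. Hence the map $x \mapsto G_x$ is injective, and Alice's input $G_x$ is uniquely determined by $x$.

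For Bob, I would first observe that $L(M_i) = L(M^0_i)$: by \Cref{obs:M0-M1}, $M_i$ is obtained from $M^0_i$ by only relabeling right endpoints (from $v$ to $v_0$ or $v_1$ depending on $x$), so the set of left endpoints is unchanged. Consequently, $L \setminus L(M_i) = L_0 \setminus L(M^0_i)$, which depends only on $i$ since $G_0$ is fixed. The ``new server vertices'' introduced in $\mu$ can be taken to be canonically named (e.g., one fresh vertex per left endpoint in $L_0 \setminus L(M^0_i)$, named by that endpoint), so the perfect matching $M$ from $L \setminus L(M_i)$ into this fresh set is uniquely determined by $i$ alone. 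There is no genuine obstacle here: the statement is essentially a sanity check that we may parameterize inputs by the underlying random choices $x$ and $i$ rather than by raw encodings of graphs and matchings, which is precisely what the information-theoretic arguments in the subsequent subsections will need.
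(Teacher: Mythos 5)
Your proof is correct and matches the paper's treatment: the paper states this observation as immediate given that $G_0$ is fixed and known to both players, and your unwinding of \Cref{def:encoding} together with the key point that $L(M_i) = L(M^0_i)$ depends only on $i$ is exactly the reasoning the paper leaves implicit.
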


\subsection{Analysis of the Input Distribution}

 Let $\pi$ be any \emph{deterministic} protocol for $\LB(n,\alpha)$ that succeeds with probability at least $\frac23$ on the inputs sampled from the distribution $\mu$. 

We use $\pi(x)$ to denote the message of Alice to Bob in the protocol (which by~\Cref{obs:input-distribution} is only a function of $x \in \set{0,1}^{E_0}$, hence the notation $\pi(x)$). 
Similarly, we use $a(\pi(x),i)$ to denote the \emph{assignment} output by Bob, given the message $\pi(x)$ and the index $i \in [k]$ as input (again, using~\Cref{obs:input-distribution}). 

In the following, we use $(X,I,\Pi)$ to denote, respectively, the \emph{random variable} for the input $x$ of Alice, the input $i$ of Bob, and the message $\pi(x)$ of Alice. 
We further use $X_I$ to denote the subsequence of $X$ that corresponds to the edges in $M_I$, namely, the ``special'' matching corresponding to Bob's input. 

Given that protocol $\pi$ is deterministic, the randomness of all these variables comes solely from the distribution $\mu$ of the inputs, namely, 
the choice of $(X,I) \sim \mu$. We prove that  protocol $\prot$ recovers a large fraction of the values in $X_I$ whenever it outputs a correct answer. 

\begin{lemma}\label{lem:matching-be-large}
	On any input $(x,i) \sim \mu$ that $\prot$ outputs a correct answer, at least half the values in $x_i$ are deterministically fixed given only $\pi(x)$ and $i$, where $x_i$ denotes the subsequence of $x$ corresponding to edges of $M_i$. 
\end{lemma}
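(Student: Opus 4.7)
The plan is to leverage the $(2\alpha)$-\rsplus\ structure of $G_x$ (recall~\Cref{obs:M0-M1}) to argue that a correct assignment must route most clients in $L(M_i)$ through their $M_i$-edges, which in turn reveals the corresponding bits of $x_i$.

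First, I would observe that on any input $(x,i) \sim \mu$, the combined graph $G_x \cup M$ admits a perfect matching of all clients to distinct servers: clients in $L \setminus L(M_i)$ are matched via Bob's matching $M$, and clients in $L(M_i)$ are matched via $M_i$. Hence $\optload(G_x \cup M) = 1$, so any $\alpha$-approximate assignment $a = a(\pi(x),i)$ has $\maxload(a) \leq \alpha$. Since $M$ does not cover $L(M_i)$, every $u \in L(M_i)$ must be assigned by $a$ to some server in $R = R_0 \times \{0,1\}$ via an edge of $G_x$.

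Next, I would set up the key counting argument. Let $N_0 := N_{G_0 \setminus M^0_i}(L(M^0_i))$, which by the $(4\alpha)$-\rsplus\ property of $G_0$ satisfies $\card{N_0} \leq \card{M^0_i}/(4\alpha)$. Define the ``ambiguous'' server set $S := N_0 \times \set{0,1} \subseteq R$, and note $\card{S} \leq \card{M_i}/(2\alpha)$. Since $\maxload(a) \leq \alpha$, the total number of clients assigned to $S$ is at most $\alpha \cdot \card{S} \leq \card{M_i}/2$. In particular, at least $\card{M_i}/2$ clients $u \in L(M_i)$ are assigned by $a$ to some server $w_b$ with $w \notin N_0$.

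For each such $u$, I would argue that the assignment pins down one bit of $x_i$ deterministically. Indeed, $(u,w_b) \in E_x$ means $(u,w) \in E_0$. Because $w \notin N_0$, the edge $(u,w)$ is not in $G_0 \setminus M^0_i$, so $(u,w)$ must be the unique edge of $M^0_i$ incident to $u$ — denote it by $e=(u,v)$, so $w = v$. The construction of the encoding graph then forces $b = x_e$, so the value $x_e$ is a deterministic function of $a(\pi(x),i)$ (and hence of $\pi(x)$ and $i$). Ranging over the $\geq \card{M_i}/2$ such clients yields $\geq \card{M_i}/2$ edges $e \in M^0_i$ whose $x_e$ is determined, which is exactly half of the entries of $x_i$. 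The only subtle step is the counting in the previous paragraph, which uses the $(4\alpha)$-\rsplus\ guarantee to make the ambiguous-server set small enough that the $\alpha$-approximation load bound prevents it from absorbing too much of $L(M_i)$; everything else is bookkeeping from the definition of the encoding graph.
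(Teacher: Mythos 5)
Your proposal is correct and follows essentially the same argument as the paper: both establish $\maxload(a)\leq\alpha$ via the $L$-perfect matching, then use the contraction property to show the small "ambiguous" server set $N_{G_x\setminus M_i}(L(M_i))$ can absorb at most $\card{M_i}/2$ clients of $L(M_i)$, forcing the rest to use their $M_i$-edges and thereby reveal the corresponding bits of $x_i$. The only cosmetic difference is that you bound the clients landing in the ambiguous set directly by $\alpha\cdot\card{S}$, whereas the paper phrases the same pigeonhole as a lower bound on the load of some server in that set.
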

\begin{proof}
	
	On any input $(x,i)$ sampled from $\mu$, there is an $L$-perfect matching: match $L \setminus L(M_i)$ using the new edges outside $G_x$ given to Bob and use edges of $M_i$ for assigning the  
	vertices in $L(M_i)$ to $R(M_i)$. This means that in any $\alpha$-approximate solution, load of any vertex in $R$ can be at most $\alpha$. 
	
	Let $A = a(\pi(x),i)$ be the assignment output by Bob. Consider vertices $S$ in $L(M_i)$ that do not use edges of $M_i$ in the assignment $A$. Since by~\Cref{obs:M0-M1}, $G_x$ is a $(2\alpha)$-\rsplus, all these vertices are incident on at most $\frac{1}{2\alpha} \cdot \card{M_i}$ 
	vertices $T \subseteq R$. 
Thus, the load of some vertex in $T$ is at least
	\[
		\frac{\card{S}}{\card{T}} \geq \frac{2 \alpha \cdot \card{S}}{\card{M_i}}.
	\]
Combining this with the upper bound of $\alpha$ on the load implies that $\card{S} \leq \frac12 \card{M_i}$. 
	This means that whenever Bob's output is correct, it contains at least $\frac12 \card{M_i}$ edges from $M_i$. 
        But each such edge uniquely identifies the corresponding bit in $x$ by the construction of $G_x$. 
	Thus, whenever the protocol is correct on input $(x,i)$, at least half the values of $x_i$ should be determined, 
given the assignment $a(\pi(x),i)$ output by Bob. 
\end{proof}

Using this lemma and the independence of the input to Alice and Bob (by~\Cref{obs:input-distribution}), we can conclude the proof using a (very) basic information theory argument. 
We first provide a brief refresher of basic information theory tools that we use in this proof. 

\subsubsection{A Quick Refresher on Information Theory}

 For a random variable $Y$ on support $\Omega$ with distribution $p(y)$ for $y \in \Omega$, the \textbf{entropy} of $Y$ is: 
\[
	\en{Y} := \sum_{y \in \Omega} p(y) \cdot \log{\frac{1}{p(y)}}. 
\]
The \textbf{conditional entropy} of a random variable $Y$ conditioned on another random variable $Z$ is:
\[
	\en{Y \mid Z} := \Exp_{z \sim Z}\bracket{\en{Y \mid Z=z}},
\]
where $\en{Y \mid Z=z}$ is the entropy of the random variable distributed as $p(y \mid Z=z)$ for $y \in \Omega$. 

We need the following facts about (conditional) entropy (the proofs of the statements in this fact are all standard applications of Jensen's inequality and can be found, e.g., in~\cite{CoverT06}). 

\begin{fact}[cf.~\cite{CoverT06}]\label{fact:info}
	For any random variables $Y,Z$ with support $\Omega$: 
	\begin{enumerate}
		\item\label{en:uniform} $0 \leq \en{Y} \leq \log\card{\Omega}$; moreover, the left (respectively, right) inequality is tight if and only if $Y$ is deterministic (respectively, uniformly distributed over $\Omega$); 
		\item\label{en:reduce} $\en{Y \mid Z} \leq \en{Y}$: conditioning (on a random variable) can only reduce the entropy;
		\item\label{en:chain-rule} $\en{Y,Z} = \en{Y} + \en{Z \mid Y}$: chain rule of entropy (also true for conditional entropy).  
	\end{enumerate}
\end{fact}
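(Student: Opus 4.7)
The plan is to verify the three entropy facts one at a time, with Jensen's inequality doing the heavy lifting for the inequalities and direct manipulation of the definition handling the chain rule. All three are textbook, so the main obstacle is simply keeping the logarithm conventions and support conditions consistent.

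For part (\ref{en:uniform}), I would first handle the lower bound $\en{Y} \geq 0$ by noting that each summand $p(y) \log (1/p(y))$ is non-negative because $p(y) \in [0,1]$ forces $\log(1/p(y)) \geq 0$, with equality in the sum iff each $p(y) \in \{0,1\}$, i.e., iff $Y$ is deterministic. For the upper bound $\en{Y} \leq \log \card{\Omega}$, I would rewrite $\en{Y} = \Exp_{y \sim Y}[\log(1/p(y))]$ and apply Jensen's inequality to the concave function $\log(\cdot)$ to get $\Exp[\log(1/p(Y))] \leq \log \Exp[1/p(Y)] = \log \sum_{y \in \supp{Y}} 1 \leq \log \card{\Omega}$. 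Equality in Jensen holds iff $1/p(y)$ is constant on the support, and the final inequality is tight iff the support equals $\Omega$, so together equality holds iff $Y$ is uniform on $\Omega$.

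For part (\ref{en:reduce}), I would reduce the claim to the non-negativity of KL divergence. Define $\kl{p}{q} := \sum_{y} p(y) \log(p(y)/q(y))$ for distributions $p,q$ on $\Omega$; by Jensen applied to the convex function $-\log$, one gets $\kl{p}{q} = -\Exp_{y \sim p}[\log(q(y)/p(y))] \geq -\log \Exp_{y \sim p}[q(y)/p(y)] = -\log 1 = 0$. Then observe that
\[
\en{Y} - \en{Y \mid Z} = \sum_{y,z} p(y,z) \log \frac{p(y,z)}{p(y)\,p(z)} = \Exp_{z \sim Z}\bracket{\kl{p(\cdot\mid z)}{p(\cdot)}} \geq 0,
\]
giving $\en{Y\mid Z} \leq \en{Y}$.

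For part (\ref{en:chain-rule}), the chain rule follows by direct algebra: using $p(y,z) = p(y)\,p(z\mid y)$,
\[
\en{Y,Z} = -\sum_{y,z} p(y,z)\log p(y,z) = -\sum_{y,z} p(y,z)\log p(y) - \sum_{y,z} p(y,z) \log p(z\mid y),
\]
and the first sum equals $\en{Y}$ (after marginalizing $z$) while the second equals $\Exp_{y \sim Y}[\en{Z \mid Y=y}] = \en{Z\mid Y}$. The conditional version $\en{Y,Z\mid W} = \en{Y\mid W} + \en{Z\mid Y,W}$ follows by applying the same identity to each fixed $W=w$ and then averaging over $w$. I do not foresee any real obstacle; the only care needed is to restrict sums to the support of the relevant distributions so that $\log 0$ never appears, which I would handle by the standard convention $0 \log 0 = 0$.
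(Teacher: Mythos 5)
Your proof is correct and follows exactly the route the paper indicates: the paper does not prove this fact itself but cites \cite{CoverT06} and notes that all three parts are standard applications of Jensen's inequality, which is precisely how you argue the bounds (with the chain rule by direct expansion of the definition). Nothing further is needed.
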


\subsubsection{Proof of~\Cref{thm:cc-lb}} 

By~\Cref{lem:matching-be-large}, with probability at least $\frac23$, Bob, given only $\Pi$ and $I \in [k]$ can recover at least half the values in $X_I$. This means there is ``considerably less'' uncertainty about $X_I$ conditioned on $\Pi$ and $I$, than without this conditioning. The following claim formalizes this. 

\begin{claim}\label{clm:info-step1}
	$\en{X_I \mid \Pi, I} \leq \frac89 \cdot r_0 + 1$. 
\end{claim}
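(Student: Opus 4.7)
The plan is to introduce an indicator random variable $S$ for the event that $\prot$ succeeds on $(X,I)$, so that $\Pr[S=1] \geq \frac{2}{3}$. Using $S$ as an auxiliary variable, Bob can try to decode a large portion of $X_I$ whenever $S=1$ (via Lemma~\ref{lem:matching-be-large}), and this should compress the conditional entropy of $X_I$ substantially below its naive upper bound of $|M_I| < \frac{4}{3} r_0$.

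Concretely, I would start by applying the chain rule (\itfacts{chain-rule}) and the fact that conditioning reduces entropy (\itfacts{reduce}) to write
\begin{align*}
\en{X_I \mid \Pi, I}
\leq \en{X_I, S \mid \Pi, I}
= \en{S \mid \Pi, I} + \en{X_I \mid \Pi, I, S}
\leq 1 + \en{X_I \mid \Pi, I, S},
\end{align*}
where the last inequality uses \itfacts{uniform} applied to the binary variable $S$.

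Next I would unfold the conditional entropy on $S$:
\begin{align*}
\en{X_I \mid \Pi, I, S} = \Pr[S=1] \cdot \en{X_I \mid \Pi, I, S=1} + \Pr[S=0] \cdot \en{X_I \mid \Pi, I, S=0}.
\end{align*}
For the $S=0$ term, the trivial bound $\en{X_I \mid \Pi, I, S=0} \leq |M_I| < \frac{4}{3} r_0$ applies (via \itfacts{uniform}, since $X_I \in \set{0,1}^{|M_I|}$). For the $S=1$ term, Lemma~\ref{lem:matching-be-large} guarantees that, given $(\Pi, I)$ and conditioned on success, at least $|M_I|/2$ coordinates of $X_I$ are deterministically determined; so at most $|M_I|/2 < \frac{2}{3} r_0$ coordinates remain, and again by \itfacts{uniform} we have $\en{X_I \mid \Pi, I, S=1} \leq \frac{2}{3} r_0$.

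Finally, plugging in $\Pr[S=1] \geq \frac{2}{3}$ and $\Pr[S=0] \leq \frac{1}{3}$ (and noting the bound is monotone in $\Pr[S=1]$ since $\frac{2}{3} r_0 < \frac{4}{3} r_0$) yields
\begin{align*}
\en{X_I \mid \Pi, I} \leq 1 + \tfrac{2}{3} \cdot \tfrac{2}{3} r_0 + \tfrac{1}{3} \cdot \tfrac{4}{3} r_0 = 1 + \tfrac{8}{9} r_0,
\end{align*}
which is the desired bound. The only mild subtlety is the averaging over $I$: since the bound $|M_I| < \frac{4}{3} r_0$ holds for \emph{every} $i \in [k]$ by the choice of $G_0$, the pointwise entropy bounds survive the expectation over $I$ without loss. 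I don't foresee a technical obstacle here; the argument is a clean two-line information-theoretic reduction once $S$ is introduced.
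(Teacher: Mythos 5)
Your proof is correct and follows essentially the same route as the paper's: both introduce a binary indicator for the success event, pay at most one bit for it via the chain rule, split the conditional entropy on the indicator's value, and bound the two branches by $\frac23 r_0$ and $\frac43 r_0$ respectively using \Cref{lem:matching-be-large} and the trivial length bound. The only cosmetic difference is that you invoke monotonicity of joint entropy directly where the paper expands the chain rule and drops a nonnegative term, which is equivalent.
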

\begin{proof}
Let $Z \in \set{0,1}$ be the indicator random variable for the event of~\Cref{lem:matching-be-large}, 
namely, $Z = 1$ iff the output by Bob is correct and  at least half the indices in $X_I$ are fixed by $\Pi$ and $I$.  Then
\begin{align*}
	\en{X_I \mid \Pi, I} &= \en{X_I,Z \mid \Pi, I} - \en{Z \mid X_I, \Pi,I} \tag{by chain rule in~\Cref{fact:info}-\eqref{en:chain-rule}} \\
	&= \en{X_I \mid \Pi,I,Z} + \en{Z \mid \Pi,I} - \en{Z \mid X_I, \Pi,I} \tag{again, by chain rule} \\
	&\leq \en{X_I \mid \Pi,I,Z}  + 1 \tag{as conditional entropy of $Z$ is non-negative and is at most $1$ by~\Cref{fact:info}-\eqref{en:uniform}} \\
	&= \Pr\paren{Z=1} \cdot \en{X_I \mid \Pi,I,Z=1} + \Pr\paren{Z=0} \cdot \en{X_I \mid \Pi,I,Z=0} + 1 \tag{by the definition of conditional entropy} \\
	&\leq \frac{2}{3} \cdot \frac{1}{2} \cdot \paren{\frac{4}{3} \cdot r_0} + \frac13 \cdot \paren{\frac{4}{3} \cdot r_0} + 1 \\
	&= \frac89 \cdot r_0 + 1. 
\end{align*}
where we used the following in the second to last step: the probability of $Z=1$ is at least $\frac23$ and conditioned on $Z=1$, $\Pi$ and $I$ reveal at least half of the indices of $X_I$ by~\Cref{lem:matching-be-large}, and so the 
	entropy of $X_I$ is at most $\frac12 \cdot \frac43 \cdot r_0$ by~\Cref{fact:info}-\eqref{en:uniform} since its unfixed part is a binary string of at most this length; the other term corresponding to $Z=0$ is bounded by~\Cref{fact:info}-\eqref{en:uniform} by simply
	using the fact that $X_I$ is a binary string of length at most $\frac43 \cdot r_0$ to begin with. 
\end{proof}

On the other hand, given that Alice is unaware of the choice of $I$, the only way for her to reduce the uncertainty about $X_I$ substantially, is to reduce the \emph{overall} uncertainty about her entire input. This in turn requires Alice to 
communicate a lot. The following claim formalizes this. 

\begin{claim}\label{clm:info-step2}
	$\en{X_I \mid \Pi,I} \geq  r_0 - \frac1k \cdot \norm{\prot}$. 
\end{claim}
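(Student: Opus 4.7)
The plan is to exploit the product structure of Alice's input $X$ across the matchings and combine it with the elementary information-theoretic facts already listed in \Cref{fact:info}. Since the edges of $G_0$ are partitioned into matchings $M^0_1,\ldots,M^0_k$, the string $X\in\{0,1\}^{E_0}$ naturally splits into independent uniform blocks $X_1,\ldots,X_k$, where $X_i$ indexes the bits on $M^0_i$. The crucial point is that conditioning on $\Pi$ cannot shrink any single block's entropy by much unless the message is long, because the blocks are collectively almost as large as $X$ itself.

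Concretely, the first step will be to observe that $I$ is independent of the pair $(X,\Pi)$: by~\Cref{obs:input-distribution} Bob's input $I$ is sampled independently of Alice's $X$, and $\Pi=\pi(X)$ depends only on $X$. Hence
\[
\en{X_I \mid \Pi, I}
\;=\;\frac{1}{k}\sum_{i=1}^{k}\en{X_i \mid \Pi, I=i}
\;=\;\frac{1}{k}\sum_{i=1}^{k}\en{X_i \mid \Pi}.
\]
The second step is to use subadditivity of entropy across the partition: applying \Cref{fact:info}-\eqref{en:chain-rule} and \eqref{en:reduce} in the standard way gives
\[
\sum_{i=1}^{k}\en{X_i \mid \Pi}\;\geq\;\en{X_1,\ldots,X_k \mid \Pi}\;=\;\en{X \mid \Pi},
\]
since $(X_1,\ldots,X_k)$ is just a relabeling of $X$.

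The third step handles $\en{X \mid \Pi}$. Because $\Pi=\pi(X)$ is a deterministic function of $X$, we have $\en{\Pi \mid X}=0$, so the chain rule yields $\en{X \mid \Pi}=\en{X}-\en{\Pi}$. Now $X$ is uniform on $\{0,1\}^{m_0}$, so $\en{X}=m_0$ by the equality case of \Cref{fact:info}-\eqref{en:uniform}, and $\Pi$ takes at most $2^{\norm{\pi}}$ values so $\en{\Pi}\leq\norm{\pi}$. Finally, by the choice of $G_0$ every matching has $|M^0_i|\geq r_0$, hence $m_0=\sum_i|M^0_i|\geq k\cdot r_0$. Putting the three steps together,
\[
\en{X_I \mid \Pi, I}\;\geq\;\frac{1}{k}\bigl(m_0-\norm{\pi}\bigr)\;\geq\;r_0-\frac{1}{k}\cdot\norm{\pi},
\]
as desired.

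There is no real obstacle here; the only subtlety worth double-checking is that $X_1,\ldots,X_k$ really do form a (disjoint) partition of the coordinates of $X$ so that subadditivity applies and the chain-rule computation $\en{X\mid\Pi}=m_0-\en{\Pi}$ is valid. This is guaranteed because $\{M^0_i\}$ partition $E_0$ by \Cref{dfn:rsplus}. All the remaining manipulations are direct invocations of \Cref{fact:info}.
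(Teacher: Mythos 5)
Your proof is correct and follows essentially the same route as the paper: decompose $X_I$ into its blocks using the uniformity and independence of $I$, pass from $\sum_i \en{X_i\mid\Pi}$ to $\en{X\mid\Pi}$ by subadditivity (the paper writes this step out via chain rule plus conditioning-reduces-entropy, which is just the standard proof of subadditivity), and then bound $\en{X\mid\Pi}\geq\en{X}-\en{\Pi}\geq m_0-\norm{\pi}$. The only cosmetic difference is that you invoke the exact identity $\en{X\mid\Pi}=\en{X}-\en{\Pi}$ (valid since $\Pi$ is a deterministic function of $X$), while the paper uses the weaker inequality; both yield the same final bound.
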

\begin{proof}
By the definition of conditional entropy,
	\begin{align*}
		\en{X_I \mid \Pi, I} &= \sum_{i=1}^{k} \Pr\paren{I=i} \cdot \en{X_i \mid \Pi, I=i} \\
		&= \frac{1}{k} \cdot \sum_{i=1}^{k} \en{X_i \mid \Pi, I=i} \tag{as $I$ is distributed uniformly over $[k]$} \\
		&= \frac{1}{k} \cdot \sum_{i=1}^{k} \en{X_i \mid \Pi} \tag{as the joint distribution of $(X_i,\Pi=\Pi(X))$ is independent of the event $I=i$ by~\Cref{obs:input-distribution}} \\
		&\geq \frac{1}{k} \cdot \sum_{i=1}^{k} \en{X_i \mid \Pi, X_1,\ldots,X_{i-1}} \tag{as conditioning can only reduce the entropy by~\Cref{fact:info}-\eqref{en:reduce}} \\
		&= \frac{1}{k} \cdot \en{X \mid \Pi} \tag{by the chain rule of entropy in~\Cref{fact:info}-\eqref{en:chain-rule}} \\
		&\geq \frac{1}{k} \cdot \paren{\en{X} - \en{\Pi}} \tag{by applying chain rule and non-negativity of entropy} \\
		&\geq \frac{1}{k} \cdot \paren{m_0 - \norm{\prot}}, 
	\end{align*}
	where the last step holds because the distribution of $X$ is uniform over $2^{m_0}$ binary strings of length $m_0$ and the support of messages in $\prot$ are of size $2^{\norm{\prot}}$; hence, in both
	cases we can apply the inequality of~\Cref{fact:info}-\eqref{en:uniform} (which is tight for $\en{X}$, and provides an upper bound for $\en{\Pi}$). 
	Noting that $m_0 \geq k \cdot r_0$ concludes the proof. 
\end{proof}

We are ready to conclude the proof of~\Cref{thm:cc-lb}. 

\begin{proof}[Proof of~\Cref{thm:cc-lb}]
	
	The lower bound holds trivially whenever $\MC(n,4\alpha) = O(n\log{n})$ because even if the entire input is a random perfect matching given to Alice, she needs to communicate $\Omega(n\log{n})$ bits to send one edge
	per each vertex in $L$ to Bob which is needed for any finite approximation. Thus, in the following, we focus on the (only interesting) case when $\MC(n,4\alpha) = \omega(n\log{n})$, which implies that $r_0 = \omega(1)$. 

	By the easy direction of Yao's minimax principle, to prove the lower bound, we only need to focus on deterministic protocols that succeeds with probability at least $\frac23$ on inputs sampled from the distribution $\mu$ (this is simply an averaging argument over randomness of the protocol against the input distribution). For any such protocol $\prot$, by~\Cref{clm:info-step1} and~\Cref{clm:info-step2},  
	\[
		r_0 - \frac1k \cdot \norm{\prot} \leq \frac89 \cdot r_0 + 1, 
	\]
	which implies that 
	\[
		\norm{\prot} \geq \frac19 \cdot (k \cdot r_0) - k 
                   \gtrsim \frac{1}{\log{n}} \cdot \MC(n,4\alpha), 
	\]
	given that $r_0 = \omega(1)$. This concludes the proof. 
\end{proof}

\subsection{Proofs of~\Cref{cor:cc-lb,cor:stream-lb}}\label{sec:cor-lb}

We now provide short and standard proofs of~\Cref{cor:cc-lb,cor:stream-lb} using the results we established already. 

\begin{proofof}{\Cref{cor:cc-lb}}
	By~\Cref{thm:cc-lb}, (randomized) one-way communication complexity of $\alpha$-approximation of load-balancing can be lower bounded, up to a $\approx \log{n}$ term, by the density of $\Theta(\alpha)$-\rsplus s. 
	Setting $\alpha = n^{\frac14-O(\eps)}$ and using our construction of $(n^{\frac{1}{4}-O(\eps)})$-\rsplus s in~\Cref{thm:rsplus} with $n^{1+\Omega(\eps^2)}$ edges implies this corollary (note that the extra $\log{n}$ term 
	is subsumed by the hidden-constant of the $\Omega$-notation in the exponent). 
	
	The final part of the corollary holds by taking $\eps \rightarrow 0$ in the limit. 
\end{proofof}

\begin{proofof}{\Cref{cor:stream-lb}}
	We use the well-established fact that one-way communication complexity lower bounds imply space lower bounds for single-pass streaming algorithms. Given a streaming algorithm $A$ for load-balancing, we 
	obtain a one-way communication protocol as follows: Alice runs $A$ by treating her part of the input as the first part of the stream and then communicates the memory content of the algorithm to Bob, who continue running $A$ on his part of the input
	as the second part of the stream. This way, at the end, Bob obtains the output of $A$ on the entire input, with communication cost from Alice being at most equal to the worst-case memory size of the algorithm. 
	
	The lower bound for streaming algorithms now follows immediately from the above reduction and~\Cref{cor:cc-lb}. 
\end{proofof}

\subsection{Proof of~\Cref{thm:equivalence}: Sparsifiers = One-Way Communication}\label{sec:equivalence-proof} 

Finally, we provide the proof of our main equivalence result in~\Cref{thm:equivalence}, restated below. 

\begin{theorem*}[Restatement of~\Cref{thm:equivalence}]
Suppose there is a (randomized) communication protocol $\pi$ for $\LB(n,\alpha)$ with communication cost $\norm{\pi} \leq C$ and probability of success at least $2/3$. Then, 
\[
\spar(n,8\alpha) \lesssim C \cdot \log^2{(n)}.
\]
\end{theorem*}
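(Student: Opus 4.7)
The plan is to directly chain together the two main equivalences already established in the paper: \Cref{thm:cc-lb}, which bounds the communication complexity from below by the \rsplus\ density, and \Cref{thm:MC-spar}, which bounds the sparsifier size from above by the \rsplus\ density. The theorem then follows by a straightforward parameter matching.

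First, I would observe that since $\pi$ is a randomized protocol for $\LB(n,\alpha)$ with $\norm{\pi} = C$ and success probability at least $\tfrac{2}{3}$, by definition $\randC{\LB(n,\alpha)} \leq C$. Applying the lower bound from \Cref{thm:cc-lb} with approximation ratio $\alpha$, we get
\[
C \;\geq\; \randC{\LB(n,\alpha)} \;\gtrsim\; \frac{1}{\log n} \cdot \MC(n, 4\alpha),
\]
which rearranges to $\MC(n, 4\alpha) \lesssim C \cdot \log n$.

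Next, I would invoke \Cref{thm:MC-spar} with approximation ratio $8\alpha$ in the role of $\alpha$; this gives
\[
\spar(n, 8\alpha) \;\lesssim\; \MC\!\left(n, \tfrac{8\alpha}{2}\right) \cdot \ln n \;=\; \MC(n, 4\alpha) \cdot \ln n.
\]
Combining the two displays yields $\spar(n, 8\alpha) \lesssim C \cdot \log^2 n$, as claimed.

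I do not expect any obstacle here: the entire substance of the argument has already been carried out in \Cref{thm:cc-lb} (the information-theoretic lower bound against the \rsplus-based hard distribution) and in the hard direction of \Cref{thm:MC-spar} (the LP-duality/randomized-rounding construction of a sparsifier from a \rsplus). The only mild points to verify are that the constants line up, specifically that the ``$4\alpha$'' appearing in \Cref{thm:cc-lb} and the ``$\alpha/2$'' appearing in the upper bound of \Cref{thm:MC-spar} compose to the ``$8\alpha$'' in the statement, and that the two $\log n$ losses — one from each theorem — multiply to give the $\log^2 n$ factor. Both are immediate.
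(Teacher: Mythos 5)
Your proposal is correct and is essentially the paper's own argument: the paper phrases it as a proof by contradiction, but the substance is identical — chain \Cref{thm:cc-lb} (giving $\MC(n,4\alpha)\lesssim C\log n$) with the hard direction of \Cref{thm:MC-spar} instantiated at $8\alpha$ (giving $\spar(n,8\alpha)\lesssim \MC(n,4\alpha)\ln n$). The constants and logarithmic factors compose exactly as you describe.
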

\begin{proof}
	Suppose towards a contradiction that $\spar(n,8\alpha) \geq \eta \cdot C \cdot \log^2{n}$ for some sufficiently large constant $\eta$. Then, 
	\begin{itemize}
	\item By~\Cref{thm:MC-spar}, this implies that $\MC(G,4\alpha) \gtrsim \eta \cdot C \cdot \log{(n)}$;
	\item By~\Cref{thm:cc-lb}, this in turn implies that $\randC{\LB(n,\alpha)} \gtrsim \eta \cdot C$. 
	\item By taking $\eta$ to be a sufficiently large constant, this contradicts the fact that there is a randomized protocol $\pi$ for $\LB(n,\alpha)$ with $\norm{\pi} \leq C$. 
	\end{itemize}
	Thus, our contradicting assumption is false, and the theorem holds. 
\end{proof}

\section*{Acknowledgments}
We would like to thank Thatchaphol Saranurak for many fruitful discussions in the early stages of this project. 

Part of this work was conducted while the first named author was visiting the Simons
Institute for the Theory of Computing as part of the Sublinear Algorithms program.

\clearpage

\bibliographystyle{halpha-abbrv}
\bibliography{general}

\clearpage
\appendix

\part*{Appendix}

\section{Deferred Proofs}\label{sec:deferred-proofs}

\subsection{Proof of \Cref{thm:spar-implies-protocol}}

\begin{proof}
	The protocol is simple:
	Alice computes an $\alpha$-approximation load-balancing sparsifier $H_A$ of $G_A$ with at most $T$ edges and sends $H_A$ to Bob.
	Each edge requires $O(\log(n))$ bits to send, for a total of $O(T\log(n))$ bits.
	
	Define $G \defeq G_A \cup G_B$ and $G' \defeq H_A \cup G_B$. To prove correctness, we need to show that $\optload(G') \leq \alpha \cdot \optload(G)$. Let $\Astar$ be the optimal assignment in $G$. 
	We partition $L$ into sets $L_A$ and $L_B$: for every $x \in L$, add $x$ to $L_A$ if $(x,\Astar(x)) \in G_A$ and add $x$ to $L_B$ if $(x,\Astar(x)) \in G_B$; if the edge is in both $G_A$ and $G_B$, then assign $x$ to $L_B$. 
	
	Let $\Astar_A$ denote the assignment $\Astar$ restricted to $L_A$ and define $\Astar_B$ analogously.
	Note that $\Astar_A$ is an assignment in $G_A$, $\Astar_B$ is an assignment in $G_B$, and $\load(\Astar_A),\load(\Astar_B)$ are both at most $\load(\Astar)$. 
	Since $\Astar_B$ is contained in $G_B \subseteq G'$, we can also use in our final assignment for $G'$. But we must replace $\Astar_A$ with a new assignment that is contained in $H_A$. To this end, note that by the existence of $\Astar_A$ we have $\optload(G_A[L_A \cup R]) \leq \load(\Astar_A) \leq \load(\Astar)$, so by definition of a load-balancing sparsifier, there exists an assignment $\A'_A$ of $H_A[L_A \cup R]$ with $\load(\A'_A) \leq \alpha \cdot \load(\Astar_A) \leq \alpha \cdot  \load(\Astar)$.
	
	We now define an assignment $\A'$ of $G'$ as follows: for $x \in L_A$ we set $\A'(x) = \A'_A(x)$ and for $x \in L_B$ we set $\A'(x) = \Astar_B(x)$. It is easy to see that all edges of $A'$ are contained in $G'$, and that $\load(\A') \leq \load(\A'_A) + \load(\Astar_B) \leq (\alpha + 1)\load(\Astar)$, as desired. 
\end{proof}

\label{sec:app-proofs}

\subsection{Reducing the Number of Servers}
\label{sec:reducing-num-servers}

In order to apply \Cref{lem:spar-LP} inside \Cref{thm:MC-spar}, we rely on the following claim, which shows that one can assume w.l.o.g that $\card{R} \leq \card{L}^2$.

\begin{claim}
	\label{claim:truncate-servers}
	Given a bipartite graph $G = (L, R,E)$ with $\card{L} = n$, there exists a subgraph $G' = (L, R',E')$ of $G$ such that $\card{R'} \leq n^2$ with $\spar(G,\alpha) \leq \spar(G',\alpha)$ and $\MC(G',\alpha) \leq \MC(G,\alpha)$ for any $\alpha \geq 1$.
\end{claim}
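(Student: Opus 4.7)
The plan is to fix an arbitrary maximum matching $M$ of $G$ and set $R' := R(M)$ and $G' := G[L \cup R']$. Since $|R'| = |M| \leq |L| = n$, the cardinality requirement $|R'| \leq n^2$ holds with room to spare. The inequality $\MC(G',\alpha) \leq \MC(G,\alpha)$ is then immediate, since every $\alpha$-\rsplus\ contained in $G'$ is automatically contained in $G$.

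For $\spar(G,\alpha) \leq \spar(G',\alpha)$, the plan is to take a minimum $\alpha$-sparsifier $H'$ of $G'$ (with $\spar(G',\alpha)$ edges) and argue that the same $H'$, viewed as a spanning subgraph of $G$ by adjoining isolated vertices in $R \setminus R'$, is also an $\alpha$-sparsifier of $G$. Via the operational characterization in \Cref{lem:equiv-sparsifiers}, this reduces to checking $|N_{H'}(X)| \geq |X|/\alpha$ for every $X \subseteq L$ matchable in $G$; since $H'$ already satisfies this for every $X$ matchable in $G'$, the entire proof comes down to the core claim that every $X \subseteq L$ matchable in $G$ is also matchable in $G'$.

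The plan for this claim is a standard augmenting-path decomposition. Given any matching $M_X$ in $G$ covering $X$, I would form $M \oplus M_X$ and decompose it into alternating paths and cycles. Shared edges in $M \cap M_X$ and all cycles are trivial since their $R$-endpoints lie in $R(M) = R'$. For each path component $P$, one of the two natural sub-matchings — either the $M_X$-edges of $P$ or the $M$-edges of $P$ — covers $L(M_X) \cap P \supseteq X \cap P$ while using only $R$-endpoints in $R(M)$; a short parity case analysis on the type of the first edge of $P$ and the parity of its length determines which of the two to take. The union of these per-component sub-matchings (together with the kept $M \cap M_X$ edges) is the desired matching $M_X' \subseteq E(G')$ covering $X$.

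The main obstacle is the single case in which neither choice works: a path $P$ whose first \emph{and} last edges both belong to $M_X$. In such a path, both endpoints have only $M_X$-edges incident in $M \oplus M_X$ and are hence unmatched in $M$; moreover, the edges of $P$ alternate between non-$M$ and $M$-edges, starting and ending with non-$M$ edges. Thus $P$ is an $M$-augmenting path, and this is precisely where the maximality of $M$ enters: no such path can exist, so the problematic case is vacuous and the construction goes through.
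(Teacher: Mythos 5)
Your proof is correct, but it takes a genuinely different route from the paper's. The paper's construction is degree truncation: it caps the degree of every left vertex at $n$ (deleting arbitrary excess edges at high-degree vertices) and lets $R'$ be the right vertices that survive, so $|R'|\leq n^2$; matchability of any $X\subseteq L$ is then preserved by a one-line greedy argument (a non-truncated vertex keeps all its edges, and a truncated vertex still has $n$ neighbors, hence always a free one). Your construction instead restricts $R$ to $R(M)$ for a maximum matching $M$ and proves preservation of matchability via the alternating-path decomposition of $M\oplus M_X$, with maximality of $M$ killing the one bad case (a path whose end edges are both $M_X$-edges, which would be $M$-augmenting). Both arguments correctly reduce $\spar(G,\alpha)\leq\spar(G',\alpha)$ to this matchability-preservation statement through the operational characterization of \Cref{lem:equiv-sparsifiers}, and both get $\MC(G',\alpha)\leq\MC(G,\alpha)$ for free from $G'\subseteq G$. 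What each buys: the paper's proof is shorter and entirely elementary, while yours is quantitatively stronger, giving $|R'|\leq n$ rather than $n^2$ (which would, e.g., shave the constant in the union bound $n^k\cdot|R|^k$ inside \Cref{lem:spar-LP}, though this makes no difference to the final results). Your case analysis checks out in all branches, including the observation that a right endpoint of a component of $M\oplus M_X$ whose unique incident edge is an $M_X$-edge genuinely lies outside $R(M)$, which is what forces the switch to the $M$-edges of that component.
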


\begin{proof}
	Define a vertex $v \in L$ to be high-degree if $\deg_G(v) > n$.
	We define $E'$ as follows: start with $E' = E$, and then for every high-degree vertex $v \in L$, remove an aribtrary set of $\deg_G(V) - n$ edges incident to $v$. Every $v \in L$ now has $\deg_{E'}(v) \leq n$. Let $R'$ contain all vertices in $R$ with at least one incident edge in $E'$; it is easy to see that $\card{R'} \leq n^2$. We then define $G' := (L, R', E')$
	
	To show that $\spar(G,\alpha) \leq \spar(G',\alpha)$, we argue that any $\alpha$-sparsifier $H'$ of $G'$ is also an $\alpha$-sparsifier of $G$. 
	Using the criterion of load-balancing sparsifiers in Property (\ref{item:operational}) of \autoref{lem:equiv-sparsifiers}, it suffices to show that any set $X \subseteq L$ that is matchable in $G$ is also matchable in $G'$. Let $M$ be the matching from $X$ to $R$ in $G$; we argue that $X$ is also matchable in $G'$. 
	For every $u \in X$, if $u$ is not of high-degree in $G$, then $E_G(u) = E_{G'}(u)$, so we can use the same edge from $M$. 
	If $u$ is of high degree in $G$, then since $\deg_{G'}(u) = n$, there must be at least one free vertex in $R'$ that $u$ can be matched to.
	
	The inequality $\MC(G',\alpha) \leq \MC(G,\alpha)$ follows from the fact that $G'$ is a subgraph of $G$.
\end{proof}

\clearpage

\subsection{Proof of~\Cref{prop:set-family}}\label{app:prop-set-family}

\begin{proofof}{\autoref{prop:set-family}}
We use the probabilistic method. 
Suppose we pick $t$ random subsets $S\in {[2k]\choose k}$ independently and uniformly at random. 
For any two random subsets $S$ and $S'$, we would like to bound the probability that their intersection size is equal to $\ell$ for any $\frac{k}{2} \leq  \ell \leq k$.
If $S$ and $S'$ are chosen uniformly and independently, then 
    \begin{align*}
        \Pr[|S\cap S'| = \ell] = \frac{{k\choose \ell}{k\choose k-\ell}}{{2k \choose k}} = \frac{{k\choose k-\ell}^2}{{2k \choose k}}.
    \end{align*}
To see this, if we fix $S$, the numerator counts the number of subsets of size $k$ that intersects $S$ exactly $\ell$ times. 
This can be done by picking $\ell$ elements from $S$ and picking $k-\ell$ elements from $\overline{S}$, for both of which there are exactly ${k\choose \ell} = {k\choose k-\ell} $ ways. 
By the union bound, the probability that any two sets have intersection size greater than $k-s$ is
    \begin{align*}
        \Pr[|S\cap S'| \geq k-s] \leq \sum_{\ell=k-s}^k  \frac{{k\choose k-\ell}^2}{{2k \choose k}} = \sum_{\ell=0}^{s}  \frac{{k\choose \ell}^2}{{2k \choose k}} \leq s \cdot \frac{{k\choose s}^2}{{2k \choose k}}
    \end{align*}
    
Let $s=\delta k$ for some positive constant $\delta < \frac12$. 
Using Stirling's approximation $n! \asymp \sqrt{2\pi n}(n/e)^n$, 
    \begin{align*}
        s \cdot \frac{{k\choose s}^2}{{2k \choose k}} 
        &\asymp s\cdot \bigg(\frac{\sqrt{2\pi k}\cdot k^k}{2\pi \sqrt{s(k-s)}s^{s}(k-s)^{k-s}}\bigg)^2 \bigg/ \frac{\sqrt{4\pi k}(2k)^{2k}}{2\pi k\cdot  k^{2k}}\\
        &\asymp \frac{sk^{1.5}}{s(k-s)}\cdot \frac{k^{4k}}{(2k)^{2k}s^{2s}(k-s)^{2(k-s)}}\\
        &= \frac{ \sqrt{k}}{(1-\delta)}\cdot\frac{k^{4k}}{(2k)^{2k}(\delta k)^{2\delta k}(k(1-\delta))^{2k(1-\delta)}}\\
        &= \frac{\sqrt{k}}{(1-\delta)}\cdot\frac{1}{(4\cdot \delta^{2\delta}\cdot(1-\delta)^{2(1-\delta)})^k}\\
        &\leq  2\sqrt{k} \cdot c_\delta^{-2k}
    \end{align*}
    Finally, by using the union bound over all pairs of random subsets, we see that as long as $t \leq \frac{1}{2} k^{-\frac14} \cdot c_\delta^{k}$, then every pair of subsets has intersection size less than $(1 - \delta) k$ with positive probability. 
This implies that there exists a set family $\mathcal{F}$ as claimed in the statement.
\end{proofof}

\end{document}